\documentclass{LMCS}
\def\doi{7 (3:19) 2011}
\lmcsheading%
{\doi}
{1--37}
{}
{}
{Dec.~\phantom07, 2010}
{Sep.~28, 2011}
{}   

\usepackage{amsmath}
\usepackage{amssymb}
\usepackage{amsthm}
\usepackage{graphicx}
\usepackage{tikz}
\usepackage{url}
\usepackage{enumerate,hyperref}


\renewcommand{\paragraph}{\textit}

\def\tab{\hspace*{5mm}}

\newcommand{\sat}{SAT}
\newcommand{\smt}{SMT}
\newcommand{\dpll}{DPLL}



\newcommand{\fn}{\sf}

\renewcommand{\implies}{\ \longrightarrow\ }
\renewcommand{\iff}{\mathrm{\ iff\ }}
\newcommand{\falsifies}{\vDash\!\!\lnot\,}
\newcommand{\nfalsifies}{\nvDash\!\!\lnot\,}
\newcommand{\impliesLiteral}{\vDash}
\newcommand{\impliesClause}{\vDash}
\newcommand{\impliesFormula}{\vDash}
\newcommand{\impliesValuation}{\vDash}
\newcommand{\entailsLiteral}{\vDash}
\newcommand{\entailsClause}{\vDash}

\newcommand{\opp}[1]{\overline{#1}}

\newcommand{\restrict}[2]{#2\!\!|_{#1}}
\newcommand{\lexprod}[2]{#1 \,\langle\mathrm{*lex*}\rangle\, #2}
\newcommand{\lexprodp}[2]{#1 \,\langle\mathrm{*lex^p*}\rangle\, #2}
\newcommand{\lex}[1]{#1^{\mathrm{lex}}}
\newcommand{\mult}[1]{#1^{\mathrm{mult}}}
\newcommand{\mults}[1]{#1^{\mathrm{mult}_1}}
\newcommand{\multisetof}[1]{\langle #1\rangle}
\newcommand{\transclosure}[1]{#1^+}
\newcommand{\precedes}[1]{\prec_{#1}}
\newcommand{\succlit}{\succ_{\mathrm{lit}}}
\newcommand{\succTr}{\succ_{\mathrm{tr}}}
\newcommand{\succTrrestrict}[1][Vbl]{\restrict{#1}{\succTr}}
\newcommand{\succBool}{\succ_{\mathrm{bool}}}
\newcommand{\succF}[1]{\succ_{\mathrm{Form}}^{#1}}
\newcommand{\succFIncl}{\succ_{\mathrm{Form}\subset}}
\newcommand{\succFInclrestrict}[1]{\restrict{#1}\succFIncl}
\newcommand{\succC}[1]{\succ_{\mathrm{Cla}}^{#1}}

\newcommand{\elements}{}
\newcommand{\valform}[1]{\langle #1 \rangle}
\newcommand{\decision}[1]{#1^d}
\newcommand{\nondecision}[1]{#1^i}

\usepackage{stmaryrd}

\newcommand{\rightarrowd}{\rightarrow_d}
\newcommand{\rightarrowb}{\rightarrow_b}
\newcommand{\rightarrowl}{\rightarrow_l}
\newcommand{\rightarrowc}{\rightarrow_c}
\newcommand{\rightarrowr}{\rightarrow_r}
\newcommand{\rightarrowf}{\rightarrow_f}

\newcounter{tmpcounter}

\newcounter{definitioncounter}

\newtheorem{proposition}{Proposition}
\newcounter{theoremcounter}

\newtheoremstyle{example}{\topsep}{\topsep}%
     {}
     {}
     {\bfseries}
     {.}
     {1em}
     {\thmname{#1}\thmnumber{ #2}\thmnote{ #3}}
\theoremstyle{example}


\newcommand{\rulename}[1]{{\sf\small #1}}


\newcounter{inv:counter}
\newcommand{\defineInvariant}[1]{\refstepcounter{inv:counter}\label{#1}}
\newcommand{\inv}[2][]{\ensuremath{Inv_{#2#1}}}


\newcommand{\packitems}{%
\setlength{\itemsep}{0pt}
\setlength{\parsep}{0pt}
\setlength{\parskip}{0pt}}

\begin{document}
\title{Formalization of Abstract State Transition Systems for SAT}
\author[F.~ Mari\'c]{Filip Mari\'c}
\address{Faculty of Mathematics, University of Belgrade, Studentski Trg 16, 11000 Belgrade}
\email{filip@matf.bg.ac.rs, janicic@matf.bg.ac.rs}
\author[P.~Jani\v{c}i\'c]{Predrag Jani\v{c}i\'c}
\keywords{formal verification, \sat{} solving, abstract state transition systems, Isabelle/HOL}
\subjclass{F.3.1, F.4.1}

\begin{abstract}
We present a formalization of modern \sat{} solvers and their properties in a
form of \emph{abstract state transition systems}. \sat{} solving procedures are
described as transition relations over states that represent the values of
the solver's global variables. Several different \sat{} solvers are formalized,
including both the classical DPLL procedure and its state-of-the-art successors.
The formalization is made within the Isabelle/HOL system and the total
correctness (soundness, termination, completeness) is shown for each presented
system (with respect to a simple notion of satisfiability that can be manually
checked).  The systems are defined in a general way and cover procedures used in
a wide range of modern \sat{} solvers. Our formalization builds up on the
previous work on state transition systems for \sat{}, but it gives
machine-verifiable proofs, somewhat more general specifications, and weaker
assumptions that ensure the key correctness properties.  The presented proofs of
formal correctness of the transition systems can be used as a key building block
in proving correctness of \sat{} solvers by using other verification approaches.
\end{abstract}
\maketitle


\section{Introduction}
\label{sec:introduction}

The problem of checking propositional satisfiability (\sat) is one of the
central problems in computer science. It is the problem of deciding if there is
a valuation of variables under which a given propositional formula (in
conjunctive normal form) is true. \sat{} was the first problem that was proved
to be {\sc NP}-complete \cite{Cook71} and it still holds a central position in
the field of computational complexity. \sat{} solvers, procedures that solve
the \sat{} problem, are successfully used in many practical applications such as
electronic design automation, software and hardware verification, artificial
intelligence, and operations research.

Most state-of-the-art complete \sat{} solvers are essentially based on a branch
and backtrack procedure called Davis-Putnam-Logemann-Loveland or the \dpll{}
procedure \cite{dp60,dll62}. Modern \sat{} solvers usually also
employ~(i)~several conceptual, high-level algorithmic additions to the original
\dpll{} procedure, (ii) smart heuristic components, and (iii) better low-level
implementation techniques. Thanks to these, spectacular improvements in the
performance of \sat{} solvers have been achieved and nowadays \sat{} solvers can
decide satisfiability of CNF formulae with tens of thousands of variables and
millions of clauses.

The tremendous advance in the \sat{} solving technology has not been accompanied
with corresponding theoretical results about the solver
correctness. Descriptions of new procedures and techniques are usually given in
terms of implementations, while correctness arguments are either not given or
are given only in outlines. This gap between practical and theoretical progress
needs to be reduced and first steps in that direction have been made only
recently, leading to the ultimate goal of having modern \sat{} solvers that are
formally proved correct. That goal is vital since \sat{} solvers are used in
applications that are very sensitive (e.g., software and hardware verification)
and their misbehaviour could be both financially expensive and dangerous from
the aspect of security.  Ensuring trusted \sat{} solving can be achieved by two
approaches.

One approach for achieving a higher level of confidence in SAT solvers' results,
successfully used in recent years, is proof-checking
\cite{zhangmalik-validating,clausal-proofs,van-gelder-verifyingunsat,weber,sat-coq}. In
this approach, solvers are modified so that they output not only {\em sat} or
{\em unsat} answers, but also justification for their claims (models for
satisfiable instances and proof objects for unsatisfiable instances) that are
then checked by independent proof-checkers. Proof-checking is relatively easy to
implement, but it has some drawbacks. First, justification for every solved
\sat{} instance has to be verified separately. Also, generating unsatisfiability
proofs introduces some overhead to the solver's running time, proofs are
typically large and may consume gigabytes of storage space, and proof-checking
itself can be time consuming \cite{van-gelder-verifyingunsat}. Since
proof-checkers have to be trusted, they must be very simple programs so they can
be ``verified'' by code inspection.\footnote{Alternatively, proof-checkers could
  be formally verified by a proof assistant, and then their correctness would
  rely on the correctness of the proof assistant.}  On the other hand, in order
to be efficient, they must use specialized functionality of the underlying
operating system which reduces the level of their reliability (e.g., the proof
checker used in the \sat{} competitions uses Linux's mmap functionality
\cite{van-gelder-verifyingunsat}).

The other approach for having trusted solvers' results is to verify the \sat{}
solver itself, instead of checking each of its claims. This approach is very
demanding, since it requires formal analysis of the complete solver's behaviour.
In addition, whenever the implementation of the solver changes, the correctness
proofs must be adapted to reflect the changes. Still, in practice, the core
solving procedure is usually stable and stays fixed, while only heuristic
components frequently change. The most challenging task is usually proving the
correctness of the core solving procedures, while heuristic components only need
to satisfy relatively simple properties that are easily checked.  This approach
gives also the following benefits:

\begin{iteMize}{$\bullet$}
\packitems
\item Although the overheads of generating unsatisfiability proofs during
  solving are not unmanageable, in many applications they can be avoided if the
  solver itself is trusted.\footnote{In some applications, proofs of
    unsatisfiability are still necessary as they are used, for example, for
    extracting unsatisfiable cores and interpolants.}
\item Verification of modern \sat{} solvers could help in better theoretical
  understanding of how and why they work. A rigorous analysis and verification
  of modern \sat{} solvers may reveal some possible improvements in underlying
  algorithms and techniques which can influence and improve other solvers as
  well.
\item Verified \sat{} solvers can serve as trusted kernel checkers for verifying
  results of other untrusted verifiers such as BDDs, model checkers, and SMT
  solvers. Also, verification of some \sat{} solver modules (e.g., Boolean
  constraint propagation) can serve as a basis for creating both verified and
  efficient proof-checkers for \sat{}.
\end{iteMize}

\noindent In order to prove correctness of a \sat{} solver, it has to be formalized in
some meta-theory so its properties can be analyzed in a rigorous mathematical
manner. In order to achieve the desired highest level of trust, formalization
in a classical ``pen-and-paper'' fashion is not satisfactory and, instead, a
mechanized and machine-checkable formalization is preferred. The formal
specification of a \sat{} solver can be made in several ways (illustrated in
Figure \ref{fig:SATproject}, each with an appropriate verification paradigm and
each having its own advantages and disadvantages, described in the following
text).

\begin{figure}[t!]
\begin{center}
\input{SATproject.tkz}
\end{center}
\vspace*{-4mm}
\caption{Different approaches for \sat{} solver verification}
\label{fig:SATproject}
\end{figure}

\begin{desCription}
\item\noindent{\hskip-12 pt\bf Verification of abstract state transition
system:}\
  State transition
  systems are an abstract and purely mathematical way of specifying program
  behaviour. Using this approach, the \sat{} solver's behaviour is modelled by
  transitions between states that represent the values of the solver's global
  variables. Transitions can be made only by following precisely defined
  transition rules. Proving correctness of state transition systems can be
  performed by the standard mathematical apparatus. There are state transition
  systems describing the top-level architecture of the modern \dpll{}-based
  \sat{} solvers (and related \smt{} solvers) \cite{Krstic-Frocos07,
    NieOT-Jacm06} and their correctness has been informally shown.

  The main advantage of the abstract state transition systems is that they are
  mathematical objects, so it is relatively easy to make their formalization
  within higher-order logic and to formally reason about them. Also, their
  verification can be a key building block for other verification approaches.
  Disadvantages are that the transition systems do not specify many details
  present in modern solver implementations and that they are not directly
  executable.

\item\noindent{\hskip-12 pt\bf Verified implementation within a proof
  assistant:}\
  A program's behavi\-our
  can be specified within the higher-order logic of a proof assistant (regarded
  as a purely functional programming language). This approach is often called
  {\em shallow embedding into HOL}. Specifications may vary from very abstract
  ones to detailed ones covering most details present in the real \sat{}
  solver's code. The level of details can incrementally be increased (e.g., by
  using a datatype refinement). Having the specification inside the logic, its
  correctness can be proved again by using the standard mathematical apparatus
  (mainly induction and equational reasoning). Based on the specification,
  executable functional programs can be generated by means of code extraction
  --- the term language of the logic within the proof assistant is identified
  with the term language of the target language and the verified program
  correctness is transferred to the exported program, up to simple
  transformation rules.

  Advantages of using the shallow embedding are that, once the solver is defined
  within the proof assistant, it is possible to verify it directly inside the
  logic and a formal model of the operational or denotational semantics of the
  language is not required. Also, extracted executable code can be trusted with
  a very high level of confidence. On the other hand, the approach requires
  building a fresh implementation of a \sat{} solver within the logic. Also,
  since higher-order logic is a pure functional language, it is unadapted
  to modelling imperative data-structures and their destructive updates. Special
  techniques must be used to have mutable data-structures and, consequently, an
  efficient generated code \cite{imperativeHOL}.

\item\noindent{\hskip-12 pt\bf Verification of the real implementations:}\
  The most demanding approach for
  verifying a \sat{} solver is to directly verify the full real-world solver
  code. Since \sat{} solvers are usually implemented in imperative programming
  languages, verifying the correctness of implementation can be made by using
  the framework of Hoare logic \cite{hoare} --- a formal system for reasoning
  about programs written in imperative programming languages.  The program
  behaviour can then be described in terms of preconditions and postconditions
  for pieces of code. Proving the program correctness is made by formulating and
  proving verification conditions. For instance, Isabelle/HOL provides a formal
  verification environment for sequential imperative programs (\cite{schirmer}).

  The main benefit of using the Hoare style verification is that it enables
  reasoning about the imperative code, which is the way that most real-world
  \sat{} solvers are implemented. However, since real code is overwhelmingly
  complex, simpler approximations are often made and given in pseudo-programming
  languages. This can significantly simplify the implementation, but leaves a
  gap between the correctness proof and the real implementation.
\end{desCription}

\noindent In this paper we focus on the first verification approach as it is often
suitable to separate the verification of the abstract algorithms and that of
their specific implementations.\footnote{A recent example is the L4 verified OS
kernel, where a shallowly embedded Haskell specification of the kernel is
verified, and then the C code is shown to implement the Haskell specification,
yielding a natural separation of concepts and issues \cite{Klein10}.}  In
addition, state transition systems, as the most abstract specifications, cover
the widest range of existing \sat{} solver implementations. Moreover, the
reasoning used in verifying abstract state transition systems for \sat{} can
serve as a key building block in verification of more detailed descriptions of
\sat{} solvers using the other two approaches described above (as illustrated by
Figure \ref{fig:SATproject}). Indeed, within our \sat{} verification project
\cite{TPHOLS}, we have already applied these two approaches
\cite{JAR,TCS,Informatica}, and in both cases the correctness arguments were
mainly reduced to correctness of the corresponding abstract state transition
systems.  These transition systems and their correctness proofs are presented in
this paper for the first time, after they evolved to some extent through
application within the other two verification approaches.

The methodology that we use in this paper for the formalization of \sat{}
solvers via transition systems is \emph{incremental refinement}: the
formalization begins with a most basic specification, which is then refined by
introducing more advanced techniques, while preserving the correctness. This
incremental approach proves to be a very natural approach in formalizing complex
software systems. It simplifies understanding of the system and reduces the
overall verification effort. Each of the following sections describes a separate
abstract state transition system. Although, formally viewed, all these systems
are independent, each new system extends the previous one and there are tight
connections between them. Therefore, we do not expose each new system from
scratch, but only give additions to the previous one. We end up with a system
that rather precisely describes modern \sat{} solvers, including advanced
techniques such as backjumping, learning, conflict analysis, forgetting and
restarting. The systems presented are related to existing solvers, their
abstract descriptions and informal correctness proofs.

The paper is accompanied by a full formalization developed within the
Isabelle/HOL proof assistant.\footnote{The whole presented formalization is
  available from AFP \cite{afp-sat} and, the latest version, from
  \url{http://argo.matf.bg.ac.rs}.}  The full version of the paper\footnote{The
  full version of the paper is available from \url{http://argo.matf.bg.ac.rs}.}
contains an appendix with informal proofs of all lemmas used.  All definitions,
lemmas, theorems and proofs of top-level statements given in the paper
correspond to their Isabelle counterparts, and here are given in a form
accessible not only to Isabelle users, but to a wider audience.

The main challenge in each large formalization task is to define basic relevant
notions in appropriate terms, build a relevant theory and a suitable hierarchy
of lemmas that facilitates constructing top-level proofs. Although in this paper
we do not discuss all decisions made in the above directions, the final
presented material is supposed to give the main motivating ideas and,
implicitly, to illustrate a proof management technology that was used. The main
purpose of the paper is to give a clear picture of central ideas relevant for
verification of \sat{} transition systems, hopefully interesting both to \sat{}
developers and to those involved in formalization of mathematics.

The main contributions of this paper are the following.
\begin{iteMize}{$\bullet$}
\item \sat{} solving process is introduced by a hierarchical series of abstract
  transition systems, ending up with the state-of-the-art system.
\item Formalization and mechanical verification of properties of the abstract
  transition systems for \sat{} are performed (within this, invariants and
  well-founded relations relevant for termination are clearly given; conditions
  for soundness, completeness, and termination are clearly separated). Taking
  advantage of this formalization, different real-world \sat{} solvers can be
  verified, using different verification approaches.
\item First proofs (either informal or formal) of some properties of modern
  \sat{} solvers (e.g., termination condition for frequent restarting) are
  given, providing deeper understanding of the solving process.
\end{iteMize}

\noindent The rest of the paper is organized as follows: In Section \ref{sec:Background}
some background on \sat{} solving, abstract state transition systems, and
especially abstract state transition systems for \sat{} is given. In Section
\ref{sec:backgroundtheory} basic definitions and examples of propositional logic
and CNF formulae are given. In Section \ref{sec:DPLLSearch}, a system
corresponding to basic \dpll{} search is formalized.  In Section
\ref{sec:backjumping}, that system is modified and backtracking is replaced by
more advanced backjumping. In Section \ref{sec:learningForgetting}, the system
is extended by clause learning and forgetting. In Section
\ref{sec:conflictAnalysis} and Section \ref{sec:restart} a system with conflict
analysis and a system with restarting and forgetting are formalized. In Section
\ref{sec:discuss} we discuss related work and our contributions.  In Section
\ref{sec:conclusions}, final conclusions are drawn.

\section{Background}
\label{sec:Background}

In this section we give a brief, informal overview of the \sat{} solving
process, abstract state transition systems and abstract state transition systems
for \sat{}. The paper does not intend to be a tutorial on modern \dpll-based
\sat{} solving techniques --- the rest of the paper contains only some brief
explanations and assumes the relevant background knowledge (more details and
tutorials on modern \sat{} solving technology can be found in other sources
e.g., \cite{sathandbook,JAR}).

\subsection{\sat{} Solving}
\label{subsec:backround_solving}

\sat{} solvers are decision procedures for the satisfiability problem for
propositional formulae in conjunctive normal form (CNF). State-of-the-art \sat{}
solvers are mainly based on a branch-and-backtrack procedure called \dpll{}
(Davis-Putnam-Logemann-Loveland) \cite{dp60,dll62} and its modern
successors. The original \dpll{} procedure (shown in Figure \ref{fig:dpll})
combines backtrack search with some basic, but efficient inference rules.

\begin{figure}[ht]
\begin{flushleft}
\begin{footnotesize}
\texttt{%
function {\fn dpll} ($F$ : Formula) : (SAT, UNSAT)\\
begin\\
\tab if $F$ is empty then return SAT\\
\tab else if there is an empty clause in $F$ then return UNSAT\\
\tab else if there is a pure literal $l$ in $F$ then return {\fn dpll}($F[l\rightarrow \top]$)\\
\tab else if there is a unit clause $[l]$ in $F$ then return {\fn dpll}($F[l\rightarrow \top]$)\\
\tab else begin\\
\tab \tab select a literal $l$ occurring in $F$\\
\tab \tab if {\fn dpll}($F[l\rightarrow \top]$) = SAT then return SAT\\
\tab \tab else return {\fn dpll}($F[l\rightarrow \bot]$)\\
\tab end\\
end}
\end{footnotesize}
\end{flushleft}
\caption{The original DPLL procedure} \label{fig:dpll}
\end{figure}

\noindent The search component selects a branching literal $l$ occurring in the formula
$F$, and tries to satisfy the formula obtained by replacing $l$ with $\top$ and
simplifying afterwards. If the simplified formula is satisfiable, so is the
original formula $F$. Otherwise, the formula obtained from $F$ by replacing $l$
with $\bot$ and by simplifying afterwards is checked for satisfiability and it
is satisfiable if and only if the original formula $F$ is satisfiable.  This
process stops if the formula contains no clauses or if it contains an empty
clause. A very important aspect of the search process is the strategy for
selecting literals for branching --- while not important for the correctness of
the procedure, this strategy can have a crucial impact on efficiency.

The simple search procedure is enhanced with several simple inference
mechanisms. The \emph{unit clause} rule is based on the fact that if there is a
clause with a single literal present in $F$, its literal must be true in order
to satisfy the formula (so there is no need for branching on that literal). The
\emph{pure literal} rule is based on the fact that if a literal occurs in the
formula, but its opposite literal does not, if the formula is satisfiable, in
one of its models that literal is true. These two rules are not necessary for
completeness, although they have a significant impact on efficiency.

\paragraph{Passing valuations instead of modifying the formula.}
In the original \dpll{} procedure, the formula considered is passed as a
function argument, and modified throughout recursive calls. This is unacceptably
inefficient for huge propositional formulae and can be replaced by a procedure
that maintains a current (partial) valuation $M$ and, rather than modifying the
formula, keeps the formula constant and checks its value against the current
valuation (see Figure \ref{fig:moddpll}). The inference rules used in the
original procedure must be adapted to fit this variant of the algorithm.  The
unit clause rule then states that if there is a clause in $F$ such that all its
literals, except exactly one, are false in $M$, and that literal is undefined in
$M$, then this literal must be added to $M$ in order to satisfy this clause.
The pure literal rule turns out to be too expensive in this context, so modern
solvers typically do not use it.

\begin{figure}[ht]
\begin{flushleft}
\begin{footnotesize}
\texttt{%
function {\fn dpll} ($M$ : Valuation) : (SAT, UNSAT)\\
begin\\
\tab if $M \falsifies F$ then return UNSAT\\
\tab else if $M$ is total wrt.~the variables of $F$ then return SAT\\
\tab else if there is a unit clause (i.e., a clause\\
\tab \tab $l \vee l_1 \vee \ldots \vee l_k$ in $F$ s.t.~$l, \opp{l} \notin M$, $\opp{l_1}, \ldots, \opp{l_k} \in M$) then return {\fn dpll}($M \cup \{l\}$)\\
\tab else begin\\
\tab \tab select a literal $l$ s.t.~$l \in F$, $l, \opp{l} \notin M$ \\
\tab \tab if {\fn dpll}($M \cup \{l\}$) = SAT then return SAT\\
\tab \tab else return {\fn dpll}($M \cup \{\opp{l}\}$)\\
\tab end\\
end}
\end{footnotesize}
\end{flushleft}
\caption{DPLL procedure with valuation passing} \label{fig:moddpll}
\end{figure}

\paragraph{Non-recursive implementation.}
To gain efficiency, modern \sat{} solvers implement \dpll-like procedures in a
non-recursive fashion. Instead of passing arguments through recursive calls,
both the current formula $F$ and the current partial valuation $M$ are kept as
global objects. The valuation acts as a stack and is called \emph{assertion
  trail}. Since the trail represents a valuation, it must not contain repeated
nor opposite literals (i.e., it is always {\em distinct} and {\em consistent}).
Literals are added to the stack top ({\em asserting}) or removed from the stack
top ({\em backtracking}). The search begins with an empty trail. During the
solving process, the solver selects literals undefined in the current trail $M$
and asserts them, marking them as \emph{decision literals}. Decision literals
partition the trail into {\em levels}, and the level of a literal is the number
of decision literals that precede that literal in the trail. After each
decision, unit propagation is exhaustively applied and unit literals are
asserted to $M$, but as \emph{implied literals} (since they are not arbitrary
decisions). This process repeats until either (i) a clause in $F$ is found which
is false in the current trail $M$ (this clause is called a \emph{conflict
  clause}) or (ii) all the literals occurring in $F$ are defined in $M$ and no
conflict clause is found in $F$. In the case (i), a conflict reparation
(backtracking) procedure must be applied. In the basic variant of the conflict
reparation procedure, the last decision literal $l$ and all literals after it
are backtracked from $M$, and the opposite literal of $l$ is asserted, also as
an \emph{implied literal}. If there is no decision literal in $M$ when a
conflict is detected, then the formula $F$ is unsatisfiable. In the case (ii),
the formula is found to be satisfiable and $M$ is its model.

\paragraph{Modern \dpll{} enhancements.}
For almost half of a century, DPLL-based \sat{} procedures have undergone
various modifications and improvements. Accounts of the evolution of SAT solvers
can be found in recent literature \cite{sathandbook,satsolvers-hkr}.  Early
\sat{} solvers based on DPLL include Tableau (NTAB), POSIT, 2cl and CSAT, among
others. In the mid 1990's, a new generation of solvers such as GRASP
\cite{grasp}, SATO \cite{sato}, Chaff \cite{chaff}, and BerkMin \cite{berkmin}
appeared, and in these solvers a lot of attention was payed to optimisation of
various aspects of the DPLL algorithm. Some influential modern \sat{} solvers
include MiniSat \cite{minisat} and PicoSAT \cite{picosat}.

A significant improvement over the basic search algorithm is to replace the
simple conflict reparation based on backtracking by a more advanced one based on
\emph{conflict driven backjumping}, first proposed in the Constraint
Satisfaction Problem (CSP) domain \cite{zhang-csp}. Once a conflict is detected,
a \emph{conflict analysis procedure} finds sequence of decisions (often buried
deeper in the trail) that eventually led to the current conflict. Conflict
analysis can be described in terms of graphs and the backjump clauses are
constructed by traversing a graph called {\em implication graph}
\cite{grasp}. The process can also be described in terms of resolution that
starts from the conflict clause and continues with clauses that caused unit
propagation of literals in that clause \cite{zhangmalik-quest}.  There are
several strategies for conflict analysis, leading to different backjump clauses
\cite{sathandbook}.  Most conflict analysis strategies are based on the
following scheme:
\begin{enumerate}[(1)]
\packitems
\item Conflict analysis starts with a conflict clause (i.e., the clause from $F$
  detected to be false in $M$). The conflict analysis clause $C$ is set to the
  conflict clause.
\item Each literal from the current conflict analysis clause $C$ is false in the
  current trail $M$ and is either a decision literal or a result of a
  propagation. For each propagated literal $l$ it is possible to find a clause
  \emph{(reason clause)} that caused $l$ to be propagated. The propagated
  literals from $C$ are then replaced (it will be said {\em explained}) by
  remaining literals from their reason clauses. The process of conflict analysis
  then continues.
\end{enumerate}
The described procedure continues until some termination condition is met, and
the backjump clause is then constructed. Thanks to conflict driven backjumping,
a lot of unnecessary work can be saved compared to the simple backtrack
operation. Indeed, the simple backtracking would have to consider all
combinations of values for all decision literals between the backjump point and
the last decision, while they are all irrelevant for the particular conflict.

The result of conflict analysis is usually a clause that is a logical
consequence of $F$ and that explains a particular conflict that occurred. If
this clause was added to $F$, then this type of conflict would occur never again
during search (even in some other contexts, i.e., in some other parts of the
search space). This is why solvers usually perform \emph{clause learning} and
append (redundant) deduced clauses to $F$.  However, if the formula $F$ becomes
too large, some clauses have to be \emph{forgotten}.  Conflict driven
backjumping with clause learning were first incorporated into a SAT solver in
the mid 1990's by Silva and Sakallah in GRASP \cite{grasp} and by Bayardo and
Schrag in rel\verb|_|sat \cite{relsat}. \dpll-based \sat{} solvers employing
conflict driven clause learning are often called \emph{CDCL solvers}.

Another significant improvement is to empty the trail and \emph{restart} the
search from time to time, in a hope that it would restart in an easier part of
the search space. Randomized restarts were introduced by Gomes et
al.~\cite{gomes-randomization} and further developed by Baptista and
Marques-Silva \cite{baptista-randomization}.

One of the most demanding operations during solving is the detection of false
and unit clauses. Whenever a literal is asserted, the solver must check $F$ for
their presence. To aid this operation, smart data structures with corresponding
implementations are used.  One of the most advanced ones is the
\emph{two-watched literal scheme}, introduced by Moskewicz et al.~in their
solver zChaff \cite{chaff}.

\subsection{Abstract State Transition Systems}

An \emph{abstract state transition system} for an imperative program consists of
a set of \emph{states} $S$ describing possible values of the program's global
variables and a binary transition relation $\rightarrow\ \subseteq S \times
S$. The transition relation is usually the union of smaller transition relations
$\rightarrow_i$, called the \emph{transition rules}. If $s \rightarrow_i s'$
holds, we say that the rule $i$ has been applied to the state $s$ and the state
$s'$ has been obtained. Transition rules are denoted as:

\begin{center}
\begin{eqnarray*}
{\fn Rule name}: & &
\begin{array}{c}
\mathit{cond}_1\quad\ldots\quad \mathit{cond}_k \\ \hline 
\mathit{effect}
\end{array} 
\end{eqnarray*}
\end{center}

\noindent Above the line are the conditions $\mathit{cond}_1$, \ldots,
$\mathit{cond}_k$ that the state $s$ must meet in order for the rule to be
applicable and the $\mathit{effect}$ denotes the effect that must be applied to
the components of $s$ in order to obtain $s'$.

More formally, transition rules can be defined as relations over states:
$${\fn Rule name}\ s\ s' \iff \phi$$
\noindent
where $\phi$ denotes a formula that describes conditions on $s$ that have to be
met and the relationship between $s$ and $s'$.

Some states are distinguished as \emph{initial states}. An initial state usually
depends on the program input. A state is a \emph{final state} if no transition
rules can be applied. Some states (not necessarily final) are distinguished as
the \emph{outcome states} carrying certain resulting information.  If a program
terminates in a final outcome state, it emits a result determined by this
state. For a decision procedure (such as a \sat{} solver), there are only two
possible outcomes: {\em yes} ({\em sat}) or {\em no} ({\em unsat}).  A state
transition system is considered to be correct if it has the following
properties:
\begin{desCription}
\packitems
\item\noindent{\hskip-12 pt\bf Termination:}\ from each initial state $s_0$, the execution eventually
  reaches a final state (i.e., there are no infinite chains $s_0 \rightarrow s_1
  \rightarrow \ldots$).
\item\noindent{\hskip-12 pt\bf Soundness:}\ the program always gives correct answers, i.e., if the
  program, starting with an input $I$ from an initial state $s_0$, reaches a
  final outcome state with a result $O$, then $O$ is the desired result for the
  input $I$.
\item\noindent{\hskip-12 pt\bf Completeness:}\ the program always gives an answer if it terminates,
  i.e., all final states are outcome states.
\end{desCription}

\subsection{Abstract State Transition Systems for \sat{}}

Two transition rule systems that model DPLL-based \sat{} solvers and related SMT
solvers have been published recently. Both systems present a basis of the
formalization described in this paper. The system of Krsti\'c and Goel
\cite{Krstic-Frocos07} gives a more detailed description of some parts of the
solving process (particularly the conflict analysis phase) than the one given by
Nieuwenhuis, Oliveras and Tinelli \cite{NieOT-Jacm06}, so we present its rules
in Figure \ref{fig:krstictransitions}. In this system, along with the formula
$F$ and the trail $M$, the state of the solver is characterized by the conflict
analysis set $C$ that is either a set of literals (i.e., a clause) or the
distinguished symbol $\mathit{no\_cflct}$. Input to the system is an arbitrary
set of clauses $F_0$. The solving starts from a initial state in which $F=F_0$,
$M=[\,]$, and $C=\mathit{no\_cflct}$.

\begin{figure}[ht!]
\begin{flushleft}
\begin{small}
\begin{eqnarray*}
{\fn Decide}: & &  
\begin{array}{c}
l\in L\qquad l,\opp{l}\notin M\\ \hline 
M := M\ \decision{l}
\end{array} \\
{\fn UnitPropag}: & &
\begin{array}{c}
l\vee l_1\vee\ldots\vee l_k \in F \qquad
\opp{l}_1,\ldots,\opp{l}_k\in M \qquad
l, \opp{l}\notin M \\ \hline 
M := M\ \nondecision{l}
\end{array} \\
{\fn Conflict}:  & &
\begin{array}{c}
C = \mathit{no\_cflct}\qquad \opp{l}_1\vee\ldots\vee\opp{l}_k\in F\qquad
l_1,\ldots,l_k\in M \\ \hline 
C := \{l_1,\ldots,l_k\}
\end{array} \\
{\fn Explain}:  & &
\begin{array}{c}
l\in C\qquad l\vee\opp{l}_1\vee\ldots\vee \opp{l}_k\in F\qquad l_1,\ldots,l_k \prec l \\ \hline 
C := C \cup \{l_1,\ldots,l_k\}\setminus\{l\}
\end{array} \\
{\fn Learn}:  & &
\begin{array}{c}
C =\{l_1, \ldots, l_k\}\qquad \opp{l}_1\vee\ldots\vee\opp{l}_k\notin F \\ \hline 
F := F \cup \{\opp{l}_1\vee\ldots\vee \opp{l}_k\}
\end{array} \\
{\fn Backjump}:  & &
\begin{array}{c}
C =\{l, l_1, \ldots, l_k\}\qquad \opp{l}\vee\opp{l}_1\vee\ldots\vee\opp{l}_k\in F\qquad
\textrm{level } l > m \geq \textrm{level } l_i \\ \hline 
C:=\mathit{no\_cflct}\qquad M:=M^{[m]}\ \nondecision{\opp{l}}
\end{array} \\
{\fn Forget}: & &
\begin{array}{c}
C = \mathit{no\_cflct}\qquad c \in F\qquad F\setminus c\vDash c \\ \hline 
F := F \setminus c 
\end{array} \\
{\fn Restart}: & &
\begin{array}{c}
C = \mathit{no\_cflct} \\ \hline 
M := M^{[0]}
\end{array} \\
\end{eqnarray*}

\caption{Transition system for \sat{} solving by Krsti\'c and Goel ($l_i \prec
  l_j$ denotes that the literal $l_i$ precedes $l_j$ in $M$, $\decision{l}$ denotes a
  decision literal, $\nondecision{l}$ an implied literal, $\textrm{level}\ l$ denotes the
  decision level of a literal $l$ in $M$, and $M^{[m]}$ denotes the prefix of
  $M$ up to the level $m$).}
\label{fig:krstictransitions}
\end{small}
\end{flushleft}
\end{figure}

The \rulename{Decide} rule selects a literal from a set of decision literals $L$
and asserts it to the trail as a decision literal. The set $L$ is typically just
the set of all literals occurring in the input formulae. However, in some cases
a smaller set can be used (based on some specific knowledge about the encoding
of the input formula). Also, there are cases when this set is in fact larger
than the set of all variables occurring in the input formula.\footnote{For
  example, the standard DIMACS format for \sat{} requires specifying the number
  of variables and the clauses that make the formula, without guarantees that
  every variable eventually occurs in the formula.}

The \rulename{UnitPropag} rule asserts a unit literal $l$ to the trail $M$ as an
implied literal. This reduces the search space since only one valuation for $l$
is considered.

\noindent The \rulename{Conflict} rule is applied when a conflict clause is detected.  It
initializes the conflict analysis and the reparation procedure, by setting $C$
to the set of literals of the conflict clause. This set is further refined by
successive applications of the \rulename{Explain} rule, which essentially
performs a resolution between the clause $C$ and the clause that is the reason
of propagation of its literal $l$. During the conflict analysis procedure, the
clause $C$ can be added to $F$ by the \rulename{Learn} rule.  However, this is
usually done only once --- when there is exactly one literal in $C$ present at
the highest decision level of $M$. In that case, the \rulename{Backjump} rule
can be applied. That resolves the conflict by backtracking the trail to a level
(usually the lowest possible) such that $C$ becomes unit clause with a unit
literal $l$. In addition, unit propagation of $l$ is performed.

The \rulename{Forget} rule eliminates clauses. Namely, because of the learning
process, the number of clauses in the current formula increases. When it becomes
too large, detecting false and unit clauses becomes too demanding, so from time
to time, it is preferable to delete from $F$ some clauses that are
redundant. Typically, only learnt clauses are forgotten (as they are always
redundant).

\section{Underlying Theory}
\label{sec:backgroundtheory}

As a framework of our formalization, higher-order logic is used, in a similar
way as in the system Isabelle/HOL \cite{isabelle}. Formulae and logical
connectives of this logic ($\wedge$, $\vee$, $\neg$, $\longrightarrow$,
$\longleftrightarrow$) are written in the standard way. Equality is denoted by
$=$. Function applications are written in prefix form, as in ${\fn f}\ x_1\
\ldots\ x_n$. Existential quantification is denoted by $\exists\ x.\ ...$ and
universal quantification by $\forall\ x.\ ...$.

In this section we will introduce definitions necessary for notions of
satisfiability and notions used in \sat{} solving. Most of the definitions are
simple and technical so we give them in a very dense form.  They make the paper
self-contained and can be used just for reference.

The correctness of the whole formalization effort eventually relies on the
definition of satisfiable formulae, which is rather straightforward and easily
checked by human inspection.

\subsection{Lists, Multisets, and Relations}

We assume that the notions of ordered pairs, lists and (finite) sets are defined
within the theory. Relations and their extensions are used primarily in the
context of ordering relations and the proofs of termination. We will use
standard syntax and semantics of these types and their operations. However, to
aid our formalization, some additional operations are introduced.

\begin{defi}[Lists related]\hfill
\label{def:precedesList}
\begin{iteMize}{$\bullet$}
\packitems
\item \emph{The first position of an element $e$ in a list $l$}, denoted
  ${\fn firstPos}\ e\ l$, is the zero-based index of the first occurrence of $e$
  in $l$ if it occurs in $l$ or the length of $l$ otherwise.
\item \emph{The prefix to an element $e$ of a list $l$}, denoted by ${\fn
  prefixTo}\ e\ l$, is the list consisting of all elements of $l$ preceding the
  first occurrence of $e$ (including $e$).
\item \emph{The prefix before an element $e$ of a list $l$}, denoted by ${\fn
  prefixBefore}\ e\ l$ is the list of all elements of $l$ preceding the first
  occurrence of $e$ (not including $e$).
\item \emph{An element $e_1$ precedes $e_2$ in a list $l$}, denoted by
  $e_1\precedes{l} e_2$, if both occur in $l$ and the first position of $e_1$ in
  $l$ is less than the first position of $e_2$ in $l$.
\item A list $p$ is a \emph{prefix} of a list $l$ (denoted by $p \leq l$) if
  there exists a list $s$ such that $l = p\,@\,s$.
\end{iteMize}
\end{defi}

\begin{defi}[Multiset]
  A multiset over a type $X$ is a function $S$ mapping $X$ to natural numbers. A
  multiset is \emph{finite} if the set $\{x\ |\ S(x)>0\}$ is finite. \emph{The
    union} of multisets $S$ and $T$ is a function defined as $(S \cup
  T)(x)=S(x)+T(x)$.
\end{defi}

\begin{defi}[Relations related] \hfill
  \label{def:lexext}
\begin{iteMize}{$\bullet$}
\packitems
\item
 The composition of two relations $\rho_1$ and $\rho_2$ is denoted by $\rho_1
 \circ \rho_2$. The $n$-th degree of the relation $\rho$ is denoted by
 $\rho^n$. The transitive closure of $\rho$ is denoted by $\rho^+$, and the
 transitive and reflexive closure of $\rho$ by $\rho^*$.  
 
\item A relation $\succ$ is \emph{well-founded} iff:
\begin{small}
  $$\forall P.\ ((\forall x.\ (\forall y.\ x \succ y \implies P(y)) \implies
  P(x)) \implies (\forall x.\ P(x)))$$
\end{small}

\item
  If $\succ$ is a relation on $X$, then its \emph{lexicographic extension
    $\lex{\succ}$} is a relation on lists of $X$, defined by:
\begin{small}
  \begin{eqnarray*}
    s \lex{\succ} t &\iff& (\exists\ r.\ s = t\,@\,r\ \wedge\ r\neq [\,])\ \vee\
    \\
    & & (\exists\ r\,s'\,t'\,a\,b.\ s=r\,@\,a\,@\,s'\ \wedge\ t=r\,@b\,@\,t'\
    \wedge\ a \succ b)
  \end{eqnarray*}
\end{small}

\item
  If $\succ$ is a  relation on $X$, then its \emph{multiset extension $\mult{\succ}$} is
  a relation defined over multisets over $X$ (denoted by $\langle x_1, \ldots,
  x_n \rangle$). The relation $\mult{\succ}$ is a transitive closure of the relation
  $\mults{\succ}$, defined by:
  \begin{small}
  $$
  \begin{array}{rcll}
  S_1 \mults{\succ} S_2\ & \iff\ & \exists  S\ S_2'\ s_1. & S_1 = S\cup\langle s_1\rangle\ \wedge\ S_2 = S\cup S_2'\  \wedge\ \\
  & & & \forall\ s_2.\ s_2\in S_2'\implies s_1 \succ s_2
  \end{array}
  $$
  \end{small}

\item
  \label{def:lexProd}
  Let $\succ_x$ and $\succ_y$ be  relations over $X$ and $Y$. Their \emph{lexicographic
    product}, denoted by $\lexprod{\succ_x}{\succ_y}$, is a relation $\succ$ on $X \times Y$
  such that
\begin{small}
  $$(x_1, y_1) \succ (x_2, y_2) \iff x_1 \succ_x x_2\ \vee\ (x_1 = x_2\ \wedge\ y_1 \succ_y y_2)$$ 
\end{small}

\item Let $\succ_x$ be a relation on $X$, and for each $x \in X$ let $\succ_y^x$
  be a relation over $Y$ (i.e., let $\lambda\ x.\ \succ_y^x$ be a function
  mapping $X$ to relations on $Y$). Their \emph{parametrized lexicographic
    product},\footnote{Note that lexicographic product can be regarded as a
    special case of parametrized lexicographic product (where a same $\succ_y$
    is used for each $x\in X$).}  denoted by $\lexprodp{\succ_x}{\succ_y^x}$, is
  a relation $\succ$ on $X \times Y$ such that
\begin{small}
  $$(x_1, y_1) \succ (x_2, y_2) \iff x_1 \succ_x x_2\ \vee\ (x_1 = x_2\ \wedge\ y_1 \succ_y^{x_1} y_2).$$
\end{small}
\end{iteMize}
\end{defi}

\begin{prop}[Properties of well-founded relations] \hfill
  \label{prop:wfInvImage}
  \label{prop:wfmultextension}
  \label{prop:wfLeksikografskaKombinacijaUredjenja}
\begin{iteMize}{$\bullet$}
\packitems
\item A relation $\succ$ is well-founded iff
\begin{small}
  $$\forall\ Q.\ (\exists\ a\in Q)\implies (\exists\ a_{min}\in
  Q.\ (\forall\ a'.\ a_{min} \succ a' \implies a'\notin Q))$$
\end{small}
\item
  Let ${\fn f}$ be a function and $\succ$ a relation such that $x \succ y \implies {\fn
    f} x \succ' {\fn f} y$. If $\succ'$ is well-founded, then so is $\succ$.
\item
  If $\succ$ is well-founded, then so is $\mult{\succ}$.
\item
  Let $\succ_x$ be a well-founded relation on $X$ and for each $x \in X$ let be
  $\succ_y^x$ a well-founded relation. Then $\lexprodp{\succ_x}{\succ_y^x}$ is 
  well-founded.
\end{iteMize}
\end{prop}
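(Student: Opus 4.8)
The plan is to prove the first bullet — the minimal-element characterization — before anything else, since it recasts the induction-style definition of well-foundedness into a form that makes the remaining three claims nearly mechanical. For the forward direction I would assume $\succ$ satisfies the induction principle and argue by contradiction: supposing a nonempty $Q$ has no minimal element, I instantiate the scheme with $P$ taken to be ``$x \notin Q$'' and observe that if every element below $x$ (under $\succ$) lies outside $Q$ while $x \in Q$, then $x$ would itself be minimal in $Q$; since $Q$ has no minimal element this forces $x \notin Q$, so the step hypothesis holds for every $x$ and the induction scheme yields $Q = \emptyset$, a contradiction. For the converse I would assume every nonempty set has a minimal element and, to derive the induction principle, apply this to $Q := \{x \mid \neg P(x)\}$: a minimal element of $Q$ has every element below it satisfying $P$, so the step hypothesis forces it to satisfy $P$ as well, contradicting its membership in $Q$; hence $Q$ is empty and $\forall x.\,P(x)$ holds.

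With the minimal-element form in hand, the inverse-image property (second bullet) and the parametrized lexicographic product (fourth bullet) are direct. For the inverse image, given a nonempty $Q$ I would take its image $\{f(x)\mid x\in Q\}$, extract a $\succ'$-minimal element $f(a)$ with $a\in Q$, and check that $a$ is $\succ$-minimal in $Q$: any $a'\in Q$ with $a\succ a'$ would give $f(a)\succ' f(a')$ by the hypothesis $x\succ y\implies f(x)\succ' f(y)$, contradicting minimality of $f(a)$. For the lexicographic product I would first choose an $\succ_x$-minimal $x_0$ among the first components occurring in $Q$, then choose a $\succ_y^{x_0}$-minimal $y_0$ inside the fibre $\{y\mid (x_0,y)\in Q\}$; an element strictly below $(x_0,y_0)$ either strictly decreases the first coordinate (contradicting minimality of $x_0$) or leaves it equal and decreases the second (contradicting minimality of $y_0$), so $(x_0,y_0)$ is minimal in $Q$.

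The genuinely hard part is the multiset extension (third bullet), the Dershowitz--Manna theorem, and here I would follow an accessibility-based argument. Since accessibility is invariant under transitive closure, it suffices to work with the one-step relation $\mults{\succ}$ directly; call a multiset \emph{accessible} when every multiset strictly below it under $\mults{\succ}$ is accessible, so that well-foundedness of $\mult{\succ}$ is equivalent to every finite multiset being accessible, and note that $\succ$ well-founded already makes every element of $X$ accessible. The crux is the lemma: if $a$ is accessible under $\succ$ and $M$ is accessible under $\mults{\succ}$, then $M\cup\multisetof{a}$ is accessible. I would prove it by well-founded induction on $a$ nested inside well-founded induction on $M$: any single $\mults{\succ}$-step from $M\cup\multisetof{a}$ either takes place inside $M$, handled by the inner hypothesis, or removes the copy of $a$ and replaces it by a finite multiset of elements each smaller than $a$, handled by the outer hypothesis by adjoining these smaller elements to the accessible multiset $M$ one at a time. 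Starting from the (vacuously accessible) empty multiset and inducting on the size of a finite multiset, using this lemma together with the accessibility of each of its elements, gives that every multiset is accessible, hence that $\mult{\succ}$ is well-founded. I expect this nested-induction lemma — specifically the subcase where $a$ is replaced by an arbitrary finite multiset of smaller elements, which requires an auxiliary induction on the size of that multiset — to be the main obstacle, and the single place where the finiteness of multisets is genuinely used.
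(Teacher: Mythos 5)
Your proof is correct: the minimal-element characterization argued via the instantiations $P(x):=x\notin Q$ and $Q:=\{x\mid\neg P(x)\}$, the inverse-image argument, the fibre-wise choice of minimal elements for the parametrized lexicographic product, and the accessibility-based (Dershowitz--Manna) nested induction for the multiset extension are all sound, including your identification of where finiteness is needed. The paper itself states these properties as a proposition without proof in the main text (they are inherited from the Isabelle/HOL library and the accompanying formalization), and your arguments coincide with the standard proofs underlying exactly those library facts, so there is nothing to flag.
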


\subsection{Logic of CNF formulae}

\begin{defi}[Basic types]
\label{def:basictypes} \ \ \vspace{2mm}
\newline
{\rm
\begin{small}
\begin{tabular}{ll}
\hline
${\fn Variable}$  &  natural number\\
${\fn Literal}$   &  either a positive variable (${\fn Pos}$ $vbl$) or a negative variable (${\fn Neg}$ $vbl$)\\
${\fn Clause}$    &  a list of literals\\
${\fn Formula}$   &  a list of clauses\\
${\fn Valuation}$ &  a list of literals\\
${\fn Trail}$     &  a list of $({\fn Literal}, {\fn bool})$ pairs\\ 
\hline
\end{tabular}
\end{small}
}
\end{defi}

For the sake of readability, we will sometimes omit types and use the following
naming convention: literals (i.e., variables of the type ${\fn Literal}$) are
denoted by $l$ (e.g., $l, l', l_0, l_1, l_2, \ldots$), variables by $vbl$,
clauses by $c$, formulae by $F$, valuations by $v$, and trails by $M$.

Note that, in order to be closer to implementation (and to the standard solver
input format --- DIMACS), clauses and formulae are represented using lists
instead of sets (a more detailed discussion on this issue is given in Section
\ref{sec:discuss}). Although a trail is not a list of literals (but rather a
list of $({\fn Literal}, {\fn bool})$ pairs), for simplicity, we will often
identify it with its list of underlying literals, and we will treat trails as
valuations.  In addition, a trail can be implemented, not only as a list of
$({\fn Literal}, {\fn bool})$ pairs but in some other equivalent way. We abuse
the notation and overload some symbols. For example, the symbol $\in$ denotes
both set membership and list membership, and it is also used to denote that a
literal occurs in a formula. Symbol ${\fn vars}$ is also overloaded and denotes
the set of variables occurring in a clause, in a formula, or in a valuation.

\begin{defi}[Literals and clauses related] \hfill
\label{def:opp-of-lit} 
\begin{iteMize}{$\bullet$}
\packitems
\item {\em The opposite literal of a literal} $l$, denoted by $\opp{l}$, is defined by: 
$\opp{{\fn Pos}\ vbl}= {\fn Neg}\ vbl$, $\opp{{\fn Neg}\ vbl}={\fn Pos}\ vbl$.

\item A formula $F$ {\em contains a literal} $l$ (i.e., a literal $l$ {\em occurs
   in a formula} $F$), denoted by $l\in F$, iff $\exists c.\ c\in F \wedge l\in c$.

\item The {\em set of variables that occur in a clause $c$} is denoted by ${\fn
    vars}\ c$.
  The {\em set of variables that occur in a formula $F$} is denoted by ${\fn vars}\ F$.
  The {\em set of variables that occur in a
    valuation $v$} is denoted by ${\fn vars}\ v$.

\item The {\em resolvent} of clauses $c_1$ and $c_2$ over the literal $l$, denoted\\
  ${\fn resolvent}\ c_1\ c_2\ l$ is the clause $(c_1 \setminus l) @ (c_2
  \setminus \opp{l})$.

\item A clause $c$ is a {\em tautological clause}, denoted by ${\fn
    clauseTautology}\ c$, if it contains both a literal and its opposite (i.e.,
  $\exists\ l.\ l \in c\,\wedge\,\opp{l}\in c$).

\item The {\em conversion of a valuation $v$ to a formula} is the list
  $\valform{v}$ that contains all single literal clauses made of literals from
  $v$.
\end{iteMize}
\end{defi}

\begin{defi}[Semantics] \hfill
\begin{iteMize}{$\bullet$}
\packitems
\item
  A literal $l$ is {\em true in a valuation} $v$, denoted by $v \vDash l$, iff $l
  \in v$. 
  A clause $c$ is {\em true in a valuation} $v$, denoted by $v \vDash c$, iff
  $\exists l.\ l \in c\wedge v \vDash l$. 
  A formula $F$ is {\em true in a valuation} $v$, denoted by $v \vDash F$, iff
  $\forall c.\ c \in F \Rightarrow v \vDash c$.  

\item
  A literal $l$ is {\em false in a valuation} $v$, denoted by $v \falsifies l$,
  iff $\opp{l} \in v$. A clause $c$ is {\em false in a valuation} 
  $v$, denoted by $v \falsifies c$, iff $\forall l.\ l \in c \Rightarrow v \falsifies l$.
  A formula $F$ is {\em false 
  in a valuation} $v$, denoted by $v \falsifies F$, 
  iff $\exists c.\ c \in F \wedge v \falsifies c$.\footnote{Note that the symbol 
  $\falsifies$ is atomic, i.e., $v \falsifies F$ does not correspond to $v \models (\neg F)$, 
  although it would be the case if all propositional formulae (instead of CNF only) 
  were considered.}
  
\item  
  $v \nvDash l$ ( $v \nvDash c$ /   $v \nvDash F$) denotes that $l$ ($c$ / $F$) 
  is not true in $v$ (then we say that $l$ ($c$~/~$F$) is {\em unsatisfied} in $v$).
  $v \nfalsifies l$ ($v \nfalsifies c$ / $v \nfalsifies F$) denotes that $l$ ($c$ / $F$) 
  is not false in $v$ (then we say that $l$ ($c$ / $F$) is {\em unfalsified} in $v$).
\end{iteMize}
\end{defi}

\begin{defi}[Valuations and models] \hfill
\begin{iteMize}{$\bullet$}
\packitems
\item
  A valuation $v$ is {\em inconsistent}, denoted by ${\fn inconsistent}\ v$, iff it
  contains both a literal and its opposite i.e., iff $\exists l.\ v \vDash l
  \wedge v \vDash \opp{l}$. A valuation is {\em consistent}, denoted by $({\fn
    consistent}\ v)$, iff it is not inconsistent.
\item
  A valuation $v$ is \emph{total} with respect to a variable set $Vbl$, denoted
  by ${\fn total}\ v\ Vbl$, iff ${\fn vars}\ v \supseteq Vbl$.
\item
  A {\em model} of a formula $F$ is a consistent valuation under which $F$ is
  true.  A formula $F$ is {\em satisfiable}, denoted by ${\fn sat}\ F$, iff it
  has a model, i.e., $\exists v.\ {\fn consistent}\ v \wedge v \vDash F$.
\item
  A clause $c$ is {\em unit} in a valuation $v$ with a {\em unit literal} $l$,
  denoted by ${\fn isUnit}\ c\ l\ v$ iff $l\in c$, $v \nvDash l$, $v \nfalsifies l$
  and $v\falsifies (c\setminus l)$ (i.e., $\forall l'.\ l' \in c \wedge l'\neq l
  \Rightarrow v\falsifies l'$).
\item
  A clause $c$ is a {\em reason for propagation} of literal $l$ in valuation
  $v$, denoted by ${\fn isReason}\ c\ l\ v$ iff $l\in c$, $v\vDash l$, $v\falsifies
  (c\setminus l)$, and for each literal $l'\in (c\setminus l)$, the literal
  $\opp{l'}$ precedes $l$ in $v$.
\end{iteMize}
\end{defi}

\newpage
\begin{defi}[Entailment and logical equivalence] \hfill
\begin{iteMize}{$\bullet$} 
\packitems
\item A {\em formula $F$ entails a clause $c$}, denoted by $F\impliesClause c$,
  iff $c$ is true in every model of $F$. A {\em formula $F$ entails a literal
    $l$}, denoted by $F\impliesLiteral l$, iff $l$ is true in every model of
  $F$. A {\em formula $F$ entails valuation $v$}, denoted by $F\impliesValuation
  v$, iff it entails all its literals i.e., $\forall l.\ l\in v\Rightarrow
  F\impliesLiteral l$. A {\em formula $F_1$ entails a formula $F_2$}, denoted by
  $F_1\impliesFormula F_2$, if every model of $F_1$ is a model of $F_2$.
\item
  Formulae $F_1$ and $F_2$ are {\em logically equivalent}, denoted by $F_1\equiv
  F_2$, iff any model of $F_1$ is a model of $F_2$ and vice versa, i.e., iff
  $F_1\impliesFormula F_2$ and $F_2\impliesFormula F_1$.
\end{iteMize}
\end{defi}

\begin{defi}[Trails related] \hfill
\begin{iteMize}{$\bullet$}
\packitems
\item For a trail element $a$, ${\fn element}\ a$ denotes the first (Literal)
  component and ${\fn isDecision}\ a$ denotes the second (Boolean)
  component. For a trail $M$, ${\fn elements}\ M$ 
  denotes the list of all its elements and ${\fn decisions}\ M$ denotes the list
  of all its marked elements (i.e., of all its decision literals).

\item \emph{The last decision literal}, denoted by ${\fn lastDecision}\ M$, is the
  last marked element of the list $M$, i.e., ${\fn lastDecision}\ M = {\fn
    last}\ ({\fn decisions}\ M).$

\item ${\fn decisionsTo}\ M\ l$ is the list of all marked elements from a trail $M$
  that precede the first occurrence of the element $l$, including $l$ if it is
  marked, i.e., ${\fn decisionsTo}\ l\ M = {\fn decisions}\ ({\fn prefixTo}\ l\ M)$.

\item The {\em current level} for a trail $M$, denoted by ${\fn currentLevel}\ M$, is
  the number of marked literals in $M$, i.e., ${\fn currentLevel}\ M = {\fn
    length}\ ({\fn decisions}\ M)$.

\item
  The {\em decision level} of a literal $l$ in a trail $M$, denoted by ${\fn
    level}\ l\ M$, is the number of marked literals in the trail that precede
  the first occurrence of $l$, including $l$ if it is marked, i.e., ${\fn
    level}\ l\ M = {\fn length}\ ({\fn decisionsTo}\ M\ l)$.

\item
  ${\fn prefixToLevel}\ M\ level$ is the prefix of a trail $M$ containing all
  elements of $M$ with levels less or equal to $level$.

\item
  The \emph{prefix before last decision}, denoted by ${\fn
    prefixBeforeLastDecision}\ M$, is a prefix of the trail $M$ before its last
  marked element (not including it),\footnote{Note that some of these functions
  are used only for some trails. For example, ${\fn prefixBeforeLastDecision}\ M$ 
  makes sense only for trails that contain at least one decision literal.
  Nevertheless, these functions are still defined as total functions --- for example,
  ${\fn prefixBeforeLastDecision}\ M$ equals $M$ if there are no decision literals.}

\item
  The {\em last asserted literal of a clause} $c$, denoted by ${\fn
    lastAssertedLiteral}\ c\ {\elements M}$, is the literal from $c$ that is in
  $\elements M$, such that no other literal from $c$ comes after it in
  $\elements M$.

\item
  The \emph{maximal level of a literal in the clause} $c$ with respect to a
  trail $M$, denoted by ${\fn maxLevel}\ c\ M$, is the maximum of all levels of
  literals from $c$ asserted in $M$.
\end{iteMize}
\end{defi}

\begin{exa}
\label{ex:1}A trail $M$ could be $[\nondecision{+1},
\decision{-2}, \nondecision{+6}, \decision{+5}, \nondecision{-3},
\nondecision{+4}, \decision{-7}]$. The symbol $+$ is written instead of the
constructor ${\fn Pos}$, the symbol $-$ instead of ${\fn Neg}$.  ${\fn
  decisions}\ M $ $=$ $ [\decision{-2}, \decision{+5}, \decision{-7}]$, ${\fn
  lastDecision}\ M$ $=$ $-7,$ ${\fn decisionsTo}\ M\ \mbox{+4}$ $=$
$[\decision{-2}, \decision{+5}]$, and ${\fn decisionsTo}\ M\ \mbox{-7}$ $=$
$[\decision{-2}, \decision{+5}, \decision{-7}]$.  ${\fn level}\ \mbox{+1}\ M =
0$, ${\fn level}\ \mbox{+4}\ M = 2$, ${\fn level}\ \mbox{-7}\ M = 3$, ${\fn
  currentLevel}\ M = 3$, ${\fn prefixToLevel}\ M\ 1 = [\nondecision{+1},
  \decision{+2}, \nondecision{+6}]$.  If $c$ is $[+4, +6,-3]$, then ${\fn
  lastAssertedLiteral}\ c\ {\elements M} = +4,$ and ${\fn maxLevel}\ c\ M$ $=$
$2$.
\end{exa}

\section{DPLL Search}
\label{sec:DPLLSearch}

In this section we consider a basic transition system that contains only
transition rules corresponding to steps used in the original \dpll{} procedure:
unit propagation, backtracking, and making decisions for branching (described
informally in Section \ref{subsec:backround_solving}; the pure literal step is
usually not used within modern \sat{} solvers, so it will be omitted). These
rules will be defined in the form of relations over states, in terms of the
logic described in Section \ref{sec:backgroundtheory}. It will be proved that
the system containing these rules is terminating, sound and complete. The rules
within the system are not ordered and the system is sound, terminating, and
complete regardless of any specific ordering. However, it will be obvious that
better performance is obtained if making decisions is maximally postponed, in
the hope that it will not be necessary.

\subsection{States and Rules}
\label{subsec:DPLLformalization}

The state of the solver performing the basic DPLL search consists of the formula
$F$ being tested for satisfiability (that remains unchanged) and the trail $M$
(that may change during the solver's operation). The only parameter to the
solver is the set of variables $\mathit{DecVars}$ used for branching. By
$\mathit{Vars}$ we will denote the set of all variables encountered during
solving --- these are the variables from the initial formula $F_0$ and the
decision variables $\mathit{DecVars}$, i.e., $\mathit{Vars} = {\fn vars}\
F_0\,\cup\,\mathit{DecVars}$.

\begin{defi}[State]
\label{def:state}
A state of the system is a pair $(M, F)$, where $M$ is a trail and $F$ is a
formula. A state $([\,], F_0)$ is an {\em initial state} for the input formula
$F_0$.
\end{defi}

Transition rules are introduced by the following definition, in the form of 
relations over states.

\begin{small}
\begin{defi}[Transition rules] \ \\
\label{def:dpll-system}
$$
\hspace{-1.1cm}\begin{array}{rcll}
{\fn unitPropagate}\ (M_1, F_1)\ (M_2, F_2) & \iff &
\exists c\ l. & c\in F_1\ \wedge\ {\fn isUnit}\ c\ l\ M_1\ \wedge\ \\
& & & M_2 = M_1\,@\,\nondecision{l} \ \wedge\ F_2 = F_1
\end{array}
$$

$$
\hspace{-2.4cm}{\fn backtrack}\ (M_1, F_1)\ (M_2, F_2) \iff M_1\falsifies
F_1\ \wedge\ {\fn decisions}\ M_1 \neq [\,]\ \wedge
$$
$$\qquad M_2 = {\fn prefixBeforeLastDecision}\ M_1\ @\ \nondecision{\opp{{\fn
    lastDecision}\ M_1}}\ \wedge\ F_2 = F_1
$$

$$
\hspace{-0.5cm}\begin{array}{rcll} {\fn decide}\ (M_1, F_1)\ (M_2, F_2) &\iff&
  \exists l. &{\fn var}\ l\in \mathit{DecVars}\ \wedge\ l\notin
  M_1\ \wedge\ \opp{l}\notin M_1\ \wedge\\ & & & M_2 = M_1\,@\,\decision{l}
  \ \wedge\ F_2 = F_1
\end{array}
$$
\end{defi}
\end{small}

\noindent As can be seen from the above definition (and in accordance with the description
given in Section \ref{subsec:backround_solving}), the rule
\rulename{unitPropagate} uses a {\em unit clause} --- a clause with only one
literal $l$ undefined in $M_1$ and with all other literals false in $M_1$. Such
a clause can be true only if $l$ is true, so this rule extends $M_1$ by $l$ (as
an implied literal). The rule \rulename{backtrack} is applied when $F_1$ is
false in $M_1$. Then it is said that a {\em conflict} occurred, and clauses from
$F_1$ that are false in $M_1$ are called {\em conflict clauses}. In that case,
the last decision literal $\decision{l}$ in $M_1$ and all literals that succeed
it are removed from $M_1$, and the obtained prefix is extended by
$\nondecision{\opp{l}}$ as an implied literal. The rule \rulename{decide}
extends the trail by an arbitrary literal $l$ as a decision literal, such that
the variable of $l$ belongs to $\mathit{DecVars}$ and neither $l$ nor $\opp{l}$
occur in $M_1$. In that case, we say there is a {\em branching} on $l$.

The transition system considered is described by the relation $\rightarrowd$,
introduced by the following definition.

\begin{defi}[$\rightarrowd$]
\label{def:rightarrow}
$$s_1 \rightarrowd s_2 \iff {\fn unitPropagate}\ s_1\ s_2\ \vee\ {\fn
  backtrack}\ s_1\ s_2\ \vee\ {\fn decide}\ s_1\ s_2$$
\end{defi}

\begin{defi}[Outcome states]
An outcome state is either an {\em accepting state} or a {\em rejecting state}.

A state is an {\em accepting state} if $M\nfalsifies F$ and there
is no state $(M',F')$ such that ${\fn decide}\ (M, F)\ (M', F')$ (i.e., there is
no literal such that ${\fn var}\ l\in \mathit{DecVars}$, $l\notin M$, and
$\opp{l}\notin M$).

A state is a {\em rejecting state} if $M\falsifies F$ and ${\fn
  decisions}\ M=[\,]$.
\end{defi}

Note that the condition $M\nfalsifies F$ in the above definition can be replaced
by the condition $M\models F$, but the former is used since its check can be
more efficiently implemented.

\medskip
\begin{exa}
\label{ex:dpll}
Let $F_0$ = $[$ $[-1, +2]$, $[-1, -3, +5, +7]$, $[-1, -2, +5,-7]$, $[-2, +3]$,
  $[+2, +4]$, $[-2, -5, +7]$, $[-3,-6,-7]$, $[-5, +6]$ $]$. One possible
$\rightarrowd$ trace is given below.

\noindent
\begin{tabular}{l|l}
rule                                                        & $M$                                                                                                                        \\ \hline
                                                            & $[\,]$                                                                                                                     \\ 
\rulename{decide} ($l = +1$),                               & $[\decision{+1}]$                                                                                                          \\
\rulename{unitPropagate} ($c = [-1, +2]$, $l = +2$)         & $[\decision{+1}, \nondecision{+2}]$                                                                                        \\
\rulename{unitPropagate} ($c = [-2, +3]$, $l = +3$)         & $[\decision{+1}, \nondecision{+2}, \nondecision{+3}]$                                                                      \\
\rulename{decide} ($l = +4$)                                & $[\decision{+1}, \nondecision{+2}, \nondecision{+3}, \decision{+4}]$                                                       \\
\rulename{decide} ($l = +5$)                                & $[\decision{+1}, \nondecision{+2}, \nondecision{+3}, \decision{+4}, \decision{+5}]$                                        \\
\rulename{unitPropagate} ($c = [-5, +6]$, $l = +6$)         & $[\decision{+1}, \nondecision{+2}, \nondecision{+3}, \decision{+4}, \decision{+5}, \nondecision{+6}]$                      \\
\rulename{unitPropagate} ($c = [-2, -5, +7]$, $l = +7$)     & $[\decision{+1}, \nondecision{+2}, \nondecision{+3}, \decision{+4}, \decision{+5}, \nondecision{+6}, \nondecision{+7}]$    \\
\rulename{backtrack} ($M \falsifies [-3, -6, -7]$)          & $[\decision{+1}, \nondecision{+2}, \nondecision{+3}, \decision{+4}, \nondecision{-5}]$                                     \\
\rulename{unitPropagate} ($c = [-1, -3, +5, +7]$, $l = +7$) & $[\decision{+1}, \nondecision{+2}, \nondecision{+3}, \decision{+4}, \nondecision{-5}, \nondecision{+7}]$                   \\
\rulename{backtrack} ($M \falsifies [-1, -2, +5, -7]$)      & $[\decision{+1}, \nondecision{+2}, \nondecision{+3}, \nondecision{-4}]$                                                    \\
\rulename{decide} ($l = +5$)                                & $[\decision{+1}, \nondecision{+2}, \nondecision{+3}, \nondecision{-4}, \decision{+5}]$                                     \\
\rulename{unitPropagate} ($c = [-5, +6]$, $l = +6$)         & $[\decision{+1}, \nondecision{+2}, \nondecision{+3}, \nondecision{-4}, \decision{+5}, \nondecision{+6}]$                   \\
\rulename{unitPropagate} ($c = [-2, -5, +7]$, $l = +7$)     & $[\decision{+1}, \nondecision{+2}, \nondecision{+3}, \nondecision{-4}, \decision{+5}, \nondecision{+6}, \nondecision{+7}]$ \\
\rulename{backtrack} ($M \falsifies [-3, -6, -7]$)          & $[\decision{+1}, \nondecision{+2}, \nondecision{+3}, \nondecision{-4}, \nondecision{-5}]$                                  \\
\rulename{unitPropagate} ($c = [-1, -3, +5, +7]$, $l = +7$) & $[\decision{+1}, \nondecision{+2}, \nondecision{+3}, \nondecision{-4}, \nondecision{-5}, \nondecision{+7}]$                \\
\rulename{backtrack} ($M \falsifies [-1, -2, +5, -7]$       & $[\nondecision{-1}]$                                                                                                       \\
\rulename{decide} ($l = +2$)                                & $[\nondecision{-1}, \decision{+2}]$                                                                                        \\
\rulename{unitPropagate} ($c = [-2, +3]$, $l = +3$)         & $[\nondecision{-1}, \decision{+2}, \nondecision{+3}]$                                                                      \\
\rulename{decide} ($l = +4$)                                & $[\nondecision{-1}, \decision{+2}, \nondecision{+3}, \decision{+4}]$                                                       \\
\rulename{decide} ($l = +5$)                                & $[\nondecision{-1}, \decision{+2}, \nondecision{+3}, \decision{+4}, \decision{+5}]$                                        \\
\rulename{unitPropagate} ($c = [-5, +6]$, $l = +6$)         & $[\nondecision{-1}, \decision{+2}, \nondecision{+3}, \decision{+4}, \decision{+5}, \nondecision{+6}]$                      \\
\rulename{unitPropagate} ($c = [-2, -5, +7]$, $l = +7$)     & $[\nondecision{-1}, \decision{+2}, \nondecision{+3}, \decision{+4}, \decision{+5}, \nondecision{+6}, \nondecision{+7}]$    \\
\rulename{backtrack} $M \falsifies [-3, -6, -7]$            & $[\nondecision{-1}, \decision{+2}, \nondecision{+3}, \decision{+4}, \nondecision{-5}]$                                     \\
\rulename{decide} ($l = +6$)                                & $[\nondecision{-1}, \decision{+2}, \nondecision{+3}, \decision{+4}, \nondecision{-5}, \decision{+6}]$                      \\
\rulename{unitPropagate} ($c = [-3, -6, -7]$, $l = -7$)     & $[\nondecision{-1}, \decision{+2}, \nondecision{+3}, \decision{+4}, \nondecision{-5}, \decision{+6}, \nondecision{-7}]$  \\
\end{tabular}
\end{exa}

\subsection{Properties}
\label{subsec:DPLLProperties}

In order to prove that the presented transition system is terminating, sound,
and complete, first, local properties of the transition rules have to be given
in the form of certain invariants.

\subsubsection{Invariants}
\label{subsec:DPLLInvariants}

For proving properties of the described transition system, several relevant rule
invariants will be used (not all of them are used for proving each of soundness,
completeness, and termination, but we list them all here for the sake of
simplicity).

\medskip

\begin{tabular}{l l}
\defineInvariant{inv:consistent} {\inv{consistent}}:
        & ${\fn consistent}\ M$   \\
\defineInvariant{inv:distinct}     {\inv{distinct}}:
        & ${\fn distinct}\ M$       \\
\defineInvariant{inv:varsM} {\inv{varsM}}:
        & ${\fn vars}\ M\ \subseteq\ \mathit{Vars}$ \\
\defineInvariant{inv:impliedLits} {\inv{impliedLits}}:
        & $\forall l.\ l
           \in M \implies (F\ @\ {\fn decisionsTo}\ l\ M)
           \entailsLiteral l$ \\
\defineInvariant{inv:equiv}      {\inv{equiv}}:
        & $F\equiv F_0$             \\
\defineInvariant{inv:varsF} {\inv{varsF}}:
        & ${\fn vars}\ F\ \subseteq\ \mathit{Vars}$ \\
\end{tabular}

\medskip

The condition \inv{consistent} states that the trail $M$ can potentially be a
model of the formula, and \inv{distinct} requires that it contains no repeating
elements. The \inv{impliedLits} ensures that any literal $l$ in $M$ is entailed
by $F$ with all decision literals that precede $l$.

Notice that the given rules do not change formulae in states, so it trivially
holds that $F=F_0$, which further implies \inv{equiv} and \inv{varsF}. However,
the transition systems that follow in the next sections may change formulae, so
the above set of invariants is more appropriate.  If only testing satisfiability
is considered (and not in building models for satisfiable formulae), instead of
\inv{equiv}, it is sufficient to require that $F$ and $F_0$ are weakly
equivalent (i.e., equisatisfiable).

The above conditions are indeed invariants (i.e., they are met for each state
during the application of the rules), as stated by the following lemma.

\begin{lem}
\label{lemma:invariantsHold}\hfill

\begin{enumerate}[\em(1)]
\packitems
\item In the initial state $([\,], F_0)$ all the invariants hold.
\item If $(M,F)\rightarrowd (M', F')$ and if the invariants are met in the state
  $(M,F)$, then they are met in the state $(M', F')$ too.
\item If $([\,], F_0)\rightarrowd^* (M, F)$, then all the invariants hold in the
  state $(M, F)$.
\end{enumerate}
\end{lem}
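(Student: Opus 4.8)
The plan is to establish parts (1) and (2) directly and then obtain part (3) by induction on the length of the derivation $([\,], F_0)\rightarrowd^* (M, F)$, using (1) for the base case and (2) for the inductive step. For part (1), all six invariants hold trivially in $([\,], F_0)$: the empty trail is consistent and distinct, ${\fn vars}\ [\,]=\emptyset\subseteq \mathit{Vars}$, $\inv{impliedLits}$ holds vacuously, $F_0\equiv F_0$ by reflexivity, and ${\fn vars}\ F_0\subseteq {\fn vars}\ F_0\cup\mathit{DecVars}=\mathit{Vars}$.

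For part (2), I would perform a case analysis on which of the three rules is applied. Since each rule leaves the formula unchanged ($F_2=F_1$), the invariants $\inv{equiv}$ and $\inv{varsF}$ are preserved immediately, so the work concerns only the four trail invariants. For \rulename{decide} and \rulename{unitPropagate} the new literal $l$ is undefined in $M_1$ (for \rulename{decide} by hypothesis, for \rulename{unitPropagate} from ${\fn isUnit}\ c\ l\ M_1$), so $\opp{l}\notin M_1$ and $l\notin M_1$; hence appending it preserves $\inv{consistent}$ and $\inv{distinct}$, and $\inv{varsM}$ follows because the variable of $l$ lies in $\mathit{DecVars}\subseteq\mathit{Vars}$ (for \rulename{decide}) or in ${\fn vars}\ c\subseteq{\fn vars}\ F\subseteq\mathit{Vars}$ by $\inv{varsF}$ (for \rulename{unitPropagate}). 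For \rulename{backtrack}, writing $l$ for the last decision literal and $P$ for the prefix before it, $P$ is a prefix of the consistent, distinct trail $M_1$ and is therefore itself consistent and distinct; since $l\in M_1$ we have $\opp{l}\notin M_1$ by consistency and $l\notin P$ by distinctness, so appending $\opp{l}$ to $P$ again preserves $\inv{consistent}$ and $\inv{distinct}$, and $\inv{varsM}$ holds as ${\fn var}\ \opp{l}={\fn var}\ l\in{\fn vars}\ M_1$.

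The interesting invariant is $\inv{impliedLits}$, and establishing it for the newly asserted literal in the \rulename{unitPropagate} and \rulename{backtrack} cases is the crux of the proof. In all three cases the property for the literals already present in the new trail is inherited from the hypothesis, since $F$ is unchanged and, for every such literal, the set of decisions preceding its first occurrence is unaffected by appending a literal at the end (for \rulename{backtrack}, each surviving literal lies in the common prefix $P$, so ${\fn decisionsTo}$ is unchanged). For \rulename{decide} the asserted literal $l$ is itself marked, so it belongs to ${\fn decisionsTo}\ l\ M_2$ and the required entailment $(F\ @\ {\fn decisionsTo}\ l\ M_2)\entailsLiteral l$ holds trivially. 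For \rulename{unitPropagate} the asserted literal $l$ is implied and last, so ${\fn decisionsTo}\ l\ M_2$ equals the full list ${\fn decisions}\ M_1$; from $c\in F$ together with the fact that every other literal of $c$ is false in $M_1$ (so its opposite lies in $M_1$ and is, by the hypothesis $\inv{impliedLits}$, entailed by $F$ with the decisions of $M_1$), any model of $F\ @\ {\fn decisions}\ M_1$ must satisfy $c$ by making $l$ true, giving $F\ @\ {\fn decisions}\ M_1\entailsLiteral l$. For \rulename{backtrack} the key observation is that $F$ together with all decisions of $M_1$ (namely ${\fn decisions}\ P$ followed by $l$) entails every literal of $M_1$ by the hypothesis, and since $M_1$ falsifies some clause $c'\in F$, this formula entails the opposite of each literal of $c'$; as $c'\in F$ forces $c'$ itself to hold, the formula $F\ @\ {\fn decisions}\ P\ @\ \decision{l}$ is unsatisfiable, whence $F\ @\ {\fn decisions}\ P\entailsLiteral \opp{l}$, which is exactly the property required for the newly asserted $\opp{l}$.

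Finally, part (3) follows by induction on the number of $\rightarrowd$-steps: the base case is part (1) and each step is handled by part (2). The main obstacle throughout is the $\inv{impliedLits}$ invariant for \rulename{backtrack}, where one must turn ``$M_1$ falsifies a clause of $F$'' into the unsatisfiability of $F$ conjoined with the decisions up to and including the last one, and thereby derive the entailment of $\opp{l}$ from $F$ and the remaining decisions.
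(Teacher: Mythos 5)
Your proof is correct and follows exactly the approach the paper takes (and only sketches in one sentence): a case analysis over rule--invariant pairs for part (2), with parts (1) and (3) handled trivially and by induction along $\rightarrowd^*$, respectively. Your treatment of the crux --- establishing \inv{impliedLits} for the newly asserted literal, in particular deriving $F\,@\,{\fn decisions}\ P \entailsLiteral \opp{l}$ for \rulename{backtrack} from the unsatisfiability of $F$ together with all decisions --- is precisely the reasoning the paper encapsulates in Lemma \ref{lemma:InvariantImpliedLiteralsAndFormulaFalse}.
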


The proof of this lemma considers a number of cases --- one for each rule-invariant 
pair.

\subsubsection{Soundness}
\label{subsec:DPLLsoundness}

Soundness of the given transition system requires that if the system terminates
in an accepting state, then the input formula is satisfiable, and if the system
terminates in a rejecting state, then the input formula is unsatisfiable.

The following lemma ensures soundness for satisfiable input formulae, and the
next one is used for proving soundness for unsatisfiable input formulae (but
also in some other contexts).

\begin{lem}
\label{lemma:satReport}
If $\mathit{DecVars} \supseteq {\fn vars}\ F_0$ and if there is 
an accepting state $(M,F)$ such that:
\begin{enumerate}[\em(1)]
\packitems
\item ${\fn consistent}\ M$ (i.e., \inv{consistent} holds),
\item $F\equiv F_0$ (i.e., \inv{equiv} holds),
\item ${\fn vars}\ F\subseteq \mathit{Vars}$ (i.e., \inv{varsF} holds),
\end{enumerate}
then the formula $F_0$ is satisfiable and $M$ is one model (i.e., ${\fn
  model}\ M\ F_0$).
\end{lem}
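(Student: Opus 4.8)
The plan is to exhibit the consistent valuation $M$ itself as an explicit model of $F_0$; since consistency of $M$ is already granted by hypothesis~(1), everything reduces to establishing $M \vDash F_0$. First I would clean up the variable bookkeeping: under the standing assumption $\mathit{DecVars} \supseteq {\fn vars}\ F_0$ we have $\mathit{Vars} = {\fn vars}\ F_0 \cup \mathit{DecVars} = \mathit{DecVars}$, so hypothesis~(3) reads ${\fn vars}\ F \subseteq \mathit{DecVars}$, meaning every variable occurring in $F$ is a decision variable.

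Next I would unpack the two pieces of information contained in ``$(M,F)$ is an accepting state.'' The first, $M \nfalsifies F$, says that no clause of $F$ is false in $M$. The second, non-applicability of ${\fn decide}$, says precisely that for every literal $l$ with ${\fn var}\ l \in \mathit{DecVars}$ we have $l \in M$ or $\opp{l} \in M$; in other words $M$ is total over $\mathit{DecVars}$, and therefore, since ${\fn vars}\ F \subseteq \mathit{DecVars}$, total over ${\fn vars}\ F$ as well.

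The heart of the argument, and the step I expect to be the only real obstacle, is upgrading ``not false'' to ``true'' for every clause of $F$, i.e.\ proving $M \vDash F$. Fix $c \in F$. From $M \nfalsifies F$ I get $M \nfalsifies c$, so $c$ is not false in $M$, which by the definition of $\falsifies$ on clauses yields some literal $l \in c$ with $\opp{l} \notin M$. Since ${\fn var}\ l \in {\fn vars}\ F \subseteq \mathit{DecVars}$, totality of $M$ forces $l \in M$ or $\opp{l} \in M$; as $\opp{l} \notin M$, I conclude $l \in M$, hence $M \vDash l$ and therefore $M \vDash c$. As $c$ was arbitrary, $M \vDash F$. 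The subtlety here is genuine: $\nfalsifies$ is strictly weaker than $\vDash$ in general, because a clause may be undefined in $M$, and it is exactly the totality supplied by the accepting-state condition that closes this gap.

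Finally I would transport the conclusion from $F$ to $F_0$ through the equivalence. Hypothesis~(1) together with the previous step gives that $M$ is a consistent valuation with $M \vDash F$, so $M$ is a \emph{model} of $F$. By hypothesis~(2), $F \equiv F_0$, and since logical equivalence means $F$ and $F_0$ have exactly the same models, $M$ is also a model of $F_0$; thus ${\fn model}\ M\ F_0$ holds and, in particular, $F_0$ is satisfiable.
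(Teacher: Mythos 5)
Your proof is correct and follows what is essentially the paper's argument: use non-applicability of \rulename{decide} to get totality of $M$ over $\mathit{DecVars} \supseteq {\fn vars}\ F$, use totality to upgrade $M \nfalsifies F$ to $M \vDash F$, and then combine with \inv{consistent} and \inv{equiv} to conclude ${\fn model}\ M\ F_0$. The step you single out as the crux — that $\nfalsifies$ only becomes $\vDash$ in the presence of totality — is indeed exactly the point of requiring $\mathit{DecVars} \supseteq {\fn vars}\ F_0$ in the lemma.
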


\begin{lem}
\label{lemma:InvariantImpliedLiteralsAndFormulaFalse}
If there is a state $(M,F)$ such that:
\begin{enumerate}[\em(1)]
\packitems
\item $\forall l.\ l \in M$ $\implies$ $(F\ @\ {\fn decisionsTo}\ l\ M)$
  $\entailsLiteral$ $l$ (i.e., \inv{impliedLits} holds),
\item $M\falsifies F$
\end{enumerate}
then $\neg ({\fn sat}\ (F\ @\ {\fn decisions}\ M))$.
\end{lem}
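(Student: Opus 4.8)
The plan is to argue by contradiction. Suppose $F\ @\ {\fn decisions}\ M$ is satisfiable; then it has a model $v$, i.e.\ a consistent valuation with $v \vDash F\ @\ {\fn decisions}\ M$. Unfolding the semantics of the conversion of a valuation to a formula, this means $v \vDash F$ together with $v \vDash d$ for every decision literal $d$ of $M$ (each such literal contributes a single-literal clause). The goal is to derive a contradiction with the hypothesis $M \falsifies F$.

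The key step is to show that every literal occurring in the trail is true in $v$, that is, $l \in M \implies v \vDash l$. Here the precise formulation of \inv{impliedLits} does the work directly: for any $l \in M$ we are given $(F\ @\ {\fn decisionsTo}\ l\ M) \entailsLiteral l$. Since ${\fn decisionsTo}\ l\ M$ consists only of \emph{decision} literals, and these form a sublist of ${\fn decisions}\ M$, the valuation $v$ satisfies each of them; combined with $v \vDash F$ and the consistency of $v$, this makes $v$ a model of $F\ @\ {\fn decisionsTo}\ l\ M$. By the definition of entailment of a literal, every model of that formula satisfies $l$, so $v \vDash l$. I expect that the only point requiring care is to notice that no induction over the structure of $M$ is needed: because the invariant entails $l$ from $F$ together with the preceding \emph{decisions} (rather than from the full preceding trail prefix), and $v$ already satisfies \emph{all} decisions, each trail literal is discharged in one shot.

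It remains to extract the contradiction from $M \falsifies F$. By definition this yields a clause $c \in F$ with $M \falsifies c$, i.e.\ $\opp{l} \in M$ for every $l \in c$. On the other hand, $v \vDash F$ and $c \in F$ give $v \vDash c$, so some $l_0 \in c$ satisfies $l_0 \in v$. Applying the previous step to the trail element $\opp{l_0} \in M$ gives $v \vDash \opp{l_0}$, i.e.\ $\opp{l_0} \in v$. Thus $v$ contains both $l_0$ and $\opp{l_0}$, contradicting its consistency, and the assumed satisfiability of $F\ @\ {\fn decisions}\ M$ is refuted.

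Overall the argument is routine once the invariant is read correctly; the main obstacle, such as it is, is recognising that \inv{impliedLits} is stated in terms of ${\fn decisionsTo}$ (decision literals only). This is exactly what lets a model $v$ of $F\ @\ {\fn decisions}\ M$ uniformly satisfy the hypotheses of the entailment for \emph{every} trail literal at once, so that the proof reduces to a single consistency clash rather than an induction along the trail.
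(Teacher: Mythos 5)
Your proof is correct: the key observation that a model $v$ of $F\ @\ {\fn decisions}\ M$ satisfies the hypothesis of \inv{impliedLits} for \emph{every} trail literal simultaneously (because each ${\fn decisionsTo}\ l\ M$ is a sublist of ${\fn decisions}\ M$, so no induction along $M$ is needed), followed by the consistency clash on a clause $c \in F$ falsified by $M$, is exactly the argument this lemma requires. The paper itself defers the proof to the appendix of its full version, but your reasoning is the same direct one it relies on, and every step checks out against the paper's definitions of entailment, ${\fn decisionsTo}$, and $\falsifies$.
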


\begin{thm}[Soundness for $\rightarrowd$]
\label{thm:soundness}
If $([\,], F_0) \rightarrowd^* (M, F)$, then:
\begin{enumerate}[\em(1)]
\packitems
\item If $\mathit{DecVars} \supseteq {\fn vars}\ F_0$ and $(M,F)$ is an
  accepting state, then the formula $F_0$ satisfiable and $M$ is one model
  (i.e., ${\fn sat}\ F_0$ and ${\fn model}\ M\ F_0$).
\item If $(M,F)$ is a rejecting state, then the formula $F_0$ is unsatisfiable
  (i.e., $\neg({\fn sat}\ F_0)$).
\end{enumerate}
\end{thm}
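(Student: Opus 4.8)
The plan is to reduce the theorem to the three preparatory results already established, so that each of the two cases becomes essentially a direct citation. The single fact that drives both cases is that all the invariants of Section~\ref{subsec:DPLLInvariants} hold in the reached state. Accordingly, the first step I would take is to apply part~(3) of Lemma~\ref{lemma:invariantsHold} to the hypothesis $([\,], F_0) \rightarrowd^* (M, F)$, obtaining that \inv{consistent}, \inv{impliedLits}, \inv{equiv}, and \inv{varsF} (among others) all hold in $(M, F)$.

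For part~(1), the accepting-state case, I would read off from these invariants exactly the premises of Lemma~\ref{lemma:satReport}: \inv{consistent} gives ${\fn consistent}\ M$, \inv{equiv} gives $F \equiv F_0$, and \inv{varsF} gives ${\fn vars}\ F \subseteq \mathit{Vars}$. Combined with the standing hypothesis $\mathit{DecVars} \supseteq {\fn vars}\ F_0$ and the assumption that $(M,F)$ is an accepting state, these are precisely the hypotheses of that lemma, whose conclusion---that $F_0$ is satisfiable with $M$ as a model---is verbatim the desired statement. So this case closes immediately.

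For part~(2), the rejecting-state case, I would unfold the definition of a rejecting state to get $M \falsifies F$ and ${\fn decisions}\ M = [\,]$. Invariant \inv{impliedLits} together with $M \falsifies F$ supplies exactly the two hypotheses of Lemma~\ref{lemma:InvariantImpliedLiteralsAndFormulaFalse}, yielding $\neg({\fn sat}\ (F\ @\ {\fn decisions}\ M))$. Since ${\fn decisions}\ M = [\,]$, the appended formula $F\ @\ {\fn decisions}\ M$ equals $F$, so $\neg({\fn sat}\ F)$. Finally I would transfer this to $F_0$ using invariant \inv{equiv}: because $F \equiv F_0$, the two formulae have the same models, hence the same satisfiability status, and $\neg({\fn sat}\ F)$ gives $\neg({\fn sat}\ F_0)$.

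The main obstacle is only the small amount of bookkeeping at the tail of part~(2): noting that the empty decision list collapses $F\ @\ {\fn decisions}\ M$ to $F$, and that logical equivalence $F \equiv F_0$ preserves satisfiability in both directions (so unsatisfiability of $F$ yields unsatisfiability of $F_0$). Neither point is deep, but both must be made explicit; everything else is a direct invocation of Lemmas~\ref{lemma:invariantsHold}, \ref{lemma:satReport}, and \ref{lemma:InvariantImpliedLiteralsAndFormulaFalse}, which were designed precisely to carry the two halves of soundness.
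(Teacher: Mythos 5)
Your proposal is correct and follows essentially the same route as the paper's own proof: invoke Lemma~\ref{lemma:invariantsHold} to establish all invariants in the reached state, then close part~(1) by Lemma~\ref{lemma:satReport} and part~(2) by Lemma~\ref{lemma:InvariantImpliedLiteralsAndFormulaFalse}, using ${\fn decisions}\ M = [\,]$ to collapse $F\ @\ {\fn decisions}\ M$ to $F$ and \inv{equiv} to transfer unsatisfiability to $F_0$. The only difference is presentational: you spell out the bookkeeping steps slightly more explicitly than the paper does.
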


\begin{proof}
By Lemma \ref{lemma:invariantsHold} all the invariants hold in the state $(M,
F)$.

Let us assume that $\mathit{DecVars} \supseteq {\fn vars}\ F_0$ and $(M,F)$ is
an accepting state. Then, by Lemma \ref{lemma:satReport}, the formula is $F_0$
satisfiable and $M$ is one model.

Let us assume that $(M,F)$ is a rejecting state. Then $M\falsifies F$ and, by
Lemma \ref{lemma:InvariantImpliedLiteralsAndFormulaFalse}, 
$\neg ({\fn sat}\ (F\ @\ ({\fn decisions}\ M)))$.  Since $(M,F)$ is a rejecting 
state, it holds that ${\fn decisions}\ M=[\,]$, and hence $\neg ({\fn sat}\ F)$. From 
$F\equiv F_0$ (\inv{equiv}), it follows that $\neg ({\fn sat}\ F_0)$, i.e., the formula
$F_0$ is unsatisfiable.
\end{proof}

\subsubsection{Termination}
\label{sec:DPLLtermination}

Full and precise formalization of termination is very demanding, and termination
proofs given in the literature (e.g., \cite{Krstic-Frocos07,NieOT-Jacm06}) are
far from detailed formal proofs. For this reason, termination proofs will be
presented here in more details, including auxiliary lemmas used to prove the
termination theorem.

The described transition system terminates, i.e., for any input formulae $F_0$,
the system (starting from the initial state $([\,], F_0)$) will reach a final
state in a finite number of steps. In other words, the relation $\rightarrowd$
is well-founded. This can be proved by constructing a well-founded partial
ordering $\succ$ over trails, such that $(M_1,F_1)\rightarrowd (M_2, F_2)$
implies $M_1 \succ M_2$. In order to reach this goal, several auxiliary
orderings are defined.

First, a partial ordering over annotated literals $\succlit$ and a partial
ordering over trails $\succTr$ will be introduced and some of their properties
will be given within the following lemmas.

\begin{defi}[$\succlit$]
\label{def:l-ordering}
$l_1 \succlit l_2 \iff {\fn isDecision}\ l_1 \wedge \neg({\fn isDecision}\ l_2)$
\end{defi}

\begin{lem}
\label{lemma:preclittrans}
$\succlit$ is transitive and irreflexive.
\end{lem}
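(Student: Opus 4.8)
The plan is to prove both properties by directly unfolding Definition~\ref{def:l-ordering} and reasoning purely propositionally on the Boolean value ${\fn isDecision}$ of each annotated literal. No induction, ordering theory, or properties of trails are needed; everything reduces to the observation that ${\fn isDecision}\ l$ and $\neg({\fn isDecision}\ l)$ cannot hold simultaneously.

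First I would dispatch irreflexivity. Suppose, for contradiction, that $l \succlit l$ for some annotated literal $l$. By definition this means ${\fn isDecision}\ l \wedge \neg({\fn isDecision}\ l)$, which is immediately contradictory. Hence $\neg(l \succlit l)$ for every $l$, establishing irreflexivity.

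Next I would prove transitivity, and here the key observation is that it holds \emph{vacuously}. Assume $l_1 \succlit l_2$ and $l_2 \succlit l_3$. Unfolding the first conjunct gives $\neg({\fn isDecision}\ l_2)$, while unfolding the second gives ${\fn isDecision}\ l_2$. These contradict each other, so the conjunction of the two hypotheses is never satisfiable, and the implication $l_1 \succlit l_2 \wedge l_2 \succlit l_3 \implies l_1 \succlit l_3$ holds trivially.

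Honestly, there is no real obstacle in this lemma: the only subtlety worth flagging is that transitivity is not witnessed by exhibiting $l_1 \succlit l_3$ from genuine premises, but rather by showing the premises are jointly contradictory. This is exactly the kind of step where one might be tempted to look for a substantive argument; recognizing that the antecedent is unsatisfiable is the whole content of the proof.
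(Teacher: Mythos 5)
Your proof is correct, and it is essentially the only proof available: unfold Definition~\ref{def:l-ordering} and reason propositionally on the ${\fn isDecision}$ flags, which is what the paper does as well. The one thing to correct is your closing claim that transitivity is ``not witnessed by exhibiting $l_1 \succlit l_3$ from genuine premises'' and that recognizing the unsatisfiable antecedent is ``the whole content of the proof.'' In fact the conclusion follows directly: from $l_1 \succlit l_2$ you get ${\fn isDecision}\ l_1$, and from $l_2 \succlit l_3$ you get $\neg({\fn isDecision}\ l_3)$; these two conjuncts are precisely the definition of $l_1 \succlit l_3$, with no appeal to the inconsistency of the hypotheses. Your vacuity argument is also valid (the two hypotheses do force both ${\fn isDecision}\ l_2$ and its negation), so there is no gap, but it is not the only route, and in a proof assistant the direct derivation is the cleaner one since it avoids routing the argument through ex falso.
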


\begin{defi}[$\succTr$]
\label{def:M-ordering}
$M_1\succTr M_2 \iff M_1 \lex{\succlit} M_2,$ where $\lex{\succlit}$ is a
lexicographic extension of $\succlit$.
\end{defi}

\begin{lem}
\label{lemma:succtr_properties}
$\succTr$ is transitive, irreflexive, and acyclic (i.e., there is no trail $M$
such that $M\,\transclosure{\succTr}\,M$).

For any three trails $M$, $M'$, and $M''$ it holds that: if $M' \succTr M''$,
then $M\ @\ M' \succTr M\ @\ M''$.
\end{lem}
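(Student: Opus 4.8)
The plan is to reduce all four assertions to the definition $\succTr = \lex{\succlit}$ together with the already-established fact (Lemma~\ref{lemma:preclittrans}) that $\succlit$ is transitive and irreflexive; that is, I would prove the generic statement that the lexicographic extension $\lex{\succ}$ of any transitive, irreflexive relation $\succ$ is again transitive and irreflexive, and then instantiate with $\succ := \succlit$. Throughout I unfold the definition of $\lex{\succ}$, which offers two ways for $s \lex{\succ} t$ to hold: either $t$ is a proper prefix of $s$ (the ``longer'' disjunct $s = t\,@\,r$ with $r \neq [\,]$), or $s$ and $t$ share a common prefix $r$ and at the first position beyond it their entries $a$ and $b$ satisfy $a \succ b$ (the ``first-difference'' disjunct).

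For irreflexivity I would assume $s \lex{\succ} s$ and rule out both disjuncts: the longer disjunct forces $r = [\,]$, while the first-difference disjunct forces $a = b$ (the two decompositions of the same list $s$ must agree at position $|r|$), whence $a \succ a$ contradicts irreflexivity of $\succ$. For transitivity I would assume $s \lex{\succ} t$ and $t \lex{\succ} u$ and split into the four combinations of disjuncts. The cases in which at least one comparison is the longer disjunct are handled either by concatenating the witnessing suffixes, or by observing that the shorter list is a prefix of one side and comparing its length against the first-difference position. The genuinely delicate case is when both comparisons are first-difference steps, giving $s = p_1\,@\,a_1\,@\,s'$, $t = p_1\,@\,b_1\,@\,t' = p_2\,@\,a_2\,@\,t''$, and $u = p_2\,@\,b_2\,@\,u'$ with $a_1 \succ b_1$ and $a_2 \succ b_2$; here I would compare $|p_1|$ and $|p_2|$. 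When they differ, the shorter common prefix already witnesses the first difference between $s$ and $u$; when $|p_1| = |p_2|$ the prefixes coincide, forcing $b_1 = a_2$, and transitivity of $\succ$ yields $a_1 \succ b_2$. I expect this alignment of the two differing positions to be the main obstacle, as it is the only place where the length bookkeeping and transitivity of $\succlit$ are both essential.

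Acyclicity then comes for free: since $\succTr$ is transitive, its transitive closure $\transclosure{\succTr}$ coincides with $\succTr$ itself, so $M\,\transclosure{\succTr}\,M$ would give $M \succTr M$, contradicting irreflexivity. Finally, for the left-append property I would unfold $M' \succTr M''$ into its two disjuncts and prepend $M$ to each witnessing decomposition: in the longer case $M' = M''\,@\,r$ becomes $M\,@\,M' = (M\,@\,M'')\,@\,r$ with the same nonempty $r$, and in the first-difference case the common prefix $p$ is simply replaced by $M\,@\,p$ while the entries $a$ and $b$ with $a \succlit b$ are untouched. Both directly re-establish $M\,@\,M' \lex{\succlit} M\,@\,M''$, completing the proof.
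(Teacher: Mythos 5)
Your proof is correct and follows essentially the same route as the paper's: unfold the definition of the lexicographic extension, handle the ``proper prefix'' and ``first difference'' disjuncts by case analysis (aligning the two differing positions via a length comparison in the transitivity argument), appeal to Lemma~\ref{lemma:preclittrans} for transitivity and irreflexivity of $\succlit$, derive acyclicity from transitivity plus irreflexivity, and obtain the left-append property by prepending $M$ to the witnessing decompositions. No gaps to report.
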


The next lemma links relations $\rightarrowd$ and $\succTr$.

\begin{lem}\label{lemma:rightarrowsucc}
If ${\fn decide}\ (M_1, F_1)\ (M_2, F_2)$ or ${\fn unitPropagate}\ (M_1,
F_1)\ (M_2, F_2)$ or ${\fn backtrack}$ $(M_1, F_1)\ (M_2, F_2)$, then $ M_1
\succTr M_2$.
\end{lem}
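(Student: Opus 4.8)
The plan is to prove the claim by a case analysis on which of the three rules relates $(M_1, F_1)$ to $(M_2, F_2)$, reducing each case to one of the two disjuncts in the definition of the lexicographic extension $\lex{\succlit}$ (Definition \ref{def:lexext}) that underlies $\succTr$ (Definition \ref{def:M-ordering}). The formula component never matters: all three rules satisfy $F_2 = F_1$, so the entire argument is about the shape of the two trails.

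The two rules that merely append a single annotated literal are handled by the prefix disjunct. For \rulename{decide} we have $M_2 = M_1\,@\,\decision{l}$ and for \rulename{unitPropagate} we have $M_2 = M_1\,@\,\nondecision{l}$; in both cases $M_1$ is a proper prefix of $M_2$. The first disjunct of the lexicographic extension then gives $M_1 \succTr M_2$ directly, since a trail is $\succTr$-greater than each of its proper extensions. The annotation of the appended literal, and hence the behaviour of $\succlit$ on it, is irrelevant in these two cases.

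The case that requires genuine decomposition is \rulename{backtrack}. Here I would write $M_1 = P\,@\,\decision{l}\,@\,S$, where $\decision{l} = {\fn lastDecision}\ M_1$ is the last decision literal of $M_1$, $P = {\fn prefixBeforeLastDecision}\ M_1$ is the portion of the trail before it, and $S$ is the suffix following it; the rule sets $M_2 = P\,@\,\nondecision{\opp{l}}$. The two trails then share the common prefix $P$ and first differ at position $|P|$, where $M_1$ carries the decision literal $\decision{l}$ and $M_2$ carries the implied literal $\nondecision{\opp{l}}$. Since ${\fn isDecision}\ \decision{l}$ holds while ${\fn isDecision}\ \nondecision{\opp{l}}$ does not, Definition \ref{def:l-ordering} gives $\decision{l} \succlit \nondecision{\opp{l}}$, and instantiating the second disjunct of the lexicographic extension with $r := P$, $a := \decision{l}$, $b := \nondecision{\opp{l}}$ yields $M_1 \succTr M_2$.

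I expect the only real work to be the bookkeeping in the backtrack case: justifying that $P = {\fn prefixBeforeLastDecision}\ M_1$ is indeed a common prefix of both trails and that the last decision literal sits exactly at its end, so that the decomposition $M_1 = P\,@\,\decision{l}\,@\,S$ holds. This rests on the precondition ${\fn decisions}\ M_1 \neq [\,]$, which guarantees that a last decision exists, together with the defining properties of ${\fn prefixBeforeLastDecision}$ and ${\fn lastDecision}$. Once the positions are aligned the comparison under $\succlit$ is immediate, and it is worth noting that neither the content of the suffix $S$ nor that of the tail of $M_2$ enters the argument, since the second disjunct permits arbitrary continuations after the two differing elements.
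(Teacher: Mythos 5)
Your proof is correct, and it is essentially the only argument available (and the paper's own, relegated to its appendix): a case split on the three rules, with \rulename{decide} and \rulename{unitPropagate} discharged by the prefix disjunct of the lexicographic extension, and \rulename{backtrack} discharged by the first-difference disjunct after decomposing $M_1 = P\,@\,\decision{l}\,@\,S$, $M_2 = P\,@\,\nondecision{\opp{l}}$ (justified by ${\fn decisions}\ M_1 \neq [\,]$) and observing $\decision{l} \succlit \nondecision{\opp{l}}$.

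One caveat is worth recording. In the two appending cases you invoke the prefix disjunct in the form ``a trail is $\succTr$-greater than each of its proper extensions.'' That is the reading your proof needs, and it is the reading forced by the lemma statement and by Theorem \ref{thm:termination} (transitions must \emph{decrease} the trail in a well-founded order, and \rulename{decide} and \rulename{unitPropagate} lengthen the trail). However, Definition \ref{def:lexext} as printed says $s \lex{\succ} t$ iff $\exists\ r.\ s = t\,@\,r\ \wedge\ r \neq [\,]$, i.e., literally that a list is greater than its proper \emph{prefixes}. Under that literal reading your first two cases --- and the lemma itself --- would fail: \rulename{decide} would give $M_2 \succTr M_1$ rather than $M_1 \succTr M_2$, and neither disjunct could yield $M_1 \succTr M_1\,@\,\decision{l}$ (the first fails on lengths, the second because the two trails agree on every position of $M_1$ and $\succlit$ is irreflexive). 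So the printed first disjunct evidently has $s$ and $t$ transposed (it should read $t = s\,@\,r$), and your proof is the correct one against the intended definition, the only one under which the surrounding development is consistent. Your \rulename{backtrack} case is insensitive to this issue, since it uses only the second disjunct.
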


The relation $\succTr$ is not necessarily well-founded (for the elements of the
trails range over infinite sets), so a restriction $\succTrrestrict$ of the
relation $\succTr$ will be defined such that it is well-founded, which will lead
to the termination proof for the system.

\begin{defi}[$\succTrrestrict$]
$
M_1 \succTrrestrict M_2 \iff  ({\fn distinct}\ M_1\ \wedge\ {\fn vars}\ M_1\subseteq Vbl)\ \wedge\ 
                             ({\fn distinct}\ M_2\ \wedge\ {\fn vars}\ M_2\subseteq Vbl)\ \wedge\ 
                              M_1 \succTr M_2
$
\end{defi}

\begin{lem}
\label{lemma:wfTrailSuccRestricted}
If the set $Vbl$ is finite, then the relation $\succTrrestrict$ is a
well-founded ordering.
\end{lem}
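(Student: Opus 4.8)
The plan is to reduce well-foundedness to a finiteness argument combined with acyclicity, rather than to exhibit an explicit decreasing measure. First I would observe that, directly from its definition, $\succTrrestrict$ is a subrelation of $\succTr$: whenever $M_1 \succTrrestrict M_2$ we have $M_1 \succTr M_2$ and, crucially, that both $M_1$ and $M_2$ are distinct trails with ${\fn vars}\ M_i \subseteq Vbl$. Consequently the entire field of $\succTrrestrict$ is contained in the set $T = \{M \mid {\fn distinct}\ M \wedge {\fn vars}\ M \subseteq Vbl\}$, so it suffices to prove well-foundedness of $\succTrrestrict$ as a relation confined to $T$.

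The key step is to show that $T$ is finite whenever $Vbl$ is finite. Since every literal occurring in such an $M$ has its variable in $Vbl$, there are at most $2\,|Vbl|$ admissible literals, hence only finitely many admissible (annotated) trail entries. Because $M$ is distinct, no entry repeats, so the length of $M$ is bounded by this finite number; a distinct list over a finite alphabet cannot be arbitrarily long. As there are only finitely many lists of bounded length over a finite alphabet, $T$ is finite.

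It then remains to invoke the fact that an acyclic relation on a finite set is well-founded. Acyclicity of $\succTrrestrict$ is inherited from $\succTr$: by Lemma \ref{lemma:succtr_properties}, $\succTr$ is acyclic, and since $\transclosure{(\succTrrestrict)} \subseteq \transclosure{(\succTr)}$, no trail $M$ can satisfy $M\,\transclosure{(\succTrrestrict)}\,M$. To conclude well-foundedness I would use the minimal-element characterization of Proposition \ref{prop:wfInvImage}: given a nonempty set $Q$ of trails, any element of $Q$ lying outside $T$ is automatically $\succTrrestrict$-minimal, since by definition it has no $\succTrrestrict$-successor at all; and if $Q \subseteq T$, then $Q$ is a nonempty finite set carrying an acyclic relation, which must contain a minimal element. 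The same subrelation observation also yields transitivity and irreflexivity of $\succTrrestrict$ from Lemma \ref{lemma:succtr_properties}, so it is indeed a strict ordering, as claimed.

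I expect the finiteness of $T$ to be the main obstacle in a fully formal development. While intuitively clear, the bounded-length argument must be made precise: one has to turn the constraint ${\fn vars}\ M \subseteq Vbl$ into a genuinely finite set of possible trail entries (pairs of a literal over $Vbl$ with a Boolean marker), and then combine distinctness with a lemma asserting that there are only finitely many distinct lists over a finite set. Packaging ``acyclic on a finite set implies well-founded'' in the idiom of the available well-foundedness characterization is routine by comparison.
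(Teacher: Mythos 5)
Your proposal is correct and follows essentially the same route as the paper's proof: it reduces well-foundedness of $\succTrrestrict$ to the finiteness of the set of distinct trails over the finite set $Vbl$ combined with the acyclicity of $\succTr$ (Lemma \ref{lemma:succtr_properties}), concluding via the minimal-element characterization of well-foundedness (Proposition \ref{prop:wfInvImage}). Your two auxiliary observations --- that trails outside the restricted carrier are vacuously minimal, and that distinctness bounds the length of admissible trails so that only finitely many exist --- are precisely the points the formal development must (and does) discharge.
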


Finally, we prove that the transition system is terminating.

\begin{thm}[Termination for $\rightarrowd$]
\label{thm:termination}
  If the set $\mathit{DecVars}$ is finite, for any formula $F_0$, the relation
  $\rightarrowd$ is well-founded on the set of states $(M,F)$ such that $([\,],
  F_0)\rightarrowd^* (M,F)$.
\end{thm}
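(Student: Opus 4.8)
The plan is to reduce well-foundedness of $\rightarrowd$ on reachable states to well-foundedness of the restricted trail ordering $\succTrrestrict[\mathit{Vars}]$, using the inverse-image criterion (the second item of Proposition~\ref{prop:wfInvImage}). Concretely, I would take the projection $f\,(M,F) = M$ sending a state to its trail, and verify that every $\rightarrowd$-step between reachable states strictly decreases $f$ with respect to $\succTrrestrict[\mathit{Vars}]$. Since the inverse image of a well-founded relation under any function is well-founded, this immediately yields the theorem.

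First I would establish finiteness of $\mathit{Vars}$. By definition $\mathit{Vars} = {\fn vars}\ F_0 \cup \mathit{DecVars}$; the set ${\fn vars}\ F_0$ is finite because $F_0$ is a finite list of clauses, and $\mathit{DecVars}$ is finite by hypothesis, so $\mathit{Vars}$ is finite. Lemma~\ref{lemma:wfTrailSuccRestricted} then gives that $\succTrrestrict[\mathit{Vars}]$ is a well-founded ordering.

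Next I would verify the key implication: if $([\,],F_0)\rightarrowd^* (M_1,F_1)$ and $(M_1,F_1)\rightarrowd (M_2,F_2)$, then $M_1 \succTrrestrict[\mathit{Vars}] M_2$. By Lemma~\ref{lemma:rightarrowsucc}, whichever of the three rules is applied we get $M_1 \succTr M_2$. For the restriction conditions, note that $(M_2,F_2)$ is reachable as well, being one step beyond the reachable state $(M_1,F_1)$, so Lemma~\ref{lemma:invariantsHold} applies to both states; in particular \inv{distinct} and \inv{varsM} give ${\fn distinct}\ M_1$, ${\fn distinct}\ M_2$ and ${\fn vars}\ M_1 \subseteq \mathit{Vars}$, ${\fn vars}\ M_2 \subseteq \mathit{Vars}$. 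Together with $M_1 \succTr M_2$ these are exactly the conjuncts in the definition of $\succTrrestrict[\mathit{Vars}]$, so $M_1 \succTrrestrict[\mathit{Vars}] M_2$.

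Finally, I would let $\succ$ denote $\rightarrowd$ restricted to the set of reachable states (so well-foundedness of $\succ$ is precisely the statement of the theorem) and apply the inverse-image part of Proposition~\ref{prop:wfInvImage} with $f$ the projection and $\succ' = \succTrrestrict[\mathit{Vars}]$: the implication just proved is exactly the hypothesis $x \succ y \implies f\,x \succ' f\,y$, and $\succ'$ is well-founded, hence $\succ$ is well-founded. I do not anticipate a serious obstacle; the only point needing care is that the restriction to \emph{reachable} states is what licenses Lemma~\ref{lemma:invariantsHold} to supply the distinctness and variable-bound conditions, and these are exactly what is needed to land inside the well-founded $\succTrrestrict[\mathit{Vars}]$ rather than in the unrestricted, possibly non-well-founded $\succTr$.
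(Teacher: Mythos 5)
Your proposal is correct and is essentially the paper's own proof: both rely on the finiteness of $\mathit{Vars}$ with Lemma~\ref{lemma:wfTrailSuccRestricted}, on Lemma~\ref{lemma:rightarrowsucc} for $M_1 \succTr M_2$, on Lemma~\ref{lemma:invariantsHold} to supply \inv{distinct} and \inv{varsM} at both reachable states, and on the inverse-image part of Proposition~\ref{prop:wfInvImage} with the projection $(M,F)\mapsto M$. The only (immaterial) difference is that the paper first packages the inverse image as an intermediate ordering $\succ$ on states and then compares $\rightarrowd$ with it, whereas you apply the inverse-image criterion to $\rightarrowd$ in a single step.
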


\begin{proof}
  By Proposition \ref{prop:wfInvImage} it suffices to construct a well-founded
  ordering on the set of states $(M,F)$ such that $([\,], F_0)\rightarrowd^*
  (M,F)$ such that
  $$(M_1, F_1) \rightarrowd (M_2, F_2) \implies (M_1, F_1) \succ (M_2, F_2).$$
  One such ordering is $\succ$ defined by:
  $(M_1, F_1) \succ (M_2, F_2) \iff M_1 \succTrrestrict[\mathit{Vars}] M_2.$

  Indeed, since by Lemma \ref{lemma:wfTrailSuccRestricted},
  $\succTrrestrict[\mathit{Vars}]$ is well-founded, by Proposition
  \ref{prop:wfInvImage} (for a function mapping $(M, F)$ to $M$), $\succ$ is
  also a well-founded ordering.

  Let $(M_1, F_1)$ and $(M_2, F_2)$ be two states such that $([\,],
  F_0)\rightarrowd^*(M_1, F_1)$ and $(M_1, F_1) \rightarrowd (M_2, F_2)$. By
  Lemma \ref{lemma:invariantsHold} all the invariants hold for $(M_1,
  F_1)$. From $(M_1, F_1) \rightarrowd (M_2, F_2)$, by Lemma
  \ref{lemma:rightarrowsucc}, it follows that $M_1\succTr M_2$. Moreover, by
  Lemma \ref{lemma:invariantsHold}, all the invariants hold also for $(M_2,
  F_2)$, so ${\fn distinct}\ M_1$, ${\fn vars}\ M_1 \subseteq
  \mathit{Vars}$, ${\fn distinct}\ M_2$ and ${\fn vars}\ M_2 \subseteq
  \mathit{Vars}$. Ultimately, $M_1\succTrrestrict[\mathit{Vars}]
  M_2$.
\end{proof}

\subsubsection{Completeness}
\label{sec:DPLLcompleteness}

Completeness requires that all final states are outcome states.

\begin{thm}[Completeness for $\rightarrowd$]
\label{thm:completeness}
Each final state is either accepting or rejecting.
\end{thm}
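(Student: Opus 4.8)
The plan is to reason directly from the definition of a final state by a case analysis on whether the formula $F$ is false in the trail $M$. First I would fix an arbitrary final state $(M,F)$; by definition this means that none of the three rules ${\fn unitPropagate}$, ${\fn backtrack}$, ${\fn decide}$ is applicable to it. I would then split on the two exhaustive cases $M \nfalsifies F$ and $M \falsifies F$, and show that they yield the accepting and the rejecting outcome, respectively.

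In the case $M \nfalsifies F$, I would argue that the state is accepting. The definition of an accepting state requires exactly two things: that $M \nfalsifies F$ (which holds by the case assumption) and that ${\fn decide}$ cannot be applied, i.e., that there is no literal $l$ with ${\fn var}\ l \in \mathit{DecVars}$, $l \notin M$ and $\opp{l} \notin M$. Since the state is final, ${\fn decide}$ is in particular not applicable, so this second condition holds, and the state is accepting.

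In the case $M \falsifies F$, I would argue that the state is rejecting by using the non-applicability of ${\fn backtrack}$. By the definition of that rule, ${\fn backtrack}$ is applicable to $(M,F)$ precisely when $M \falsifies F$ and ${\fn decisions}\ M \neq [\,]$: the effect below the line, which replaces the trail by ${\fn prefixBeforeLastDecision}\ M\,@\,\nondecision{\opp{{\fn lastDecision}\ M}}$, is always realizable since these are total functions. Thus, as $M \falsifies F$ holds but ${\fn backtrack}$ does not apply at a final state, we must have ${\fn decisions}\ M = [\,]$, and together with $M \falsifies F$ this is exactly the definition of a rejecting state.

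The argument requires essentially no calculation; the only point I would take care to state explicitly is that, for each rule, the guard above the line is the sole obstacle to applicability, because the effect below the line is always well defined (in particular ${\fn prefixBeforeLastDecision}$ and ${\fn lastDecision}$ are defined as total functions). This totality observation is what lets me equate non-applicability of a rule at a final state with the negation of its guard and thereby read off the accepting/rejecting conditions. It is the only mild subtlety; the theorem then follows immediately from the exhaustive case split on whether $M \falsifies F$ holds.
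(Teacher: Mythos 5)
Your proof is correct and follows essentially the same route as the paper's: the same exhaustive case split on $M \falsifies F$ versus $M \nfalsifies F$, with non-applicability of \rulename{decide} giving the accepting case and non-applicability of \rulename{backtrack} forcing ${\fn decisions}\ M = [\,]$ in the rejecting case. Your explicit remark that the rules' effects are always realizable (so the guard is the only obstacle to applicability) is a careful articulation of a step the paper leaves implicit, but it does not change the argument.
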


\begin{proof}
Let $(M, F)$ be a final state. It holds that either $M\falsifies F$ or
$M\nfalsifies F$.

If $M\nfalsifies F$, since there is no state $(M',F')$ such that ${\fn
  decide}\ (M, F)\ (M', F')$ (as $(M,F)$ is a final state), there is no
literal $l$ such that ${\fn var}\ l\in \mathit{DecVars}$, $l\notin M$, and
$\opp{l}\notin M$, so $(M,F)$ is an accepting state.

If $M\falsifies F$, since there is no state $(M',F')$ such that ${\fn
  backtrack}\ (M, F)\ (M', F')$ (as $(M,F)$ is a final state), it holds
that ${\fn decisions}\ M = [\,]$, so $(M,F)$ is a rejecting state.
\end{proof}

Notice that from the proof it is clear that the basic search system consisting
only of the rules \rulename{decide} and \rulename{backtrack} is complete.

\subsubsection{Correctness}
\label{sec:DPLLcorrectness}

The theorems \ref{thm:soundness}, \ref{thm:termination}, and
\ref{thm:completeness} directly lead to the theorem about correctness of the
introduced transition system.\footnote{Correctness of the system can be proved
  with a weaker condition.  Namely, instead of the condition that all variables
  of the input formula belong to the set $\mathit{DecVars}$, it is sufficient
  that all {\em strong backdoor} variables belong to $\mathit{DecVars}$
  \cite{sathandbook}, but that weaker condition is not considered here.}

\begin{thm}[Correctness for $\rightarrowd$]
\label{thm:correctnes}
The given transition system is correct, i.e., if all variables of the input
formula belong to the set $\mathit{DecVars}$, then for any satisfiable input
formula, the system terminates in an accepting state, and for any unsatisfiable
formula, the system terminates in a rejecting state.
\end{thm}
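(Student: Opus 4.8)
The plan is to derive the correctness statement purely by assembling the three results that immediately precede it — Theorem~\ref{thm:termination} (termination), Theorem~\ref{thm:completeness} (completeness), and Theorem~\ref{thm:soundness} (soundness) — and then performing a single case split on whether $F_0$ is satisfiable. Since the hypothesis of the theorem is exactly $\mathit{DecVars} \supseteq {\fn vars}\ F_0$, the side condition required by the accepting-state clause of soundness is already available, and no new combinatorial work on the rules themselves is needed: all three ingredients are treated as black boxes.

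First I would establish that the run starting from $([\,], F_0)$ actually reaches a final state. Invoking Theorem~\ref{thm:termination} (which requires $\mathit{DecVars}$ to be finite, as is assumed for the solver parameter; note ${\fn vars}\ F_0$ is finite since $F_0$ is a finite list of finite clauses), the relation $\rightarrowd$ is well-founded on the set of states reachable from $([\,], F_0)$. Well-foundedness means there is no infinite chain $([\,],F_0) \rightarrowd s_1 \rightarrowd s_2 \rightarrowd \cdots$; consequently any maximal $\rightarrowd$-run from the initial state is finite and terminates in some state $(M,F)$ that has no $\rightarrowd$-successor, i.e.\ a \emph{final} state, with $([\,],F_0)\rightarrowd^*(M,F)$.

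Next I would classify this final state. By Theorem~\ref{thm:completeness}, $(M,F)$ is either accepting or rejecting, so it is an outcome state and the solver emits a result. Now split on the satisfiability of $F_0$. If $F_0$ is satisfiable, then $(M,F)$ cannot be rejecting: by the contrapositive of Theorem~\ref{thm:soundness}(2), a rejecting final state would force $\neg({\fn sat}\ F_0)$. Hence $(M,F)$ is accepting, and Theorem~\ref{thm:soundness}(1), using $\mathit{DecVars} \supseteq {\fn vars}\ F_0$, additionally certifies that $M$ is a model of $F_0$. Symmetrically, if $F_0$ is unsatisfiable, then $(M,F)$ cannot be accepting, since Theorem~\ref{thm:soundness}(1) (again under $\mathit{DecVars} \supseteq {\fn vars}\ F_0$) would then yield ${\fn sat}\ F_0$; therefore $(M,F)$ is rejecting. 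This gives precisely the two conclusions of the theorem.

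The only step carrying any content beyond routine bookkeeping is the passage from well-foundedness of $\rightarrowd$ to the existence of a \emph{reachable} final state, so I expect that to be the main (though mild) obstacle: one must argue that the absence of infinite forward chains forces every run from $([\,],F_0)$ to halt at a state with no successor, which is exactly the definition of a final state used by completeness. In the Isabelle development this is the standard characterization of well-founded relations via the minimal-element formulation (Proposition~\ref{prop:wfInvImage}); everything else is an application of the contrapositives of the two soundness clauses.
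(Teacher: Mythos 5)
Your proposal is correct and matches the paper's own argument: the paper proves this theorem simply by noting that Theorems~\ref{thm:soundness}, \ref{thm:termination}, and \ref{thm:completeness} directly combine to give correctness, which is exactly the assembly you carry out (termination yields a reachable final state, completeness classifies it, and the two soundness clauses, used contrapositively, match the classification to the satisfiability of $F_0$). Your explicit treatment of the finiteness of $\mathit{DecVars}$ and of the step from well-foundedness to the existence of a reachable final state merely spells out details the paper leaves implicit.
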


\section{Backjumping}
\label{sec:backjumping} 

In this section, we consider a transition system that replaces naive
chronological backtracking by more advanced nonchronological backjumping.

\subsection{States and Rules}
\label{subsec:backjump_rules}

The rules of the new system are given (as in Section \ref{sec:DPLLSearch}) in
the form of relations over states.

\begin{small}
\begin{defi}[Transition rules]  
\label{def:backjumping_system}
\ \\
${\fn unitPropagate}\ (M_1, F_1)\ (M_2, F_2) \iff$
\vspace{-1mm}
\begin{eqnarray*}
\exists c\ l.& &F_1\entailsClause c \ \wedge\ {\fn var}\ l \in \mathit{Vars} \ \wedge\ {\fn isUnit}\ c\ l\ M_1\ \wedge\ \\
& & M_2 = M_1\,@\,\nondecision{l} \ \wedge\ F_2 = F_1
\end{eqnarray*} \ \\
${\fn backjump}\ (M_1, F_1)\ (M_2, F_2) \iff$
\vspace{-1mm}
\begin{eqnarray*}
\exists\ c\ l\ P\ \mathit{level}.& &  F_1\entailsClause c \ \wedge\ {\fn var}\ l\in \mathit{Vars}\ \wedge\ \\
                     & & P = {\fn prefixToLevel}\ \mathit{level}\ M \ \wedge\ 0\leq \mathit{level} < {\fn currentLevel}\ M \ \wedge\ \\
                     & & {\fn isUnit}\ c\ l\ P\ \wedge\ \\
                     & & F_2 = F_1 \ \wedge\ M_2 = P\,@\,\nondecision{l}
\end{eqnarray*}

$
\hspace{-0.65cm}\begin{array}{rcll}
{\fn decide}\ (M_1, F_1)\ (M_2, F_2) &\iff& \exists l. &{\fn var}\ l\in \mathit{DecVars}\ \wedge\ l\notin M_1\ \wedge\ \opp{l}\notin M_1\ \wedge\\
& & & M_2 = M_1\,@\,\decision{l} \ \wedge\ F_2 = F_1
\end{array}
$
\end{defi}
\end{small}

In the following, the transition system described by the relation $\rightarrowb$
defined by these rules will be considered. 

The key difference between the new transition system and one built over the
rules given in Definition \ref{def:dpll-system} is the rule \rulename{backjump}
(that replaces the rule \rulename{backtrack}). The rule \rulename{decide} is
the same as the one given in Definition \ref{def:dpll-system}, while the rule
\rulename{unitPropagate} is slightly modified (i.e., its guard is relaxed).

The clause $c$ in the \rulename{backjump} rule is called a {\em backjump clause}
and the level $level$ is called a {\em backjump level}. The given definition of
the \rulename{backjump} rule is very general --- it does not specify how the
backjump clause $c$ is constructed and what prefix $P$ (i.e., the level $level$)
is chosen if there are several options. There are different strategies that
specify these choices and they are required for concrete implementations. The
conditions that $P$ is a prefix to a level (i.e., that $P$ is followed by a
decision literal in $M_1$) and that this level is smaller than the current level
are important only for termination. Soundness can be proved even with a weaker
assumption that $P$ is an arbitrary prefix of $M_1$. However, usually the
shortest possible prefix $P$ is taken.  The \rulename{backtrack} rule can be
seen as a special case of the \rulename{backjump} rule. In that special case,
the clause $c$ is built of opposites of all decision literals in the trail and
$P$ becomes ${\fn prefixBeforeLastDecision} \ M_1$.

Notice that the backjump clause $c$ does not necessarily belong to $F_1$ but
can be an arbitrary logical consequence of it. So, instead of $c\in F_1$, weaker
conditions $F_1\entailsClause c$ and ${\fn var}\ l \in \mathit{Vars}$ are used
in the \rulename{backjump} rule (the latter condition is important only for
termination). This weaker condition (inspired by the use of \sat{} engines in
\smt{} solvers) can be used also for the \rulename{unitPropagate} rule and leads
from the rule given in Definition \ref{def:dpll-system}, to its present version
(this change is not relevant for the system correctness).  The new version of
\rulename{unitPropagate} has much similarities with the \rulename{backjump} rule
--- the only difference is that the \rulename{backjump} rule always asserts the
implied literal to a proper prefix of the trail.

\medskip
\begin{exa}
\label{ex:backjump}
Let $F_0$ be the same formula as in Example \ref{ex:dpll}. One possible
$\rightarrowb$ trace is given below. Note that, unlike in the trace shown in
Example \ref{ex:dpll}, the decision literal $+4$ is removed from the trail
during backjumping, since it was detected to be irrelevant for the conflict,
resulting in a shorter trace. The deduction of backjump clauses (e.g., $[-2, -3,
-5]$) will be presented in Example \ref{ex:conflictanalysis}.

\noindent
\begin{tabular}{l|l}
rule                                                        & $M$                                                                                                                        \\ \hline
                                                            & $[\,]$                                                                                                                     \\ 
\rulename{decide} ($l = +1$),                               & $[\decision{+1}]$                                                                                                          \\
\rulename{unitPropagate} ($c = [-1, +2]$, $l = +2$)         & $[\decision{+1}, \nondecision{+2}]$                                                                                        \\
\rulename{unitPropagate} ($c = [-2, +3]$, $l = +3$)         & $[\decision{+1}, \nondecision{+2}, \nondecision{+3}]$                                                                      \\
\rulename{decide} ($l = +4$)                                & $[\decision{+1}, \nondecision{+2}, \nondecision{+3}, \decision{+4}]$                                                       \\
\rulename{decide} ($l = +5$)                                & $[\decision{+1}, \nondecision{+2}, \nondecision{+3}, \decision{+4}, \decision{+5}]$                                        \\
\rulename{unitPropagate} ($c = [-5, +6]$, $l = +6$)         & $[\decision{+1}, \nondecision{+2}, \nondecision{+3}, \decision{+4}, \decision{+5}, \nondecision{+6}]$                      \\
\rulename{unitPropagate} ($c = [-2, -5, +7]$, $l = +7$)     & $[\decision{+1}, \nondecision{+2}, \nondecision{+3}, \decision{+4}, \decision{+5}, \nondecision{+6}, \nondecision{+7}]$    \\
\rulename{backjump} ($c = [-2, -3, -5]$, $l = -5$)          & $[\decision{+1}, \nondecision{+2}, \nondecision{+3}, \nondecision{-5}]$                                                    \\
\rulename{unitPropagate} ($c = [-1, -3, +5, +7]$, $l = +7$) & $[\decision{+1}, \nondecision{+2}, \nondecision{+3}, \nondecision{-5}, \nondecision{+7}]$                                  \\
\rulename{backjump} ($c = [-1]$, $l = -1$)                  & $[\nondecision{-1}]$                                                                                                       \\
\rulename{decide} ($l = +2$)                                & $[\nondecision{-1}, \decision{+2}]$                                                                                        \\
\rulename{unitPropagate} ($c = [-2, +3]$, $l = +3$)         & $[\nondecision{-1}, \decision{+2}, \nondecision{+3}]$                                                                      \\
\end{tabular}

\begin{tabular}{l|l}
\rulename{decide} ($l = +4$)                                & $[\nondecision{-1}, \decision{+2}, \nondecision{+3}, \decision{+4}]$                                                       \\
\rulename{decide} ($l = +5$)                                & $[\nondecision{-1}, \decision{+2}, \nondecision{+3}, \decision{+4}, \decision{+5}]$                                        \\
\rulename{unitPropagate} ($c = [-5, +6]$, $l = +6$)         & $[\nondecision{-1}, \decision{+2}, \nondecision{+3}, \decision{+4}, \decision{+5}, \nondecision{+6}]$                      \\
\rulename{unitPropagate} ($c = [-2, -5, +7]$, $l = +7$)     & $[\nondecision{-1}, \decision{+2}, \nondecision{+3}, \decision{+4}, \decision{+5}, \nondecision{+6}, \nondecision{+7}]$    \\
\rulename{backjump} ($c = [-2, -3, -5])$                    & $[\nondecision{-1}, \decision{+2}, \nondecision{+3}, \nondecision{-5}]$                                                   \\
\rulename{decide} ($l = +4$)                                & $[\nondecision{-1}, \decision{+2}, \nondecision{+3}, \nondecision{-5}, \decision{+4}]$                                     \\
\rulename{decide} ($l = +6$)                                & $[\nondecision{-1}, \decision{+2}, \nondecision{+3}, \nondecision{-5}, \decision{+4}, \decision{+6}]$                      \\
\rulename{unitPropagate} ($c = [-3, -6, -7]$, $l = -7$)     & $[\nondecision{-1}, \decision{+2}, \nondecision{+3}, \nondecision{-5}, \decision{+4}, \decision{+6}, \nondecision{-7}]$    \\
\end{tabular}
\end{exa}
\bigskip

\subsection{Backjump Levels}

In Definition \ref{def:backjumping_system}, for the \rulename{backjump} rule to
be applicable, it is required that there is a level of the trail such that the
backjump clause is unit in the prefix to that level. The following definition
gives a stronger condition (used in modern \sat{} solvers) for a level ensuring
applicability of the \rulename{backjump} rule to that level.

\begin{defi}[Backjump level]
  A {\em backjump level} for the given backjump clause $c$ (false in $M$) is a
  level $level$ that is strictly less than the level of the last falsified
  literal from $c$, and greater or equal to the levels of the remaining literals
  from $c$:
  \begin{eqnarray*}
  {\fn isBackjumpLevel}\ level\ l\ c\ M\ & \iff & M\falsifies c \ \wedge\  \opp{l} = {\fn lastAssertedLiteral}\ \opp{c}\ M\ \wedge\ \\
  & & 0\leq level < {\fn level}\ \opp{l}\ M\ \wedge\ \\
  & & \forall\ l'.\ l'\in c\setminus l \implies {\fn level}\ \opp{l'}\ M  \leq level
  \end{eqnarray*}
\end{defi}

Using this definition, the \rulename{backjump} rule can be defined in a more 
concrete and more operational way.

\begin{small}
$$
\begin{array}{rll}
{\fn backjump}'\ (M_1, F_1)\ (M_2, F_2) &\iff& \exists c\ l\ level.\ F_1\entailsClause c \ \wedge\ {\fn var}\ l\in \mathit{Vars}\ \wedge\ \\
& & {\fn isBackjumpLevel}\ level\ l\ c\ M_1\ \wedge\ \\
& & M_2 = ({\fn prefixToLevel}\ level\ M_1)\,@\,\nondecision{l} \ \wedge\ F_2 = F_1
\end{array}
$$
\end{small}

Notice that, unlike in Definition \ref{def:backjumping_system}, it is required
that the backjump clause is false, so this new rule is applicable only in 
conflict situations.

It still remains unspecified how the clause $c$ is constructed. Also, it is
required to check whether the clause $c$ is false in the current trail $M$ and
implied by the current formula $F$. In Section \ref{sec:conflictAnalysis} it
will be shown that if a clause $c$ is built during a conflict analysis process,
these conditions will hold by construction and so it will not be necessary to
check them explicitly. Calculating the level of each literal from $c$ (required
for the backjump level condition) will also be avoided.

The following lemmas connect the \rulename{backjump} and \rulename{backjump}'
rules.

\begin{lem}
\label{lemma:BackjumpImplementationEnsuresUnitClause}
If:
\begin{enumerate}[\em(1)]
\packitems
\item ${\fn consistent}\ M$ (i.e., \inv{consistent} holds),
\item ${\fn unique}\ M$ (i.e., \inv{unique} holds),
\item ${\fn isBackjumpLevel}\ level\ l\ c\ M$,
\end{enumerate}
then ${\fn isUnit}\ c\ l\ ({\fn prefixToLevel}\ level\ M)$.
\end{lem}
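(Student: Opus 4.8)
The plan is to unfold ${\fn isUnit}\ c\ l\ P$ for $P = {\fn prefixToLevel}\ level\ M$ into its four defining conjuncts, namely $l \in c$, $P \nvDash l$, $P \nfalsifies l$, and $P \falsifies (c\setminus l)$, and to discharge each one using the backjump-level hypothesis together with consistency of $M$. All four reduce to membership or non-membership of a single literal in the prefix $P$, so the argument is essentially bookkeeping about which literals survive the truncation to level $level$.

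For $l\in c$: the hypothesis ${\fn isBackjumpLevel}\ level\ l\ c\ M$ gives $\opp{l} = {\fn lastAssertedLiteral}\ \opp{c}\ M$, and since the last asserted literal of $\opp{c}$ must itself occur in $\opp{c}$, we have $\opp{l}\in\opp{c}$, hence $l\in c$. For $P\nvDash l$, I would show $l\notin P$: from $M\falsifies c$ and $l\in c$ we get $\opp{l}\in M$, so by ${\fn consistent}\ M$ we have $l\notin M$, and as $P$ is a prefix of $M$ it follows that $l\notin P$.

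The two remaining conjuncts rest on comparing the level of a literal in $M$ with the truncation threshold. For $P\nfalsifies l$, i.e.\ $\opp{l}\notin P$, the backjump-level hypothesis supplies $level < {\fn level}\ \opp{l}\ M$, so $\opp{l}$ has level exceeding $level$ and therefore cannot belong to the prefix ${\fn prefixToLevel}\ level\ M$, which retains only elements of level at most $level$. For $P\falsifies(c\setminus l)$, I would take an arbitrary $l'\in c$ with $l'\neq l$; then $M\falsifies c$ yields $\opp{l'}\in M$, while the backjump-level hypothesis gives ${\fn level}\ \opp{l'}\ M \leq level$, so $\opp{l'}$ is kept in $P$ and thus $P\falsifies l'$.

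The main obstacle I anticipate is not any of these four steps individually but the two facts tying ${\fn level}$ to ${\fn prefixToLevel}$ that the last two invoke: that an element of level at most $level$ is preserved in ${\fn prefixToLevel}\ level\ M$, while an element of level strictly greater than $level$ is discarded. These depend on the monotonicity of levels along the trail and on the uniqueness invariant \inv{unique} (which ensures that ${\fn level}\ l\ M$, defined through the first occurrence of $l$, pins down exactly the position at which $l$ is cut off by the truncation); I would expect to invoke them as previously established auxiliary lemmas on ${\fn prefixToLevel}$ and ${\fn level}$ rather than re-derive them here.
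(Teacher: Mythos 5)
Your proposal is correct and takes the approach the paper itself relies on (the paper states this lemma without an explicit proof in the main text, deferring it to the appendix of the full version and the Isabelle formalization): unfold ${\fn isUnit}\ c\ l\ ({\fn prefixToLevel}\ level\ M)$ into its four conjuncts and discharge them from $M \falsifies c$, $\opp{l} = {\fn lastAssertedLiteral}\ \opp{c}\ M$, the level bounds in ${\fn isBackjumpLevel}$, consistency, and the auxiliary facts relating ${\fn level}$ to ${\fn prefixToLevel}$ under ${\fn unique}\ M$. Your identification of those two prefix/level facts as the only nontrivial ingredients, and of where \inv{consistent} and \inv{unique} are each actually used, matches the formal development.
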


\begin{lem}
\label{lemma:backjump_backjump'}
If a state $(M, F)$ satisfies the invariants and if ${\fn backjump}'\ (M, F)$
$(M', F')$, then ${\fn backjump}\ (M, F)\ (M', F')$.
\end{lem}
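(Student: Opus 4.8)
The plan is to prove the implication by instantiating the existential witnesses of the \rulename{backjump} rule directly with the witnesses supplied by \rulename{backjump}'. Concretely, assume ${\fn backjump}'\ (M, F)\ (M', F')$ holds; this yields a clause $c$, a literal $l$, and a level $\mathit{level}$ with $F \entailsClause c$, ${\fn var}\ l \in \mathit{Vars}$, ${\fn isBackjumpLevel}\ \mathit{level}\ l\ c\ M$, $F' = F$, and $M' = ({\fn prefixToLevel}\ \mathit{level}\ M)\,@\,\nondecision{l}$. I would use the same $c$, $l$, and $\mathit{level}$ to witness \rulename{backjump}, and set $P = {\fn prefixToLevel}\ \mathit{level}\ M$.

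With this choice, several conjuncts in the definition of \rulename{backjump} hold immediately: $F \entailsClause c$ and ${\fn var}\ l \in \mathit{Vars}$ are taken verbatim from \rulename{backjump}', the equation $P = {\fn prefixToLevel}\ \mathit{level}\ M$ holds by the definition of $P$, and both $F' = F$ and $M' = P\,@\,\nondecision{l}$ match the output equations of \rulename{backjump}'. Only two conditions remain to be established: the bound $0 \le \mathit{level} < {\fn currentLevel}\ M$ and the unit-clause condition ${\fn isUnit}\ c\ l\ P$.

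For the unit-clause condition, the work is already done by Lemma \ref{lemma:BackjumpImplementationEnsuresUnitClause}: since $(M,F)$ satisfies the invariants, ${\fn consistent}\ M$ and the relevant distinctness/uniqueness condition (\inv{unique}) hold, and together with ${\fn isBackjumpLevel}\ \mathit{level}\ l\ c\ M$ they yield ${\fn isUnit}\ c\ l\ ({\fn prefixToLevel}\ \mathit{level}\ M)$, which is exactly ${\fn isUnit}\ c\ l\ P$. For the level bound, $0 \le \mathit{level}$ is part of ${\fn isBackjumpLevel}$, and the same predicate gives $\mathit{level} < {\fn level}\ \opp{l}\ M$; it then suffices to note that the decision level of any literal is at most ${\fn currentLevel}\ M$, so ${\fn level}\ \opp{l}\ M \le {\fn currentLevel}\ M$ and hence $\mathit{level} < {\fn currentLevel}\ M$.

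The only genuinely non-bookkeeping step is this level bound, and the main obstacle is the auxiliary monotonicity fact that ${\fn level}\ l'\ M \le {\fn currentLevel}\ M$ for every literal $l'$; this holds because ${\fn level}$ counts the decision literals up to (and including) $l'$ while ${\fn currentLevel}$ counts all decisions in $M$, so the former can never exceed the latter. Everything else is direct witness matching, so once Lemma \ref{lemma:BackjumpImplementationEnsuresUnitClause} and this monotonicity observation are in hand, the implication closes.
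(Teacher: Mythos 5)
Your proposal is correct and follows essentially the same route the paper intends: instantiate \rulename{backjump} with the witnesses from \rulename{backjump}', discharge the unit-clause condition via Lemma \ref{lemma:BackjumpImplementationEnsuresUnitClause} (using \inv{consistent} and \inv{unique} from the state invariants), and obtain $0 \le \mathit{level} < {\fn currentLevel}\ M$ from ${\fn isBackjumpLevel}$ together with the monotonicity fact ${\fn level}\ \opp{l}\ M \le {\fn currentLevel}\ M$ (which holds since $\opp{l} = {\fn lastAssertedLiteral}\ \opp{c}\ M$ occurs in $M$ and the decisions in a prefix of $M$ form a prefix of ${\fn decisions}\ M$). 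Nothing is missing.
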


Because of the very close connection between the relations \rulename{backjump}
and \rulename{backjump}', we will not explicitly define two different transition
relations $\rightarrowb$. Most of the correctness arguments apply to both these
relations, and hence only differences will be emphasized.

Although there are typically many levels satisfying the backjump level
condition, (i.e., backjumping can be applied for each level between the level of
the last falsified literal from $c$ and the levels of the remaining literals
from $c$), usually it is applied to the lowest possible level, i.e., to the
level that is a backjump level such that there is no smaller level that is also
a backjump level.  The following definition introduces formally the notion of a
minimal backjump level.

\label{def:minimalBackjumpLevel}
\begin{defi}[${\fn isMinimalBackjumpLevel}$]
${\fn isMinimalBackjumpLevel}\ \mathit{level}\ l\ c\ M \iff $
  $${\fn isBackjumpLevel}\ \mathit{level}\ l\ c\ M\ \wedge
  (\forall\ \mathit{level}' < \mathit{level}.\ \neg {\fn
    isBackjumpLevel}\ \mathit{level}'\ l\ c\ M)$$
\end{defi}

Although most solvers use minimal levels when backjumping, this will be formally
required only for systems introduced in Section \ref{sec:restart}.

\subsection{Properties}
\label{subsec:backjumping_properties}

As in Section \ref{sec:DPLLSearch}, local properties of the transition 
rules in the form of certain invariants are used in proving properties of 
the transition system. 

\subsubsection{Invariants}
\label{subsec:backjump_invariants}

The invariants required for proving soundness, termination, and completeness of
the new system are the same as the invariants listed in Section
\ref{sec:DPLLSearch}. So, it is required to prove that the rules
\rulename{backjump} and the modified \rulename{unitPropagate} preserve all the
invariants. Therefore, Lemma \ref{lemma:invariantsHold} has to be updated to
address new rules and its proof has to be modified to reflect the changes in the
definition of the transition relation.

\subsubsection{Soundness and Termination}

The soundness theorem (Theorem \ref{thm:soundness}) has to be updated to address
the new rules, but its proof remains analogous to the one given in Section
\ref{sec:DPLLSearch}.

The termination theorem (Theorem \ref{thm:termination}) also has to be updated,
and its proof again remains analogous to the one given in \ref{sec:DPLLSearch}.
However, in addition to Lemma \ref{lemma:rightarrowsucc}, the following lemma
has to be used.

\begin{lem}
\label{lemma:backjumpSucc}
If ${\fn backjump}\ (M_1, F_1)\ (M_2, F_2)$, then $M_1 \succTr M_2$.
\end{lem}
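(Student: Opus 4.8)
The plan is to follow exactly the template of the \rulename{backtrack} case in the proof of Lemma \ref{lemma:rightarrowsucc}: since \rulename{backjump} replaces a decision literal (together with everything asserted after it) by a single implied literal, the two trails agree on a common prefix and then $M_1$ carries a decision literal at precisely the position where $M_2$ carries an implied one. This is exactly the situation covered by the second disjunct of the lexicographic extension (Definition \ref{def:lexext}), since $\succlit$ (Definition \ref{def:l-ordering}) orders every decision literal above every implied literal.

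First I would unfold the hypothesis ${\fn backjump}\ (M_1,F_1)\ (M_2,F_2)$ using Definition \ref{def:backjumping_system} to obtain a clause $c$, a literal $l$, a prefix $P$, and a level $\mathit{level}$ with $P = {\fn prefixToLevel}\ \mathit{level}\ M_1$, $0 \le \mathit{level} < {\fn currentLevel}\ M_1$, and $M_2 = P\,@\,\nondecision{l}$. The crux is the structural observation that $P$ is a \emph{proper} prefix of $M_1$ whose immediately following element is a decision literal. Indeed, ${\fn prefixToLevel}\ \mathit{level}\ M_1$ retains exactly those elements of $M_1$ whose level is at most $\mathit{level}$, and $\mathit{level} < {\fn currentLevel}\ M_1$ guarantees that $M_1$ contains at least $\mathit{level}+1$ decision literals; hence the first element of $M_1$ not lying in $P$ exists and has level $\mathit{level}+1$. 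Since the level increases (by one) only at decision literals and stays constant at implied ones, that first element must itself be a decision literal. Writing it as $\decision{d}$, we obtain $M_1 = P\,@\,\decision{d}\,@\,M'$ for some (possibly empty) suffix $M'$.

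With this decomposition in hand I would instantiate the second disjunct of $\lex{\succlit}$ with common prefix $r := P$, distinguished elements $a := \decision{d}$ and $b := \nondecision{l}$, and remainders $s' := M'$ and $t' := [\,]$. The side condition $a \succlit b$ holds immediately from Definition \ref{def:l-ordering}, because ${\fn isDecision}\ \decision{d}$ and $\neg({\fn isDecision}\ \nondecision{l})$. Hence $M_1\,\lex{\succlit}\,M_2$, which is $M_1 \succTr M_2$ by Definition \ref{def:M-ordering}. The same argument applies verbatim to the \rulename{backjump}' variant, since ${\fn isBackjumpLevel}\ \mathit{level}\ l\ c\ M_1$ forces $\mathit{level} < {\fn level}\ \opp{l}\ M_1 \le {\fn currentLevel}\ M_1$, so the decomposition of $M_1$ is obtained in the same way.

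The main obstacle is the structural lemma isolated above --- that the element of $M_1$ directly after ${\fn prefixToLevel}\ \mathit{level}\ M_1$ is a decision literal whenever $\mathit{level} < {\fn currentLevel}\ M_1$. Everything else is a direct instantiation of the lexicographic-extension definition, mirroring the \rulename{backtrack} case. I expect this fact to require a small auxiliary lemma about ${\fn prefixToLevel}$ and the ${\fn level}$ function (namely that levels are non-decreasing along a trail and increase by exactly one precisely at decision literals), proved by induction on $M_1$. No new well-foundedness reasoning is needed here, as that is supplied separately by Lemma \ref{lemma:wfTrailSuccRestricted}.
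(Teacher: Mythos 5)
Your proof is correct and follows essentially the same route as the paper: the paper factors the key step into Lemma~\ref{lemma:lexLessPrefixToLevel} (that $M \succTr ({\fn prefixToLevel}\ \mathit{level}\ M)\,@\,\nondecision{l}$ whenever $0 \leq \mathit{level} < {\fn currentLevel}\ M$), and your argument simply inlines the proof of that lemma --- the structural observation that the element of $M_1$ immediately following the prefix must be a decision literal, followed by instantiating the second disjunct of the lexicographic extension.
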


This proof relies on the following property of the relation
$\succTr$.

\begin{lem}
\label{lemma:lexLessPrefixToLevel}
If $M$ is a trail and $P = {\fn prefixToLevel}\ \mathit{level}\ M$, such that
$0\leq \mathit{level} < {\fn currentLevel}\ M$, then $M\ \succTr\ P\ @\ \nondecision{l}$.
\end{lem}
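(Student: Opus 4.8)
The plan is to derive $M \succTr P\,@\,\nondecision{l}$ directly from the definition of $\succTr$ as the lexicographic extension $\lex{\succlit}$ (Definition~\ref{def:lexext}), using its \emph{second} disjunct (the ``first genuine difference'' case) rather than the proper-prefix case. For this I first exhibit the correct decomposition of $M$. I claim that, because $0 \leq \mathit{level} < {\fn currentLevel}\ M$, the prefix $P = {\fn prefixToLevel}\ \mathit{level}\ M$ is a proper prefix of $M$ and the element of $M$ immediately following $P$ is a \emph{decision} literal. Granting this, write $M = P\,@\,\decision{d}\,@\,R$ for some decision literal $\decision{d}$ and some suffix $R$. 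Then $M$ and $P\,@\,\nondecision{l}$ share the common prefix $P$, and at the next position $M$ carries $\decision{d}$ while $P\,@\,\nondecision{l}$ carries $\nondecision{l}$.

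Since ${\fn isDecision}\ \decision{d}$ holds and ${\fn isDecision}\ \nondecision{l}$ does not, Definition~\ref{def:l-ordering} gives $\decision{d} \succlit \nondecision{l}$. Instantiating the second disjunct of $\lex{\succlit}$ with $r := P$, $a := \decision{d}$, $s' := R$, $b := \nondecision{l}$, and $t' := [\,]$ then yields $M \lex{\succlit} (P\,@\,\nondecision{l})$, i.e. $M \succTr P\,@\,\nondecision{l}$, as required. Note that this comparison is insensitive to whether $P$ is empty (the case $\mathit{level} = 0$, with $r := [\,]$) and places no constraint on $l$, matching the generality of the statement.

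The hard part is the structural claim about ${\fn prefixToLevel}$. First I would record the monotonicity of levels along a trail: reading $M$ from left to right, ${\fn level}$ is nondecreasing, increasing by exactly one precisely at each decision literal and staying constant across implied literals. Consequently the value $\mathit{level}+1$ is first attained exactly at the $(\mathit{level}+1)$-th decision literal of $M$, which exists because ${\fn currentLevel}\ M$ (the number of decisions in $M$) exceeds $\mathit{level}$. Every element strictly preceding that decision literal has at most $\mathit{level}$ decisions before it, hence level $\leq \mathit{level}$, so it belongs to $P$; that decision literal and everything after it have level $\geq \mathit{level}+1$, so they do not. Since the definition of ${\fn prefixToLevel}$ already guarantees that $P$ is an initial segment of $M$, this shows $P$ is exactly the portion of $M$ before the $(\mathit{level}+1)$-th decision, so that decision literal is the element of $M$ immediately following $P$, establishing the decomposition $M = P\,@\,\decision{d}\,@\,R$ with $\decision{d}$ a decision literal.

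I expect this monotonicity-of-levels fact --- together with the precise characterization of ${\fn prefixToLevel}\ \mathit{level}\ M$ in terms of the position of the $(\mathit{level}+1)$-th decision literal --- to be the only nontrivial ingredient. In the formalization it would be isolated as an auxiliary list-induction lemma on trails, after which the lexicographic comparison above is immediate and the global orientation questions about $\succTr$ (the proper-prefix disjunct used for \rulename{decide}) play no role here.
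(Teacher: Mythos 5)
Your proof is correct and follows the canonical route for this lemma: establishing that $P = {\fn prefixToLevel}\ \mathit{level}\ M$ is a proper prefix of $M$ whose immediately following element is a decision literal (via monotonicity of the decision count along the trail and the existence of the $(\mathit{level}+1)$-th decision), and then discharging $M \succTr P\,@\,\nondecision{l}$ by the second disjunct of the lexicographic extension, since a decision literal dominates a non-decision literal under $\succlit$. This matches the paper's argument in both the decomposition and the key comparison, so there is nothing to add.
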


\subsubsection{Completeness and Correctness}

Completeness of the system is proved partly in analogy with the completeness
proof of the system described in Section \ref{sec:DPLLSearch}, given in Theorem
\ref{thm:completeness}. When $(M, F)$ is a final state and $M \nfalsifies F$,
the proof remains the same as for Theorem \ref{thm:completeness}. When $(M, F)$
is a final state and $M \falsifies F$, for the new system it is not trivial that
this state is a rejecting state (i.e., it is not trivial that ${\fn
  decisions}\ M = [\,]$). Therefore, it has to be proved, given that the
invariants hold, that if backjumping is not applicable in a conflict situation
(when $M\falsifies F$), then ${\fn decisions}\ M = [\,]$ (i.e., if
${\fn decisions}\ M \neq [\,]$, then \rulename{backjump}' is applicable, and so
is \rulename{backjump}). The proof relies on the fact that a backjump clause
{\em may be} constructed only of all decision literals. This is the simplest way
to construct a backjump clause $c$ and in this case backjumping degenerates to
backtracking. The clause $c$ constructed in this way meets sufficient (but, of
course, not necessary) conditions for the applicability of \rulename{backjump'}
(and, consequently, by Lemma \ref{lemma:backjump_backjump'}, for the
applicability of \rulename{backjump}).

\begin{lem}
\label{lemma:completeness_reject}
If for a state $(M,F)$ it holds that:
\begin{enumerate}[\em(1)]
\packitems
\item ${\fn consistent}\ M$ (i.e., \inv{consistent} holds),
\item ${\fn unique}\ M$ (i.e., \inv{unique} holds),
\item $\forall l.\ l \in M \implies F\ @\ ({\fn decisionsTo}\ l\ M)
  \entailsLiteral l$ (i.e., \inv{impliedLits} holds),
\item ${\fn vars}\ M \subseteq \mathit{Vars}$ (i.e., \inv{varsM} holds),
\item $M\falsifies F$,
\item ${\fn decisions}\ M \neq [\,]$,
\end{enumerate}
then there is a state $(M',F')$ such that ${\fn backjump}'\ (M, F)\ (M',
F')$.
\end{lem}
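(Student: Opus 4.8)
The plan is to make \rulename{backjump}' applicable by explicitly exhibiting a witnessing backjump clause, unit literal, and level. Following the hint in the surrounding text, I would take the clause built from the opposites of all decision literals currently on the trail: writing ${\fn decisions}\ M = [l_1, \ldots, l_k]$ (with $k \geq 1$ guaranteed by hypothesis~(6)), I set the candidate backjump clause to $c = \opp{l_1}\vee\ldots\vee\opp{l_k}$, the unit literal to $l = \opp{l_k}$, and the level to $\mathit{level} = k-1 = {\fn currentLevel}\ M - 1$. With these choices $({\fn prefixToLevel}\ \mathit{level}\ M)\,@\,\nondecision{l}$ is exactly the trail produced by the old \rulename{backtrack} rule, so this construction makes backjumping degenerate to backtracking.

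First I would establish the entailment $F \entailsClause c$. Since hypotheses~(3) and~(5) supply \inv{impliedLits} and $M\falsifies F$, Lemma~\ref{lemma:InvariantImpliedLiteralsAndFormulaFalse} yields $\neg({\fn sat}\ (F\,@\,{\fn decisions}\ M))$. Reading this as unsatisfiability of $F$ together with the unit clauses $l_1, \ldots, l_k$, it follows that every model of $F$ must falsify some $l_i$, hence make some $\opp{l_i}$ true; therefore $c$ is true in every model of $F$, i.e., $F \entailsClause c$. The side condition ${\fn var}\ l \in \mathit{Vars}$ is immediate: $l = \opp{l_k}$ with ${\fn var}\ \opp{l_k} = {\fn var}\ l_k$, and $l_k \in M$, so \inv{varsM} (hypothesis~(4)) applies.

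The remaining and most delicate step is to verify the four conjuncts of ${\fn isBackjumpLevel}\ \mathit{level}\ l\ c\ M$. For $M\falsifies c$: each $\opp{l_i}$ is false in $M$ because the decision literal $l_i$ lies in $M$. For $\opp{l} = {\fn lastAssertedLiteral}\ \opp{c}\ M$: here $\opp{c} = \{l_1,\ldots,l_k\}$, and since the decision literals occur in the trail in their listed order, $l_k$ is the last of them asserted in $M$, matching $\opp{l} = l_k$; consistency and \inv{unique} (hypotheses~(1) and~(2)) guarantee this literal is unambiguous. For the level bounds I would use that the $i$-th decision literal satisfies ${\fn level}\ l_i\ M = i$: then ${\fn level}\ \opp{l}\ M = {\fn level}\ l_k\ M = k$, so $0 \leq \mathit{level} = k-1 < k$, and for every $l' \in c\setminus l$, writing $l' = \opp{l_i}$ with $i<k$, we get ${\fn level}\ \opp{l'}\ M = i \leq k-1 = \mathit{level}$.

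The main obstacle is precisely this last step: matching the last asserted literal of $\opp{c}$ to the last decision $l_k$ and pinning down the decision levels, which is where distinctness (\inv{unique}) and the strict increase of decision levels along the trail are essential. Once all four conjuncts are checked, setting $M' = ({\fn prefixToLevel}\ \mathit{level}\ M)\,@\,\nondecision{l}$ and $F' = F$ exhibits a state with ${\fn backjump}'\ (M,F)\ (M',F')$, completing the proof.
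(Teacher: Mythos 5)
Your proposal is correct and follows essentially the same route as the paper: the backjump clause is built from the opposites of all decision literals (so that backjumping degenerates to backtracking), its entailment is obtained from Lemma \ref{lemma:InvariantImpliedLiteralsAndFormulaFalse}, and the backjump level is ${\fn currentLevel}\ M - 1$. The only detail worth spelling out is that, since models here are \emph{partial} valuations, the step from $\neg({\fn sat}\ (F\,@\,{\fn decisions}\ M))$ to ``every model of $F$ makes some $\opp{l_i}$ true'' requires extending a putative model of $F$ by the mutually consistent decision literals and deriving a contradiction, which is exactly what your hypotheses (1) and (2) make possible.
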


To ensure applicability of Lemma \ref{lemma:completeness_reject}, the new
version of the completeness theorem (Theorem \ref{thm:completeness}) requires
that the invariants hold in the current state.  Since, by Lemma
\ref{lemma:backjump_backjump'}, ${\fn backjump}'\ (M, F)\ (M', F')$ implies
${\fn backjump}\ (M, F)$ $(M', F')$, the following completeness theorem holds
for both transition systems presented in this section (using the rule
\rulename{backjump'} or the rule \rulename{backjump}).

\begin{thm}[Completeness for $\rightarrowb$]
If $([\,], F_0) \rightarrowb^* (M, F)$, and $(M, F)$ is a final
state, then $(M, F)$ is either accepting or rejecting.
\end{thm}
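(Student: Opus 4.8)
The plan is to follow the case split used in the completeness proof for $\rightarrowd$ (Theorem \ref{thm:completeness}), replacing only the step that rules out remaining decision literals in a conflict state. First I would appeal to the backjumping analogue of Lemma \ref{lemma:invariantsHold} (the updated invariant-preservation lemma announced in Section \ref{subsec:backjump_invariants}): since $([\,], F_0) \rightarrowb^* (M, F)$, all the invariants --- in particular \inv{consistent}, \inv{unique}, \inv{impliedLits}, and \inv{varsM} --- hold in the reachable state $(M, F)$. I would then distinguish the two possibilities $M \nfalsifies F$ and $M \falsifies F$, showing that the former yields an accepting state and the latter a rejecting one.

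When $M \nfalsifies F$, the argument is identical to the one for $\rightarrowd$: the rule \rulename{decide} is the same in both systems, so finality of $(M, F)$ means that no literal $l$ with ${\fn var}\ l \in \mathit{DecVars}$ is undefined in $M$, and therefore $(M, F)$ is an accepting state.

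The interesting case is $M \falsifies F$. Unlike for \rulename{backtrack}, applicability of \rulename{backjump} does not follow immediately from ${\fn decisions}\ M \neq [\,]$, so I would argue by contradiction: suppose $(M, F)$ is final yet ${\fn decisions}\ M \neq [\,]$. Since the invariants hold and $M \falsifies F$, all hypotheses of Lemma \ref{lemma:completeness_reject} are satisfied, giving a state $(M', F')$ with ${\fn backjump}'\ (M, F)\ (M', F')$; by Lemma \ref{lemma:backjump_backjump'} this in turn yields ${\fn backjump}\ (M, F)\ (M', F')$. In either transition system, $(M, F)$ then has a $\rightarrowb$-successor, contradicting finality. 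Hence ${\fn decisions}\ M = [\,]$ and $(M, F)$ is a rejecting state.

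The whole weight of the argument rests on Lemma \ref{lemma:completeness_reject}, which is exactly where the main obstacle lies: one must show that any conflict state carrying a decision literal always admits some backjump step. The underlying idea is the existence of a usable backjump clause even when no clever one has been deduced --- concretely the clause formed from the opposites of all decision literals, which is entailed by $F$ by \inv{impliedLits} and for which a backjump level exists, so that \rulename{backjump'} applies (with backjumping degenerating to backtracking). Because Lemma \ref{lemma:backjump_backjump'} connects \rulename{backjump'} to \rulename{backjump}, this single step establishes completeness uniformly for both the \rulename{backjump} and the \rulename{backjump'} variants of $\rightarrowb$.
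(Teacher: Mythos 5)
Your proposal is correct and follows essentially the same route as the paper's proof: invoke the updated invariant-preservation lemma for reachable states, split on $M \nfalsifies F$ versus $M \falsifies F$, handle the first case exactly as for $\rightarrowd$, and in the conflict case derive a contradiction with finality from Lemma \ref{lemma:completeness_reject} combined with Lemma \ref{lemma:backjump_backjump'}. The only cosmetic difference is that the paper states the \rulename{backjump'}-to-\rulename{backjump} transfer in the text preceding the theorem rather than inside the proof, which is exactly where you place it.
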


\begin{proof}
Let $(M, F)$ be a final state. By Lemma \ref{lemma:invariantsHold}, all
invariants hold in $(M, F)$. Also, it holds that either $M\falsifies F$ or
$M\nfalsifies F$.

If $M\nfalsifies F$, since \rulename{decide} is not applicable, $(M,F)$ is an
accepting state.

If $M\falsifies F$, assume that ${\fn decisions}\ M \neq [\,]$. By Lemma
\ref{lemma:completeness_reject}, there is a state $(M',F')$ such that ${\fn
  backjump}'\ (M, F)\ (M', F')$. This contradicts the assumption that $(M,F)$ is
a final state.  Therefore, ${\fn decisions}\ M = [\,]$, and since $M\falsifies
F$, $(M, F)$ is a rejecting state.
\end{proof}

Correctness of the system is a consequence of soundness, termination,
and completeness, in analogy with Theorem \ref{thm:correctnes}.

\section{Learning and Forgetting}
\label{sec:learningForgetting}

In this section we briefly describe a system obtained from the system introduced
in Section \ref{sec:backjumping} by adding two new transition rules. These rules
will have a significant role in more complex systems discussed in the following
sections.

\subsection{States and Rules}
\label{subsec:learn_rules}

The relation $\rightarrowb$ introduced in Section \ref{sec:backjumping}
is extended by the two following transition rules (introduced in the form 
of relations over states).

\begin{small}
\begin{defi}[Transition rules] \hfill
\begin{eqnarray*}
{\fn learn}\ (M_1, F_1)\ (M_2, F_2) & \iff & \exists\ c\ .\quad F_1\entailsClause c  \ \wedge\ {\fn vars}\ c\subseteq \mathit{Vars}\ \wedge\ \\
& & \quad \qquad F_2 = F_1\ @\ c \ \wedge\ M_2 = M_1
\end{eqnarray*}
\vspace{-0.25cm}
\begin{eqnarray*}
{\fn forget}\ (M_1, F_1)\ (M_2, F_2) & \iff & \exists\ c\ .\ F_1\setminus c \entailsClause c  \ \wedge\ \\
& & \qquad F_2 = F_1\setminus c \ \wedge\ M_2 = M_1
\end{eqnarray*}
The extended transition system will be denoted by $\rightarrowl$.
\end{defi}
\end{small}

The \rulename{learn} rule is defined very generally. It is not specified how to
construct the clause $c$ --- typically, only clauses resulting from the conflict
analysis process (Section \ref{sec:conflictAnalysis}) are learnt. This is the
only rule so far that changes $F$, but the condition $F\entailsClause c$ ensures
that it always remains logically equivalent to the initial formula $F_0$. The
condition ${\fn vars}\ c\subseteq \mathit{Vars}$ is relevant only for ensuring
termination.

The \rulename{forget} rule changes the formula by removing a clause that is
implied by all other clauses (i.e., is redundant). It is also not specified
how this clause $c$ is chosen.

\medskip
\begin{exa}
\label{ex:learn}
Let $F_0$ be a formula from Example \ref{ex:dpll}. A possible $\rightarrowl$
trace is given by (note that, unlike in the trace shown in Example
\ref{ex:backjump}, a clause $[-1, -2, -3]$ is learnt and used afterwards for
unit propagation in another part of the search tree, eventually leading to a
shorter trace):

\noindent
\resizebox{\linewidth}{!}{
\begin{tabular}{l||l|l}
rule                                                        & $M$                                                                                                                        & $F$\\ \hline
                                                            & $[\,]$                                                                                                                     & $F_0$\\ 
\rulename{decide} ($l = +1$),                               & $[\decision{+1}]$                                                                                                          & $F_0$\\
\rulename{unitPropagate} ($c = [-1, +2]$, $l = +2$)         & $[\decision{+1}, \nondecision{+2}]$                                                                                        & $F_0$\\
\rulename{unitPropagate} ($c = [-2, +3]$, $l = +3$)         & $[\decision{+1}, \nondecision{+2}, \nondecision{+3}]$                                                                      & $F_0$\\
\rulename{decide} ($l = +4$)                                & $[\decision{+1}, \nondecision{+2}, \nondecision{+3}, \decision{+4}]$                                                       & $F_0$\\
\rulename{decide} ($l = +5$)                                & $[\decision{+1}, \nondecision{+2}, \nondecision{+3}, \decision{+4}, \decision{+5}]$                                        & $F_0$\\
\rulename{unitPropagate} ($c = [-5, +6]$, $l = +6$)         & $[\decision{+1}, \nondecision{+2}, \nondecision{+3}, \decision{+4}, \decision{+5}, \nondecision{+6}]$                      & $F_0$\\
\rulename{unitPropagate} ($c = [-2, -5, +7]$, $l = +7$)     & $[\decision{+1}, \nondecision{+2}, \nondecision{+3}, \decision{+4}, \decision{+5}, \nondecision{+6}, \nondecision{+7}]$    & $F_0$\\
\rulename{backjump} ($c = [-2, -3, -5]$, $l = -5$)          & $[\decision{+1}, \nondecision{+2}, \nondecision{+3}, \nondecision{-5}]$                                                    & $F_0$\\
\rulename{learn} ($c = [-2, -3, -5]$)                       & $[\decision{+1}, \nondecision{+2}, \nondecision{+3}, \nondecision{-5}]$                                                    & $F_0\,@\,[-2, -3, -5]$\\
\rulename{unitPropagate} ($c = [-1, -3, +5, +7]$, $l = +7$) & $[\decision{+1}, \nondecision{+2}, \nondecision{+3}, \nondecision{-5}, \nondecision{+7}]$                                  & $F_0\,@\,[-2, -3, -5]$\\
\rulename{backjump} ($c = [-1]$, $l = -1$)                  & $[\nondecision{-1}]$                                                                                                       & $F_0\,@\,[-2, -3, -5]$\\
\rulename{decide} ($l = +2$)                                & $[\nondecision{-1}, \decision{+2}]$                                                                                        & $F_0\,@\,[-2, -3, -5]$\\
\rulename{unitPropagate} ($c = [-2, +3]$, $l = +3$)         & $[\nondecision{-1}, \decision{+2}, \nondecision{+3}]$                                                                      & $F_0\,@\,[-2, -3, -5]$\\
\rulename{unitPropagate} ($c = [-2, -3, -5]$, $l = -5$)     & $[\nondecision{-1}, \decision{+2}, \nondecision{+3}, \nondecision{-5}]$                                                    & $F_0\,@\,[-2, -3, -5]$\\
\rulename{decide} ($l = +4$)                                & $[\nondecision{-1}, \decision{+2}, \nondecision{+3}, \nondecision{-5}, \decision{+4}]$                                     & $F_0\,@\,[-2, -3, -5]$\\
\rulename{decide} ($l = +6$)                                & $[\nondecision{-1}, \decision{+2}, \nondecision{+3}, \nondecision{-5}, \decision{+4}, \decision{+6}]$                      & $F_0\,@\,[-2, -3, -5]$\\
\rulename{unitPropagate} ($c = [-3, -6, -7]$, $l = -7$)     & $[\nondecision{-1}, \decision{+2}, \nondecision{+3}, \nondecision{-5}, \decision{+4}, \decision{+6}, \nondecision{-7}]$    & $F_0\,@\,[-2, -3, -5]$\\
\end{tabular}
}
\end{exa}

\subsection{Properties}

The new set of rules preserves all the invariants given in Section
\ref{subsec:DPLLInvariants}. Indeed, since \rulename{learn} and
\rulename{forget} do not change the trail $M$, all invariants about the trail
itself are trivially preserved by these rules. It can be proved that
\inv{equiv}, \inv{varsF} and \inv{impliedLiterals} also hold for the new rules.

Since the invariants are preserved in the new system, soundness is proved as in
Theorem \ref{thm:soundness}. Completeness trivially holds, since introducing new
rules to a complete system cannot compromise its completeness. However, the
extended system is not terminating since the \rulename{learn} and
\rulename{forget} rules can by cyclically applied. Termination could be ensured
with some additional restrictions. Specific learning, forgetting and backjumping
strategies that ensure termination will be defined and discussed in Sections
\ref{sec:conflictAnalysis} and \ref{sec:restart}.


\section{Conflict Analysis}
\label{sec:conflictAnalysis}

The backjumping rules, as defined in Section \ref{sec:backjumping}, are very
general. If backjump clauses faithfully reflect the current conflict, they
typically lead to significant pruning of the search space. In this section we
will consider a transition system that employs conflict analysis in order to
construct backjump clauses, which can be (in addition) immediately learned (by
the rule \rulename{learn}).

\subsection{States and Rules}

The system with conflict analysis requires extending the definition of state
introduced in Section \ref{sec:DPLLSearch}.

\begin{defi}[State]
A state of the system is a four-tuple $(M, F, C, \mathit{cflct})$, where $M$ is
a trail, $F$ is a formula, $C$ is a clause, and $\mathit{cflct}$ is a Boolean
variable. A state $([\,], F_0, [\,], \bot)$ is an {\em initial state} for the
input formula $F_0$.
\end{defi}

Two new transition rules \rulename{conflict} and \rulename{explain} are defined
in the form of relations over states. In addition, the existing rules are 
updated to map four-tuple states to four-tuple states.
\bigskip

\begin{small}
\begin{defi}[Transition rules]\hfill

\noindent${\fn decide}\ (M_1, F_1, C_1, \mathit{cflct}_1)\ (M_2, F_2, C_2,
\mathit{cflct}_2) \iff$
\vspace{-1mm}
\begin{eqnarray*}
\exists l.& &{\fn var}\ l\in \mathit{DecVars}\ \wedge\ l\notin M_1\ \wedge\ \opp{l}\notin M_1\ \wedge\ \\
& &M_2 = M_1\,@\,\decision{l} \ \wedge\ F_2 = F_1\ \wedge\ C_2 = C_1\ \wedge\ \mathit{cflct}_2 = \mathit{cflct}_1
\end{eqnarray*}

\noindent${\fn unitPropagate}\ (M_1, F_1, C_1, \mathit{cflct}_1)\ (M_2, F_2, C_2, \mathit{cflct}_2) \iff$
\vspace{-1mm}
\begin{eqnarray*}
\exists c\ l.& &F_1\entailsClause c\ \wedge\ {\fn var}\ l\in {\fn
  vars}\ \mathit{Vars}\ \wedge\ {\fn
  isUnit}\ c\ l\ M_1\ \wedge\ \\ & &M_2 = M_1\,@\,\nondecision{l}
\ \wedge\ F_2 = F_1\ \wedge\ C_2 = C_1 \ \wedge\ \mathit{cflct}_2 =
\mathit{cflct}_1
\end{eqnarray*}

\noindent${\fn conflict}\ (M_1, F_1, C_1, \mathit{cflct}_1)\ (M_2, F_2, C_2, \mathit{cflct}_2) \iff$
\vspace{-1mm}
\begin{eqnarray*}
\exists c.& &\mathit{cflct}_1 = \bot\ \wedge\ F_1\entailsClause c\ \wedge\ M_1\falsifies c \wedge\ \\
& &M_2 = M_1 \ \wedge\ F_2 = F_1\ \wedge\ C_2 = c \ \wedge\ \mathit{cflct}_2 = \top
\end{eqnarray*}

\noindent${\fn explain}\ (M_1, F_1, C_1, \mathit{cflct}_1)\ (M_2, F_2, C_2, \mathit{cflct}_2) \iff$
\vspace{-1mm}
\begin{eqnarray*}
\exists\ l\ c.& &\mathit{cflct}_1 = \top\ \wedge\ l\in C_1\ \wedge\ {\fn isReason}\
c\ \opp{l}\ M_1\ \wedge\ F_1\entailsClause c \ \wedge\\
& &M_2 = M_1 \ \wedge\ F_2 = F_1\ \wedge\ C_2 = {\fn resolve}\ C_1\ c\ l \ \wedge\ \mathit{cflct}_2 = \top
\end{eqnarray*}

\noindent${\fn backjump}\ (M_1, F_1, C_1, \mathit{cflct}_1)\ (M_2, F_2, C_2, \mathit{cflct}_2) \iff$
\vspace{-1mm}
\begin{eqnarray*}
\exists l\ level.& &\mathit{cflct}_1 = \top\ \wedge\ {\fn isBackjumpLevel}\ level\ l\ C_1\ M_1\ \wedge\ \\
& &M_2 = ({\fn prefixToLevel}\ level\ M_1)\,@\,\nondecision{l} \ \wedge\ F_2 = F_1\
\wedge\ \\
& &  C_2 = [\,] \ \wedge\ \mathit{cflct}_2 = \bot
\end{eqnarray*}

\noindent${\fn learn}\ (M_1, F_1, C_1, \mathit{cflct}_1)\ (M_2, F_2, C_2, \mathit{cflct}_2) \iff$
\vspace{-1mm}
\begin{eqnarray*}
& & \mathit{cflct}_1 =\top  \ \wedge\ C_1\notin F_1\\
& & M_2 = M_1\ \wedge\ F_2 = F_1\ @\ C_1 \ \wedge\ C_2 = C_1 \ \wedge\ \mathit{cflct}_2 = \mathit{cflct}_1
\end{eqnarray*}
\end{defi}
\end{small}

The relation $\rightarrowc$ is defined as in Definition \ref{def:rightarrow},
but using the above list of rules. The definition of outcome states also has to
be updated.

\begin{defi}[Outcome states]
A state is an {\em accepting state} if $\mathit{cflct} = \bot$,
$M\nfalsifies F$ and there is no literal such that ${\fn var}\ l \in
\mathit{DecVars}$, $l\notin M$ and $\opp{l}\notin M$.

A state is a {\em rejecting state} if $\mathit{cflct} = \top$ and
$C = [\,]$.
\end{defi}

\medskip
\begin{exa}
\label{ex:conflictanalysis}
Let $F_0$ be a formula from Example \ref{ex:dpll}. A possible $\rightarrowc$
trace (shown up to the first application of \rulename{backjump}) is given (due
to the lack of space, the $F$ component of the state is not shown).

\noindent
\resizebox{\linewidth}{!}{
\begin{tabular}{l||l|l|l}
rule                                                        & $M$                                                                                                                        & $cflct$ & $C$ \\ \hline
                                                            & $[\,]$                                                                                                                     & $\bot$  & $[]$\\ 
\rulename{decide} ($l = +1$),                               & $[\decision{+1}]$                                                                                                          & $\bot$  & $[]$\\
\rulename{unitPropagate} ($c = [-1, +2]$, $l = +2$)         & $[\decision{+1}, \nondecision{+2}]$                                                                                        & $\bot$  & $[]$\\
\rulename{unitPropagate} ($c = [-2, +3]$, $l = +3$)         & $[\decision{+1}, \nondecision{+2}, \nondecision{+3}]$                                                                      & $\bot$  & $[]$\\
\rulename{decide} ($l = +4$)                                & $[\decision{+1}, \nondecision{+2}, \nondecision{+3}, \decision{+4}]$                                                       & $\bot$  & $[]$\\
\rulename{decide} ($l = +5$)                                & $[\decision{+1}, \nondecision{+2}, \nondecision{+3}, \decision{+4}, \decision{+5}]$                                        & $\bot$  & $[]$\\
\rulename{unitPropagate} ($c = [-5, +6]$, $l = +6$)         & $[\decision{+1}, \nondecision{+2}, \nondecision{+3}, \decision{+4}, \decision{+5}, \nondecision{+6}]$                      & $\bot$  & $[]$\\
\rulename{unitPropagate} ($c = [-2, -5, +7]$, $l = +7$)     & $[\decision{+1}, \nondecision{+2}, \nondecision{+3}, \decision{+4}, \decision{+5}, \nondecision{+6}, \nondecision{+7}]$    & $\bot$  & $[]$\\
\rulename{conflict} ($c = [-3, -6, -7]$)                    & $[\decision{+1}, \nondecision{+2}, \nondecision{+3}, \decision{+4}, \decision{+5}, \nondecision{+6}, \nondecision{+7}]$    & $\top$  & $[-3, -6, -7]$\\
\rulename{explain} ($l = -7$, $c = [-2, -5, +7]$)           & $[\decision{+1}, \nondecision{+2}, \nondecision{+3}, \decision{+4}, \decision{+5}, \nondecision{+6}, \nondecision{+7}]$    & $\top$  & $[-2, -3, -5, -6]$\\
\rulename{explain} ($l = -6$, $c = [-5, +6]$)               & $[\decision{+1}, \nondecision{+2}, \nondecision{+3}, \decision{+4}, \decision{+5}, \nondecision{+6}, \nondecision{+7}]$    & $\top$  & $[-2, -3, -5]$\\
\rulename{learn} ($c = [-2, -3, -5]$)                       & $[\decision{+1}, \nondecision{+2}, \nondecision{+3}, \decision{+4}, \decision{+5}, \nondecision{+6}, \nondecision{+7}]$    & $\top$  & $[-2, -3, -5]$\\
\rulename{backjump} ($c = [-2, -3, -5]$, $l = -5$)          & $[\decision{+1}, \nondecision{+2}, \nondecision{+3}, \nondecision{-5}]$                                                    & $\bot$  & $[]$\\
\end{tabular}
}
\end{exa}

\subsection{Unique Implication Points (UIP)}

\sat{} solvers employ different strategies for conflict analysis. The most
widely used is a \emph{1-UIP} strategy, relying on a concept of {\em unique
  implication points (UIP)} (often expressed in terms of implication graphs
\cite{grasp}). Informally, a clause $c$, false in the trail $M$, satisfies the
UIP condition if there is exactly one literal in $c$ that is on the highest
decision level of $M$. The UIP condition is very easy to check. The \emph{1-UIP}
strategy requires that the rule \rulename{explain} is always applied to the last
literal false in $M$ among literals from $c$, and that backjumping is applied as
soon as $c$ satisfies the UIP condition.

\begin{defi}[Unique implication point]
A clause $c$ that is false in $M$ has a {\em unique implication point}, denoted
by ${\fn isUIP}\ l\ c\ M$, if the level of the last literal $l$ from $c$ that is
false in $M$ is strictly greater than the level of the remaining literals from
$c$ that are false in $M$:

  \begin{eqnarray*}
{\fn isUIP}\ l\ c\ M\  & \iff & M\falsifies c \ \wedge\ \opp{l} = {\fn lastAssertedLiteral}\ \opp{c}\ M\ \wedge\ \\
  & & \forall\ l'.\ l'\in c\setminus l \implies {\fn level}\ \opp{l'}\ M  <  {\fn level}\ \opp{l}\ M
  \end{eqnarray*}

\end{defi}

The following lemma shows that, if there are decision literals in $M$, if a
clause has a unique implication point, then there is a corresponding backjump
level, and consequently, the \rulename{backjump} rule is applicable.

\begin{lem}
\label{lemma:isUIPExistsBackjumpLevel}
If ${\fn unique}\ M$ (i.e., \inv{unique} holds), then
$${\fn isUIP}\ l\ c\ M  \; \wedge \; {\fn level}\ \opp{l}\ M > 0 \iff
\exists \ level. \ {\fn isBackjumpLevel}\ level\ l\ c\ M
$$
\end{lem}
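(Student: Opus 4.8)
The plan is to prove the biconditional by unfolding both \emph{isUIP} and \emph{isBackjumpLevel} and observing that they share the two conjuncts $M \falsifies c$ and $\opp{l} = {\fn lastAssertedLiteral}\ \opp{c}\ M$. These two clauses can therefore be discharged verbatim on each side, so the real content is the relationship between the strict level condition of \emph{isUIP} — namely ${\fn level}\ \opp{l'}\ M < {\fn level}\ \opp{l}\ M$ for every $l' \in c\setminus l$ — and the existence of a single threshold $\mathit{level}$ satisfying $0 \le \mathit{level} < {\fn level}\ \opp{l}\ M$ together with ${\fn level}\ \opp{l'}\ M \le \mathit{level}$ for every such $l'$. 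Before either direction I would record that, since $M \falsifies c$, every $l' \in c$ has $\opp{l'}\in M$, so each ${\fn level}\ \opp{l'}\ M$ is meaningful. The hypothesis ${\fn unique}\ M$ (i.e.\ \inv{unique}) underlies the well-definedness and expected behaviour of ${\fn level}$ and ${\fn lastAssertedLiteral}$ on the literals of $M$, and I would appeal to it for those facts; the arithmetic core of the equivalence does not otherwise depend on it.

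For the left-to-right direction I would assume \emph{isUIP} together with ${\fn level}\ \opp{l}\ M > 0$ and exhibit the witness $\mathit{level} = {\fn level}\ \opp{l}\ M - 1$. Because ${\fn level}\ \opp{l}\ M > 0$, this predecessor is a genuine natural number with $0 \le \mathit{level}$ and $\mathit{level} < {\fn level}\ \opp{l}\ M$. The remaining conjunct, ${\fn level}\ \opp{l'}\ M \le \mathit{level}$ for each $l' \in c\setminus l$, follows from the strict inequality of \emph{isUIP}: over the natural numbers, ${\fn level}\ \opp{l'}\ M < {\fn level}\ \opp{l}\ M$ is equivalent to ${\fn level}\ \opp{l'}\ M \le {\fn level}\ \opp{l}\ M - 1 = \mathit{level}$. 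Choosing the predecessor (rather than the maximum of the other literals' levels) has the advantage of automatically covering the degenerate case $c\setminus l = \emptyset$, where the universally quantified clause is vacuous and only $0 \le \mathit{level} < {\fn level}\ \opp{l}\ M$ must be secured — which is exactly what the $>0$ hypothesis provides.

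For the right-to-left direction I would assume ${\fn isBackjumpLevel}\ \mathit{level}\ l\ c\ M$ for some $\mathit{level}$. From $0 \le \mathit{level} < {\fn level}\ \opp{l}\ M$ I immediately obtain ${\fn level}\ \opp{l}\ M > 0$. For the strict inequality demanded by \emph{isUIP}, each $l' \in c\setminus l$ satisfies ${\fn level}\ \opp{l'}\ M \le \mathit{level} < {\fn level}\ \opp{l}\ M$, whence ${\fn level}\ \opp{l'}\ M < {\fn level}\ \opp{l}\ M$; the two shared conjuncts are copied unchanged.

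I expect no serious obstacle here: once the definitions are unfolded the argument is essentially natural-number arithmetic, and the only point demanding care is the interplay between the strict inequality (\emph{isUIP}) and the threshold formulation (\emph{isBackjumpLevel}), mediated by the fact that $n < m$ and $n \le m-1$ coincide over the natural numbers precisely when $m > 0$. The $>0$ side condition is exactly what makes the witness $\mathit{level}$ nonnegative, and is therefore indispensable in the forward direction.
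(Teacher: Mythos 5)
Your proof is correct and takes essentially the same route as the paper's: unfold ${\fn isUIP}$ and ${\fn isBackjumpLevel}$, discharge the shared conjuncts $M\falsifies c$ and $\opp{l} = {\fn lastAssertedLiteral}\ \opp{c}\ M$, and in the forward direction exhibit an explicitly calculated witness level strictly below ${\fn level}\ \opp{l}\ M$, which is exactly the ``explicit calculation of the backjump level'' the paper's proof performs. The only immaterial difference is the particular witness --- your predecessor ${\fn level}\ \opp{l}\ M - 1$ rather than, e.g., the maximal level among the literals of $c\setminus l$ --- and your choice has the minor advantage of covering the degenerate case $c\setminus l = \emptyset$ without a case split.
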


Therefore, the guard ${\fn isBackjumpLevel}\ \mathit{level}\ l\ c\ M$ in the
definition of the \rulename{backjump} rule can be replaced by the stronger
conditions ${\fn isUIP}\ l\ c\ M$ and ${\fn level}\ \opp{l}\ M > 0$.  In that
case, the backjump level $\mathit{level}$ has to be explicitly calculated (as in
the proof of the previous lemma).

The UIP condition is trivially satisfied when the clause $c$ consists only of
opposites of decision literals from the trail (a similar construction of $c$ was
already used in the proof of Lemma \ref{lemma:completeness_reject}).

\begin{lem}
\label{lemma:allDecisionsThenUIP}
If it holds that:
\begin{enumerate}[\em(1)]
\packitems
\item ${\fn unique}\ M$ (i.e., \inv{unique} holds),
\item $\opp{c}\ \subseteq\ {\fn decisions}\ M$,
\item $\opp{l} = {\fn lastAssertedLiteral}\ \opp{c}\ M$,
\end{enumerate}
then ${\fn isUIP}\ l\ c\ M$.
\end{lem}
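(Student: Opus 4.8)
The plan is to unfold the definition of ${\fn isUIP}\ l\ c\ M$ into its three conjuncts and discharge each in turn. The second conjunct, $\opp{l} = {\fn lastAssertedLiteral}\ \opp{c}\ M$, is exactly hypothesis (3), so nothing is to be done there. For the first conjunct, $M\falsifies c$, I would argue literal by literal: for every $l'\in c$, hypothesis (2) gives $\opp{l'}\in{\fn decisions}\ M$, and since every decision literal occurs in $M$ we get $\opp{l'}\in M$, which is by definition exactly $M\falsifies l'$. As this holds for all literals of $c$, we conclude $M\falsifies c$.

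The real content is the third conjunct: for every $l'\in c\setminus l$ we must show ${\fn level}\ \opp{l'}\ M < {\fn level}\ \opp{l}\ M$. Fix such an $l'$; then $l'\neq l$, so $\opp{l'}\neq\opp{l}$ (the opposite map is injective), and both $\opp{l'}$ and $\opp{l}$ lie in $\opp{c}$. By hypothesis (2) both are decision literals, hence both occur in $M$. Hypothesis (3) says $\opp{l}$ is the \emph{last} asserted literal of $\opp{c}$ in $M$, so no literal of $\opp{c}$ comes after $\opp{l}$; since $\opp{l'}$ is such a literal and differs from $\opp{l}$, it must occur strictly before $\opp{l}$ in $M$. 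It then remains to turn ``occurs earlier in the trail'' into ``has strictly smaller level''.

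This last step is the main obstacle, and it is where the uniqueness invariant \inv{unique} is used. I would isolate the auxiliary monotonicity fact that if a literal $a$ precedes a \emph{decision} literal $b$ in a trail $M$ with ${\fn unique}\ M$, then ${\fn level}\ a\ M < {\fn level}\ b\ M$. The argument: ${\fn prefixTo}\ a\ M$ is a proper prefix of ${\fn prefixTo}\ b\ M$, so ${\fn decisionsTo}\ M\ a$ is a prefix of ${\fn decisionsTo}\ M\ b$; moreover $b$ itself is a decision occurring in ${\fn prefixTo}\ b\ M$ but, by uniqueness, not in ${\fn prefixTo}\ a\ M$, so ${\fn decisionsTo}\ M\ b$ is strictly longer. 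Taking lengths and using ${\fn level}\ \cdot\ M = {\fn length}\ ({\fn decisionsTo}\ M\ \cdot)$ yields the strict inequality. Applying this with $a=\opp{l'}$ and $b=\opp{l}$ completes the third conjunct, and hence the lemma. The only delicacy is that without uniqueness the counting of decisions along the trail need not be strictly monotone (a repeated decision literal could inflate a level), which is precisely why hypothesis (1) is required.
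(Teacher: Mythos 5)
Your proof is correct. The paper does not spell out a proof of this lemma in the main text (it is deferred to the appendix of the full version and the Isabelle formalization), but your argument --- unfolding ${\fn isUIP}$ into its three conjuncts, getting $M\falsifies c$ from $\opp{c}\subseteq{\fn decisions}\ M\subseteq M$, taking the second conjunct verbatim from hypothesis (3), and reducing the strict level inequality to the monotonicity fact that a literal preceding a \emph{decision} literal in a distinct trail has strictly smaller level (via ${\fn decisionsTo}$ being a proper prefix) --- is exactly the standard argument the paper's development supports, including your correct identification that ${\fn unique}\ M$ is what rules out a repeated decision literal whose first occurrence is unmarked, which would break strictness.
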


\subsection{Properties}

Properties of the new transition system will be again proved using invariants
introduced in Section \ref{sec:DPLLSearch}, but they have to be updated to
reflect the new definition of states. In addition, three new invariants will be
used.

\subsubsection{Invariants}

In addition to the invariants from Section \ref{sec:DPLLSearch}, three
new invariants are used.

\begin{small}
\noindent
\begin{tabular}{l@{ }l}
\defineInvariant{inv:Cfalse}
\inv{Cfalse}: & $\mathit{cflct} \implies M\falsifies C$\\
\defineInvariant{inv:Centailed}
\inv{Centailed}: & $\mathit{cflct} \implies F\entailsClause C$ \\
\defineInvariant{inv:reasonClauses}
{\inv{reasonClauses}}: & $\forall\ l.\ l\in M\ \wedge\ l\notin {\fn decisions}\ M \implies \exists\ c.\ {\fn isReason}\ c\ l\ M\ \wedge\ F\entailsClause c$
\end{tabular}
\end{small}

The first two invariants ensure that during the conflict analysis process, the
conflict analysis clause $C$ is a consequence of $F$ and that $C$ is false in
$M$. The third invariant ensures existence of clauses that are reasons of
literal propagation (these clauses enable application of the \rulename{explain}
rule). By the rules \rulename{unitPropagate} and \rulename{backjump} literals
are added to $M$ only as implied literals and in both cases propagation is
performed using a clause that is a reason for propagation, so this clause can be
associated to the implied literal, and afterwards used as its reason.

Lemma \ref{lemma:invariantsHold} again has to be updated to address new rules
and its proof has to be modified to reflect the changes in the definition of the
relation $\rightarrowc$.

\subsubsection{Soundness}

Although the soundness proof for unsatisfiable formulae could be again based on
Lemma \ref{lemma:InvariantImpliedLiteralsAndFormulaFalse}, this time it will be
proved in an alternative, simpler way (that does not rely on the invariant 
\inv{impliedLits}), that was not possible in previous sections.

\begin{lem}
\label{lemma:unsatReportCEmpty}
If there is a rejecting state $(M,F,C,\mathit{cflct})$ such that it holds
\begin{enumerate}[\em(1)]
\packitems
\item $F \equiv F_0$, (i.e., \inv{equiv} holds)
\item $\mathit{cflct} \implies F\entailsClause C$ (i.e., \inv{Centailed}
  holds),
\end{enumerate}
then $F_0$ is unsatisfiable (i.e., $\neg({\fn sat}\ F_0)$).
\end{lem}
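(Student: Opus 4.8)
The plan is to unfold the definition of a rejecting state and then exploit the fact that entailing the empty clause is the formal counterpart of unsatisfiability. By definition, a rejecting state satisfies $\mathit{cflct} = \top$ and $C = [\,]$. So first I would invoke the invariant \inv{Centailed}, which has the form $\mathit{cflct} \implies F\entailsClause C$; substituting $\mathit{cflct} = \top$ and $C = [\,]$ immediately yields $F\entailsClause [\,]$.

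The key semantic observation is that the empty clause is true in no valuation: by the definition of a clause being true, $v \vDash c$ requires some literal $l \in c$ with $v \vDash l$, which is impossible when $c = [\,]$. Now $F\entailsClause [\,]$ means that the empty clause is true in every model of $F$. Since no valuation whatsoever makes the empty clause true, in particular no model does, so the statement can hold only vacuously, i.e., $F$ has no model. Recalling that ${\fn sat}\ F$ means exactly that there is a consistent valuation $v$ with $v\vDash F$, this gives $\neg({\fn sat}\ F)$.

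Finally I would transfer this conclusion to $F_0$ using \inv{equiv}. From $F \equiv F_0$ the models of $F$ and $F_0$ coincide, so the absence of a model for $F$ forces the absence of a model for $F_0$, yielding $\neg({\fn sat}\ F_0)$, as required.

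As for the main obstacle: there is essentially none beyond correctly identifying that ``$F$ entails the empty clause'' captures ``$F$ is unsatisfiable''. This is precisely what makes the argument simpler than the earlier soundness route through Lemma \ref{lemma:InvariantImpliedLiteralsAndFormulaFalse}, which had to reason explicitly about implied and decision literals in $M$. Here the presence of the conflict flag together with the reduced conflict clause $C = [\,]$ makes the derivation almost immediate, which is exactly why the surrounding text advertises it as ``an alternative, simpler way'' that ``does not rely on the invariant \inv{impliedLits}''.
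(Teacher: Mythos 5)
Your proof is correct and matches the intended argument exactly: unfold the rejecting-state definition to get $\mathit{cflct} = \top$ and $C = [\,]$, apply \inv{Centailed} to obtain $F \entailsClause [\,]$, observe that no valuation satisfies the empty clause so $F$ has no model, and transfer unsatisfiability to $F_0$ via \inv{equiv}. This is precisely the ``alternative, simpler way'' the paper has in mind, so there is nothing to add.
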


\begin{thm}[Soundness for $\rightarrowc$]
If $([\,], F_0, [\,], \bot)\rightarrowc^* (M, F, C,
\mathit{cflct})$, then:
\begin{enumerate}[\em(1)]
\packitems
\item If $\mathit{DecVars} \supseteq {\fn vars}\ F_0$ and $(M,F)$
  is an accepting state, then the formula is $F_0$ satisfiable and $M$ is its
  model (i.e., ${\fn sat}\ F_0$ and ${\fn model}\ M\ F_0$).
\item If $(M,F, C, \mathit{cflct})$ is a rejecting state, then
  the formula $F_0$ is unsatisfiable (i.e., $\neg({\fn sat}\ F_0)$).
\end{enumerate}
\end{thm}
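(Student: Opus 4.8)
The plan is to follow exactly the two-pronged structure of the soundness proof for $\rightarrowd$ (Theorem~\ref{thm:soundness}), now applied to the enlarged four-tuple states $(M, F, C, \mathit{cflct})$. The single prerequisite is that every invariant from Section~\ref{sec:DPLLSearch}, together with the three new invariants \inv{Cfalse}, \inv{Centailed}, and \inv{reasonClauses}, holds in the reached state. First I would appeal to the updated version of Lemma~\ref{lemma:invariantsHold}: since $([\,], F_0, [\,], \bot)\rightarrowc^* (M, F, C, \mathit{cflct})$, all invariants hold in $(M, F, C, \mathit{cflct})$. In particular this supplies \inv{consistent}, \inv{equiv}, \inv{varsF}, and \inv{Centailed}, which are precisely the hypotheses consumed below.

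For part~(1), suppose $\mathit{DecVars} \supseteq {\fn vars}\ F_0$ and the state is accepting, so $\mathit{cflct}=\bot$, $M\nfalsifies F$, and no literal is available for \rulename{decide}. The conflict-analysis components $C$ and $\mathit{cflct}$ play no role in this direction, so the argument reduces to the satisfiable case already settled: I would invoke Lemma~\ref{lemma:satReport} on the pair $(M, F)$, whose three premises (\inv{consistent}, \inv{equiv}, \inv{varsF}) are all delivered by the invariants lemma. This yields ${\fn model}\ M\ F_0$ and hence ${\fn sat}\ F_0$.

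For part~(2), the new state component makes the unsatisfiable direction markedly simpler than in Section~\ref{sec:DPLLSearch}. A rejecting state is characterized by $\mathit{cflct}=\top$ and $C = [\,]$. I would feed this directly into Lemma~\ref{lemma:unsatReportCEmpty}, whose two hypotheses are exactly \inv{equiv} and \inv{Centailed}, both holding by the invariants lemma. The conceptual content underneath is the observation that \inv{Centailed} gives $F \entailsClause C$ whenever $\mathit{cflct}=\top$; with $C=[\,]$ this says $F$ entails the empty clause, i.e.\ $F$ is unsatisfiable, and \inv{equiv} transports unsatisfiability back to $F_0$. This replaces the longer argument via Lemma~\ref{lemma:InvariantImpliedLiteralsAndFormulaFalse} and the invariant \inv{impliedLits} used for $\rightarrowd$.

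The step that carries the real weight is not in this theorem but in the updated Lemma~\ref{lemma:invariantsHold}: one must recheck that every invariant --- those inherited from Section~\ref{sec:DPLLSearch} together with \inv{Cfalse}, \inv{Centailed}, and \inv{reasonClauses} --- is preserved by each rule, including the two genuinely new rules \rulename{conflict} and \rulename{explain}. The delicate cases are \inv{Cfalse} and \inv{Centailed} under \rulename{explain}, where $C$ is rewritten by a resolution step against a reason clause $c$: preservation of \inv{Centailed} rests on both $C_1$ and $c$ being entailed by $F$ (so their resolvent is too), and preservation of \inv{Cfalse} rests on ${\fn isReason}\ c\ \opp{l}\ M$ guaranteeing that every literal newly introduced into $C$ is already false in $M$. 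Once that lemma is in place, the present theorem is essentially a bookkeeping combination of Lemmas~\ref{lemma:satReport} and~\ref{lemma:unsatReportCEmpty}.
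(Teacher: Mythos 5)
Your proposal is correct and follows exactly the paper's own proof: invoke the updated Lemma~\ref{lemma:invariantsHold} to obtain all invariants in the reached state, then apply Lemma~\ref{lemma:satReport} for the accepting case and Lemma~\ref{lemma:unsatReportCEmpty} for the rejecting case. Your additional commentary --- that the rejecting case is simpler here than for $\rightarrowd$ because \inv{Centailed} with $C=[\,]$ directly yields unsatisfiability, and that the real work lies in re-establishing the invariants under \rulename{conflict} and \rulename{explain} --- matches the paper's own remarks surrounding the theorem.
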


\begin{proof}
By Lemma \ref{lemma:invariantsHold}, all the invariants hold in the state $(M,
F, C, \mathit{cflct})$.
\begin{enumerate}[(1)]
\packitems
\item All conditions of Lemma \ref{lemma:satReport} are met (adapted to the new
  defintion of state), so ${\fn sat}\ F_0$ and ${\fn model}\ M\ F_0$.
\item All conditions of Lemma \ref{lemma:unsatReportCEmpty} are met, so $\neg
  ({\fn sat}\ F_0)$.
\end{enumerate}
\vspace{-0.6cm}
\end{proof}

\subsubsection{Termination}
\label{sec:analizaKonflikataZaustavljanje}

Termination of the system with conflict analysis will be proved by using a
suitable well-founded ordering that is compatible with the relation
$\rightarrowc$, i.e., an ordering $\succ$ such that $s\rightarrowc s'$ yields
$s\succ s'$, for any two states $s$ and $s'$.  This ordering will be constructed
as a lexicographic combination of four simpler orderings, one for each state
component.

The rules \rulename{decide}, \rulename{unitPropagate}, and \rulename{backjump}
change $M$ and no other state components. If a state $s$ is in one of these
relations with the state $s'$ then $M \succTrrestrict[\mathit{Vars}]
M'$ (for the ordering $\succTrrestrict[\mathit{Vars}]$, introduced in Section
\ref{sec:DPLLtermination}). 

The ordering $\succTrrestrict[\mathit{Vars}]$ cannot be used alone for proving
termination of the system, since the rules \rulename{conflict},
\rulename{explain}, and \rulename{learn} do not change $M$ (and, hence, if a
state $s$ is transformed into a state $s'$ by one of these rules, then it does
not hold that $M \succTrrestrict[Vars] M'$). For each of these rules, a specific
well-founded ordering will be constructed and it will be proved that these rules
decrease state components with respect to those orderings.

The ordering $\succBool$ will be used for handling the state component
$\mathit{cflct}$ and the rule \rulename{conflict} (the rule \rulename{explain}
changes the state component $\mathit{cflct}$, but also the state component $C$,
so it will be handled by another ordering).  Given properties of the ordering
$\succBool$ are proved trivially.

\begin{defi}[$\succBool$]
$b_1 \succBool b_2 \iff b_1 = \bot\ \wedge\ b_2=\top.$
\end{defi}

\begin{lem}
\label{lemma:conflictSucc}
\begin{small}
If $ {\fn conflict}\ (M_1, F_1, C_1, \mathit{cflct}_1)\ (M_2, F_2, C_2,
  \mathit{cflct}_2)$, then $\mathit{cflct}_1 \succBool \mathit{cflct}_2$.
\end{small}
\end{lem}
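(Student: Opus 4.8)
The plan is to prove this by simply unfolding the two relevant definitions, since the assertion is immediate from the shape of the \rulename{conflict} rule. First I would expand the hypothesis ${\fn conflict}\ (M_1, F_1, C_1, \mathit{cflct}_1)\ (M_2, F_2, C_2, \mathit{cflct}_2)$ according to its definition. This yields the existence of a clause $c$ together with several conjuncts; among these, the two that matter here are the precondition $\mathit{cflct}_1 = \bot$ and the effect $\mathit{cflct}_2 = \top$. All the remaining conjuncts (those concerning $F_1 \entailsClause c$, $M_1 \falsifies c$, and the assignments $M_2 = M_1$, $F_2 = F_1$, $C_2 = c$) are irrelevant for this lemma and can be discarded.

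Next I would recall the definition of the ordering $\succBool$, namely $b_1 \succBool b_2 \iff b_1 = \bot \wedge b_2 = \top$. Instantiating $b_1 := \mathit{cflct}_1$ and $b_2 := \mathit{cflct}_2$, the goal $\mathit{cflct}_1 \succBool \mathit{cflct}_2$ reduces exactly to the conjunction $\mathit{cflct}_1 = \bot \wedge \mathit{cflct}_2 = \top$, which is precisely what was extracted in the previous step. This closes the proof.

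There is no genuine obstacle here: the \rulename{conflict} rule is designed so that it can only fire when $\mathit{cflct}$ is currently $\bot$, and it always toggles it to $\top$, while $\succBool$ is defined to capture exactly this single admissible Boolean transition. The only ``work'' is matching the precondition and the effect of the rule against the two conjuncts of $\succBool$, which is a purely mechanical definitional check. This is why, as indicated in the surrounding text, the properties of $\succBool$ are proved trivially in the formalization.
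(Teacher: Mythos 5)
Your proof is correct and matches the paper's approach exactly: the paper treats this lemma as one of the properties of $\succBool$ that are ``proved trivially,'' precisely because the \rulename{conflict} rule's guard $\mathit{cflct}_1 = \bot$ and effect $\mathit{cflct}_2 = \top$ coincide with the two conjuncts defining $\succBool$. Nothing is missing; the purely definitional unfolding you describe is the whole argument.
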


\begin{lem}
\label{lemma:wfConflictSucc}
The ordering $\succBool$ is well-founded.
\end{lem}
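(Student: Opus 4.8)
The plan is to prove well-foundedness directly from the alternative characterization given in Proposition \ref{prop:wfInvImage}: a relation is well-founded iff every nonempty set $Q$ has a minimal element, i.e. $(\exists a \in Q) \implies (\exists a_{min} \in Q.\ \forall a'.\ a_{min} \succBool a' \implies a' \notin Q)$. Since the carrier is the Boolean type, which contains only the two values $\bot$ and $\top$, and the only pair related by $\succBool$ is $\bot \succBool \top$, establishing this reduces to a trivial case analysis.

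First I would unfold the definition of $\succBool$ and record the key observation that $b_1 \succBool b_2$ holds precisely when $b_1 = \bot$ and $b_2 = \top$; in particular $\top \succBool b$ never holds for any $b$, since it would force $\top = \bot$. Then, given an arbitrary nonempty set $Q$ of Booleans, I would split on whether $\top \in Q$. If $\top \in Q$, take $a_{min} = \top$: because no Boolean $a'$ satisfies $\top \succBool a'$, the minimality condition $\forall a'.\ \top \succBool a' \implies a' \notin Q$ holds vacuously. If $\top \notin Q$, then since $Q$ is a nonempty subset of $\{\bot, \top\}$ we must have $\bot \in Q$, so take $a_{min} = \bot$: the only $a'$ with $\bot \succBool a'$ is $a' = \top$, which is not in $Q$ by assumption, so minimality holds again.

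Alternatively, one could argue directly from the primitive definition of well-foundedness via its induction principle: assuming the step hypothesis $\forall x.\ (\forall y.\ x \succBool y \implies P(y)) \implies P(x)$, one first derives $P(\top)$ (its antecedent is vacuously true, as nothing lies $\succBool$-below $\top$), and then derives $P(\bot)$ (its only $\succBool$-predecessor is $\top$, for which $P$ already holds), concluding $\forall x.\ P(x)$. There is essentially no obstacle: the relation is acyclic on a two-element carrier, so the longest descending chain, $\bot \succBool \top$, has length one. The only point requiring care is the direction of the relation together with the vacuous treatment of $\top$, which has no $\succBool$-successor.
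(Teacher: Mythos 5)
Your proof is correct, and it matches the paper's treatment: the paper simply notes that the properties of $\succBool$ (including this lemma) "are proved trivially," and your case analysis on the two-element Boolean carrier with the single related pair $\bot \succBool \top$ is exactly the routine elaboration of that triviality. Both of your variants (the minimal-element characterization from Proposition \ref{prop:wfInvImage} and the direct induction-principle argument) are valid and handle the key point correctly, namely that $\top$ has no $\succBool$-successor.
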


An ordering over clauses (that are the third component of the states) should be
constructed such that the rule \rulename{explain} decreases the state component
$C$ with respect to that ordering. Informally, after each application of the
rule \rulename{explain}, a literal $l$ of the clause $C$ that is (by
\inv{Cfalse}) false in $M$ is replaced by several other literals that are again
false in $M$, but for them it holds that their opposite literals precede the
literal $\opp{l}$ in $M$ (since reason clauses are used). Therefore, the
ordering of literals in the trail $M$ defines an ordering of clauses false in
$M$. The ordering over clauses will be a multiset extension of the relation
$\precedes{M}$ induced by the ordering of literals in $M$ (Definition
\ref{def:precedesList}). Each explanation step removes a literal from $C$ and
replaces it with several literals that precede it in $M$. To avoid multiple
occurrences of a literal in $C$, duplicates are removed. Solvers usually perform
this operation explicitly and maintain the condition that $C$ does not contain
duplicates. However, our ordering does not require this restriction and
termination is ensured even without it.

\begin{defi}[$\succC{M}$]
For a trail $M$, 
$C_1 \succC{M} C_2 \iff \multisetof{{\fn remDups}\ \opp{C_2}}
\mult{\precedes{M}} \multisetof{{\fn remDups}\ \opp{C_1}}$.
\end{defi}

\begin{lem}
\label{lemma:wfClauseSucc}
For any trail $M$, the ordering $\succC{M}$ is well-founded.
\end{lem}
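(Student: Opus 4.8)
The plan is to present $\succC{M}$ as an inverse image of a well-founded multiset ordering, relying on the two general facts collected in Proposition~\ref{prop:wfInvImage}: that the multiset extension of a well-founded relation is well-founded, and that a relation is well-founded as soon as some function maps it into a well-founded relation. The measure to use is the map $f\ C = \multisetof{{\fn remDups}\ \opp{C}}$, which is precisely the object that the definition of $\succC{M}$ compares.

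First I would establish that the base relation $\precedes{M}$ is well-founded. By Definition~\ref{def:precedesList}, $e_1 \precedes{M} e_2$ forces ${\fn firstPos}\ e_1\ M < {\fn firstPos}\ e_2\ M$, so following $\precedes{M}$ strictly increases a first position that is bounded by the length of $M$; thus $\precedes{M}$ admits no infinite chain (formally, apply Proposition~\ref{prop:wfInvImage} with the map $e \mapsto {\fn firstPos}\ e\ M$ into the natural numbers). The multiset clause of Proposition~\ref{prop:wfInvImage} then yields that $\mult{\precedes{M}}$ is well-founded, and, being the multiset extension of the strict order $\precedes{M}$, it is itself a strict (irreflexive, transitive) order.

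The final step connects $\succC{M}$ to this multiset order through $f$, and this is where I expect the real difficulty. By definition $C_1 \succC{M} C_2$ holds exactly when $f\ C_2\ \mult{\precedes{M}}\ f\ C_1$, so $\succC{M}$ is the pullback under $f$ of $\mult{\precedes{M}}$ read with its arguments reversed; in other words, an $\succC{M}$-descent corresponds to an \emph{ascent} of the multiset $f\ C$, not a descent, so the well-foundedness supplied by Proposition~\ref{prop:wfInvImage} does not transfer naively and the orientation must be handled with care. The way out is to use that the multisets $f\ C$ actually occurring are drawn from a finite pool: whenever this clause ordering is in play the clause $C$ is false in $M$ (invariant \inv{Cfalse}), so $\opp{C}$ consists only of literals of $M$, and $f\ C$ ranges over the finitely many duplicate-free multisets over the literals occurring in the fixed trail $M$. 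On such a finite carrier the strict order $\succC{M}$ (being the pullback of a strict order) is acyclic and therefore well-founded, which is the conclusion sought. The single point requiring genuine attention is thus to pin down this finite domain and to phrase the inverse-image argument against the correctly oriented, well-founded target relation rather than against $\mult{\precedes{M}}$ itself.
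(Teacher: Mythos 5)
You correctly identified the intended strategy --- realize $\succC{M}$ as an inverse image, under the measure $f\ C = \multisetof{{\fn remDups}\ \opp{C}}$, of a well-founded multiset ordering, with $\precedes{M}$ well-founded because first positions are bounded by the length of $M$ --- and you also correctly spotted the one delicate point, the orientation of the comparison. But your resolution of that point is where the proof goes wrong, and it is exactly where it parts company with the paper's. The orientation clash is a notational wrinkle, not a mathematical obstacle: in the definition of $\succC{M}$, the comparison $\multisetof{{\fn remDups}\ \opp{C_2}}\ \mult{\precedes{M}}\ \multisetof{{\fn remDups}\ \opp{C_1}}$ is to be read as multiset \emph{descent} --- equivalently, as the (greater-than style) multiset extension of the \emph{reverse} of $\precedes{M}$, i.e.\ of the well-founded relation ``occurs later in $M$'' --- so that $C_1 \succC{M} C_2$ holds precisely when the multiset attached to $C_2$ is obtained from the one attached to $C_1$ by repeatedly replacing a literal by literals occurring strictly earlier in $M$, or deleting it. This is how the ordering is set up in the Isabelle development behind the paper, and it is the reading under which Lemma \ref{lemma:succResolve} expresses that \rulename{explain} shrinks the clause. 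Under this reading an $\succC{M}$-step \emph{is} a descent of the measure $f$, and the lemma follows unconditionally from the multiset and inverse-image clauses of Proposition \ref{prop:wfInvImage}, exactly as stated: for every trail $M$, over arbitrary clauses, with no side condition and no finiteness argument.

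Your patch --- keeping the ascent reading and restricting to clauses false in $M$ via \inv{Cfalse} --- has two concrete defects. First, it proves a weaker, conditional statement than the lemma. \inv{Cfalse} is an invariant of one particular transition system, not a hypothesis available here; and the termination theorem for $\rightarrowc$ consumes this lemma through the lexicographic-product proposition, which needs each component relation (one of them being $\lambda s.\ \succC{M_s}$) to be well-founded outright --- the invariants enter that proof only to show that every rule application decreases the state. A domain-restricted ordering would force you to build the restriction into the relation itself (as the paper does with $\succTrrestrict$) and rework that proof; it does not splice in. Second, and decisively: under your reading the stated lemma is outright false, since deleting an element is always a legal replacement step (replace it by the empty multiset of ``smaller'' elements), so adding a literal always produces a strictly $\mult{}$-larger multiset, giving the infinite descending chain $[\,] \succC{M} [l_1] \succC{M} [l_1,l_2] \succC{M} \cdots$ for any pairwise distinct literals $l_i$, none of which need occur in $M$ or be false in it. Since the paper asserts the lemma with no hypotheses, this should have signalled that the parsing of the definition, not the lemma, is what needed to change. (A smaller inaccuracy: even in your restricted setting the carrier is not finite --- infinitely many clauses are false in $M$; what is finite is the set of their images under $f$, so your acyclicity argument must be run on images, combining transitivity and irreflexivity with a pigeonhole step.)
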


The following lemma ensures that each explanation step decreases the conflict
clause in the ordering $\succC{M}$, for the current trail $M$. This ensures that
each application of the \rulename{explain} rule decreases the state with respect
to this ordering.

\begin{lem}
\label{lemma:succResolve}
If $l\in C$ and ${\fn isReason}\ c\ \opp{l}\ M$, then $C\succC{M}
{\fn resolve}\ C\ c\ l$.
\end{lem}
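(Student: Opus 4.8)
The plan is to unfold both $\succC{M}$ and ${\fn resolve}$ and reduce the goal to the multiset extension $\mult{\precedes{M}}$. By definition, $C\succC{M}{\fn resolve}\ C\ c\ l$ means $\multisetof{{\fn remDups}\ \opp{({\fn resolve}\ C\ c\ l)}}\ \mult{\precedes{M}}\ \multisetof{{\fn remDups}\ \opp{C}}$, where ${\fn resolve}\ C\ c\ l=(C\setminus l)\,@\,(c\setminus\opp{l})$. First I would extract the positional content of ${\fn isReason}\ c\ \opp{l}\ M$: every $l'\in c\setminus\opp{l}$ satisfies $\opp{l'}\precedes{M}\opp{l}$. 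Since $\precedes{M}$ is irreflexive (no element precedes itself), taking $l'=l$ would force $\opp{l}\precedes{M}\opp{l}$, so in fact $l\notin c$; together with injectivity of $\opp{(\cdot)}$ and $l\notin C\setminus l$ this shows that $\opp{l}$ occurs in $\opp{C}$ but is absent from $\opp{({\fn resolve}\ C\ c\ l)}$, while every literal of $\opp{(c\setminus\opp{l})}$ strictly precedes $\opp{l}$ in $M$. Intuitively, resolution deletes the most recently asserted opposite $\opp{l}$ and reintroduces only strictly earlier opposites.

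Next I would make the multiset step explicit. Write $T_C=\multisetof{{\fn remDups}\ \opp{C}}$ and $T_R=\multisetof{{\fn remDups}\ \opp{({\fn resolve}\ C\ c\ l)}}$. After deduplication these are sets, and they decompose disjointly as $T_C=S\cup\langle\opp{l}\rangle$ and $T_R=S\cup S'$, where $S$ is their common part and $S'=\{\opp{l'}\mid l'\in c\setminus\opp{l}\}\setminus S$ collects the genuinely new opposites, each of which satisfies $\opp{l'}\precedes{M}\opp{l}$. To witness $T_R\ \mult{\precedes{M}}\ T_C$ I would reduce $T_R$ to $T_C$ by stripping off the members of $S'$: the first $|S'|-1$ removals are steps of $\mults{\precedes{M}}$ with empty added part, and the final step removes some $\opp{l'}\in S'$ while adding back the single literal $\opp{l}$, which is a legitimate $\mults{\precedes{M}}$ step precisely because $\opp{l'}\precedes{M}\opp{l}$. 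Since $\mult{\precedes{M}}$ is the transitive closure of $\mults{\precedes{M}}$, chaining these steps yields the required comparison, hence $C\succC{M}{\fn resolve}\ C\ c\ l$.

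The main obstacle is the multiset bookkeeping forced by ${\fn remDups}$, together with pinning down the orientation of $\mult{\precedes{M}}$. A literal of $c\setminus\opp{l}$ may already occur in $C\setminus l$, so I cannot simply insert all of $\opp{(c\setminus\opp{l})}$; I must isolate the common submultiset $S$ and the truly new set $S'$ and verify $T_C=S\cup\langle\opp{l}\rangle$ and $T_R=S\cup S'$ as an identity of deduplicated multisets, with $\opp{l}\notin S$ and $S\cap S'=\emptyset$. The delicate degenerate case is $S'=\emptyset$ (a unit reason $c=[\opp{l}]$, where resolution merely deletes $l$ from $C$): here the strict decrease must come purely from the disappearance of $\opp{l}$, i.e.\ from the fact that removing the most recently asserted opposite is itself a decrease of the underlying (recency) multiset order. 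Once the decomposition into $S$ and $S'$ is correct and ${\fn isReason}$ has supplied $\opp{l'}\precedes{M}\opp{l}$ for each new opposite, the ordering step follows directly from the definition of $\mults{\precedes{M}}$.
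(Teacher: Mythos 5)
Your decomposition $T_C = S \cup \langle\opp{l}\rangle$, $T_R = S \cup S'$ (with $T_C$, $T_R$ the deduplicated opposite-multisets of $C$ and of ${\fn resolve}\ C\ c\ l$) and your stepwise chain are correct, and they do establish $T_R \mult{\precedes{M}} T_C$ \emph{whenever $S' \neq \emptyset$}. The genuine gap is the degenerate case $S' = \emptyset$, which you flag but then close with an assertion that points in the wrong direction for the orientation you committed to. If $S' = \emptyset$ then $T_R = S$ and $T_C = S \cup \langle\opp{l}\rangle$, and ``removing the most recently asserted opposite is itself a decrease'' is exactly the single step $T_C \mults{\precedes{M}} T_R$ (remove $\opp{l}$, add nothing), i.e.\ it yields $T_C \mult{\precedes{M}} T_R$ --- the \emph{converse} of what your unfolding requires. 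Since $\mult{\precedes{M}}$ is transitive and well-founded, hence asymmetric, no argument can also give $T_R \mult{\precedes{M}} T_C$ in this case. Concretely, take $M = [\opp{l}]$, $C = [l]$, $c = [\opp{l}]$: then $l \in C$ and ${\fn isReason}\ c\ \opp{l}\ M$ hold (every condition on $c\setminus\opp{l} = [\,]$ is vacuous), ${\fn resolve}\ C\ c\ l = [\,]$, and the relation you set out to prove becomes $\multisetof{} \mult{\precedes{M}} \multisetof{\opp{l}}$, which is impossible because no $\mults{\precedes{M}}$ step applies to the empty multiset. Note also that $S' = \emptyset$ is not confined to unit reasons: it occurs whenever every literal of $c\setminus\opp{l}$ already occurs in $C\setminus l$, an entirely ordinary situation in conflict analysis (reason clauses routinely share literals with $C$).

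What this case exposes is that swapping the two multisets, as in the printed definition of $\succC{M}$, is \emph{not} the same as inverting the underlying precedence relation: removals are ``free'' in only one direction of the multiset extension. The ordering that makes both this lemma and Lemma \ref{lemma:wfClauseSucc} true is the standard multiset decrease on trail positions of the clause's falsified opposites: $C_1$ lies above $C_2$ iff the multiset for $C_2$ is obtained from the multiset for $C_1$ by replacing one element with finitely many elements occurring \emph{strictly earlier} in $M$, where ``finitely many'' includes none. Under that reading the entire lemma is a \emph{single} multiset step with no case split and no chain: $\opp{l}$ is traded for the (possibly empty) set $S'$ of new opposites, each of which strictly precedes $\opp{l}$ in $M$ by ${\fn isReason}$. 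Your multi-step chain is a workaround forced by the flipped orientation, and the pure-removal case is precisely what that workaround cannot absorb; a complete proof either restructures the argument around the position-decrease ordering or first establishes that the operative definition of $\succC{M}$ is the one just described.
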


\begin{lem}
\label{lemma:explainSucc}
If ${\fn explain}\ (M, F, C_1, \mathit{cflct})\ (M, F, C_2, \mathit{cflct})$,
  then $C_1 \succC{M} C_2$.
\end{lem}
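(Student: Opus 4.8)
The plan is to obtain this lemma as a direct corollary of Lemma~\ref{lemma:succResolve}, which already isolates the entire ordering content. First I would unfold the definition of the \rulename{explain} relation applied to the states $(M, F, C_1, \mathit{cflct})$ and $(M, F, C_2, \mathit{cflct})$. This produces witnesses $l$ and $c$ satisfying $l \in C_1$, ${\fn isReason}\ c\ \opp{l}\ M$, $F \entailsClause c$, and, most importantly, the equation $C_2 = {\fn resolve}\ C_1\ c\ l$. The remaining conjuncts of the rule ($\mathit{cflct}_1 = \top$, together with $M_2 = M$ and $F_2 = F$) play no role in the ordering claim and can be discarded.

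With these witnesses fixed, the conclusion is immediate. Lemma~\ref{lemma:succResolve} asserts precisely that whenever $l \in C_1$ and ${\fn isReason}\ c\ \opp{l}\ M$ hold, one has $C_1 \succC{M} {\fn resolve}\ C_1\ c\ l$. Rewriting the right-hand side using the equation $C_2 = {\fn resolve}\ C_1\ c\ l$ extracted above then yields $C_1 \succC{M} C_2$, which is exactly the desired statement.

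Consequently there is no genuine obstacle at the level of this lemma: all of the combinatorial difficulty has been pushed into Lemma~\ref{lemma:succResolve}, which may be assumed. That earlier lemma is where the real work lives, namely in showing that replacing the literal $l$ in $C_1$ by the literals of the reason clause $c$ strictly decreases the multiset $\multisetof{{\fn remDups}\ \opp{C}}$ with respect to $\mult{\precedes{M}}$. There the reason property is what matters: it guarantees that every literal newly introduced by the resolution has its opposite strictly preceding $\opp{l}$ in the trail $M$, so the occurrence of $\opp{l}$ in the multiset is replaced by strictly smaller elements, giving a single step of the multiset extension $\mults{\precedes{M}}$ and hence of its transitive closure. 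Granting that lemma, the present proof reduces to an unfolding of the \rulename{explain} rule followed by one direct instantiation.
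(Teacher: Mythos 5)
Your proposal is correct and matches the paper's own route exactly: the paper states Lemma~\ref{lemma:succResolve} immediately before this one precisely so that Lemma~\ref{lemma:explainSucc} follows by unfolding the \rulename{explain} relation to obtain witnesses $l$ and $c$ with $l \in C_1$, ${\fn isReason}\ c\ \opp{l}\ M$ and $C_2 = {\fn resolve}\ C_1\ c\ l$, and then instantiating Lemma~\ref{lemma:succResolve}. Your identification of where the real combinatorial work lives (in Lemma~\ref{lemma:succResolve}, via the reason property forcing the replacement literals' opposites to precede $\opp{l}$ in $M$) is also consistent with the paper's development.
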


The rule \rulename{learn} changes the state component $F$ (i.e., it adds a
clause to the formula) and it requires constructing an ordering over formulae.

\begin{defi}[$\succF{C}$]
For any clause $C$, $F_1 \succF{C} F_2 \iff  C\notin F_1 \wedge C\in F_2.$
\end{defi}

\begin{lem}
\label{lemma:wfFormulaSucc}
For any clause $C$, the ordering $\succF{C}$ is well-founded.
\end{lem}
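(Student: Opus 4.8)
The plan is to derive well-foundedness of $\succF{C}$ from that of the boolean ordering $\succBool$, which was already established in Lemma \ref{lemma:wfConflictSucc}, by means of the inverse-image principle stated as the second item of Proposition \ref{prop:wfInvImage}. The guiding observation is that, by its very definition, the relation $F_1 \succF{C} F_2$ depends on the two formulae only through the single bit of information ``whether the formula contains $C$''.

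First I would introduce the function ${\fn f}$ sending a formula $F$ to the truth value of $C \in F$, that is, ${\fn f}\ F = \top$ when $C \in F$ and ${\fn f}\ F = \bot$ otherwise. The central step is then to verify the monotonicity hypothesis of the inverse-image principle. Assume $F_1 \succF{C} F_2$; by the defining condition this means $C \notin F_1$ and $C \in F_2$, so ${\fn f}\ F_1 = \bot$ and ${\fn f}\ F_2 = \top$, and therefore ${\fn f}\ F_1 \succBool {\fn f}\ F_2$ holds directly from the definition of $\succBool$ (which declares exactly $\bot \succBool \top$). Since $\succBool$ is well-founded by Lemma \ref{lemma:wfConflictSucc}, the inverse-image principle (Proposition \ref{prop:wfInvImage}, applied with ${\fn f}$ and $\succBool$) immediately yields that $\succF{C}$ is well-founded.

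I do not expect any genuine obstacle: the relation $\succF{C}$ admits no descending chain of length two, since a chain $F_1 \succF{C} F_2 \succF{C} F_3$ would simultaneously force $C \in F_2$ (from the first step) and $C \notin F_2$ (from the second step). The only point requiring a little care is matching the two possible values of ${\fn f}$ to the single pair $\bot \succBool \top$. As a fully self-contained alternative, one may argue directly from the minimal-element characterization of well-foundedness (the first item of Proposition \ref{prop:wfInvImage}): given a nonempty set $Q$ of formulae, if some member of $Q$ contains $C$ then that member is minimal, because it has no $\succF{C}$-successor at all; and if no member of $Q$ contains $C$, then any member $F$ of $Q$ is minimal, because every $F'$ with $F \succF{C} F'$ satisfies $C \in F'$ and hence lies outside $Q$.
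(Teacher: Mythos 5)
Your proof is correct, and it takes essentially the same route as the paper: the relation $\succF{C}$, like $\succBool$, admits no descending chain of length two (a chain $F_1 \succF{C} F_2 \succF{C} F_3$ would force $C \in F_2$ and $C \notin F_2$ simultaneously), so well-foundedness is immediate. Your reduction via the inverse-image principle of Proposition \ref{prop:wfInvImage} with the indicator function of $C \in F$ and Lemma \ref{lemma:wfConflictSucc}, as well as your direct minimal-element argument, are both valid instantiations of this observation and match the machinery the paper itself uses for its termination orderings.
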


By the definition of the \rulename{learn} rule, it holds that $C\notin F_1$ and
$C\in F_2$, so the following lemma trivially holds.

\begin{lem}
\label{lemma:learnSucc}
If ${\fn learn}\ (M, F_1, C, \mathit{cflct})\ (M, F_2, C, \mathit{cflct})$,
then $F_1 \succF{C} F_2$.
\end{lem}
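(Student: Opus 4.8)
The plan is simply to unfold the two definitions involved and match them up; this lemma is essentially immediate. Recall that $F_1 \succF{C} F_2$ holds by definition exactly when $C\notin F_1$ and $C\in F_2$, so it suffices to read off these two facts from the hypothesis that ${\fn learn}\ (M, F_1, C, \mathit{cflct})\ (M, F_2, C, \mathit{cflct})$ holds.

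First I would extract from the definition of the \rulename{learn} rule its guard $C\notin F_1$ (which is stated alongside $\mathit{cflct} = \top$); this gives the first conjunct directly. Next I would use the effect clause $F_2 = F_1\ @\ C$. Since this appends the learnt clause $C$ to the formula $F_1$, the clause $C$ occurs as a member of the resulting list of clauses, so $C\in F_2$, giving the second conjunct. Combining the two and applying the definition of $\succF{C}$ yields $F_1 \succF{C} F_2$.

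There is no genuine obstacle here. The only point deserving a moment's attention is the overloaded $@$ notation: $F_1\ @\ C$ denotes the formula obtained by appending the single clause $C$ to the list of clauses $F_1$ (as illustrated in Example \ref{ex:learn}), and consequently $C$ genuinely occurs in $F_2$. With this reading fixed, the lemma follows by pure unfolding of definitions, exactly as the surrounding text anticipates.
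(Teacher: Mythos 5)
Your proof is correct and matches the paper's own argument: the paper likewise observes that the \rulename{learn} rule's guard gives $C\notin F_1$, its effect $F_2 = F_1\,@\,C$ gives $C\in F_2$, and the conclusion then holds immediately by the definition of $\succF{C}$. Your remark about reading $F_1\,@\,C$ as appending the single clause $C$ to the clause list is the right disambiguation and is consistent with the paper's usage.
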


\begin{thm}[Termination for $\rightarrowc$]
  If the set $\mathit{DecVars}$ is finite, for any formula $F_0$, the relation
  $\rightarrowc$ is well-founded on the set of states $s$ such that
  $s_0\rightarrowc^* s$, where $s_0$ is the initial state for $F_0$.
\end{thm}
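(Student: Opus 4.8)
The plan is to apply the inverse-image part of Proposition~\ref{prop:wfInvImage} with the identity map, so that it suffices to exhibit an ordering $\succ$ on states that is well-founded on the reachable states and satisfies $s \rightarrowc s' \implies s \succ s'$. Following the hint in the text, I would take $\succ$ to be a lexicographic combination of the four component orderings already introduced, arranged in the order $M$, $\mathit{cflct}$, $C$, $F$ (note this is \emph{not} the order in which the components appear in the tuple). Writing each state as $(M,F,C,\mathit{cflct})$, I would declare $(M_1,F_1,C_1,\mathit{cflct}_1) \succ (M_2,F_2,C_2,\mathit{cflct}_2)$ to hold exactly when $(M_1,\mathit{cflct}_1,C_1,F_1)$ dominates $(M_2,\mathit{cflct}_2,C_2,F_2)$ under
$$\lexprodp{\succTrrestrict[\mathit{Vars}]}{\bigl(\lexprod{\succBool}{(\lexprodp{\succC{M}}{\succF{C}})}\bigr)}.$$
The point of this layering is that every rule that leaves $M$ fixed decreases exactly one later component while leaving the earlier ones unchanged: \rulename{conflict} keeps $M$ fixed and drives $\mathit{cflct}$ from $\bot$ to $\top$; \rulename{explain} keeps $M$ and $\mathit{cflct}$ fixed and shrinks $C$; \rulename{learn} keeps $M$, $\mathit{cflct}$ and $C$ fixed and grows $F$; while \rulename{decide}, \rulename{unitPropagate} and \rulename{backjump} strictly decrease the leading $M$-component.

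I would then establish well-foundedness of $\succ$ on the reachable states by nesting the parametrized-lexicographic-product part of Proposition~\ref{prop:wfInvImage}, working from the inside out. For each fixed $M$, the innermost ordering $\lexprodp{\succC{M}}{\succF{C}}$ is well-founded because $\succC{M}$ is (Lemma~\ref{lemma:wfClauseSucc}) and each $\succF{C}$ is (Lemma~\ref{lemma:wfFormulaSucc}); then $\lexprod{\succBool}{(\lexprodp{\succC{M}}{\succF{C}})}$ is well-founded since $\succBool$ is (Lemma~\ref{lemma:wfConflictSucc}); and finally, since $\mathit{Vars}$ is finite, $\succTrrestrict[\mathit{Vars}]$ is well-founded (Lemma~\ref{lemma:wfTrailSuccRestricted}), so the outer parametrized product is well-founded. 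The structural check that makes this go through is that the parameter dependencies respect the lexicographic order: $\succC{M}$ depends only on the $M$-component, which sits strictly before $C$, and $\succF{C}$ depends only on $C$, which sits strictly before $F$; this is precisely the shape required by the parametrized lexicographic product, so Proposition~\ref{prop:wfInvImage} applies at each level.

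Next I would verify $s \rightarrowc s' \implies s \succ s'$ by cases on the rule. For \rulename{decide}, \rulename{unitPropagate} and \rulename{backjump} I would invoke the (updated) Lemma~\ref{lemma:rightarrowsucc} and Lemma~\ref{lemma:backjumpSucc} to get $M_1 \succTr M_2$, and use Lemma~\ref{lemma:invariantsHold} to lift this to $M_1 \succTrrestrict[\mathit{Vars}] M_2$ (the invariants \inv{distinct} and \inv{varsM} supply the side conditions built into $\succTrrestrict[\mathit{Vars}]$); a strict decrease in the leading component forces $s \succ s'$ regardless of the other components, which matters since \rulename{backjump} also resets $C$ and $\mathit{cflct}$. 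For \rulename{conflict}, \rulename{explain} and \rulename{learn} the trail is unchanged ($M_1 = M_2$), so I would descend to the inner orderings via Lemma~\ref{lemma:conflictSucc} ($\mathit{cflct}_1 \succBool \mathit{cflct}_2$), Lemma~\ref{lemma:explainSucc} ($C_1 \succC{M} C_2$ with $\mathit{cflct}$ unchanged), and Lemma~\ref{lemma:learnSucc} ($F_1 \succF{C} F_2$ with $M$, $\mathit{cflct}$, $C$ all unchanged). Combining well-foundedness of $\succ$ with this compatibility, Proposition~\ref{prop:wfInvImage} yields well-foundedness of $\rightarrowc$ on the reachable states.

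The main obstacle I anticipate is bookkeeping rather than genuine difficulty: getting the parametrization exactly right so that each inner relation depends only on components strictly preceding it, and confirming for each of \rulename{conflict}, \rulename{explain} and \rulename{learn} that every component earlier than the one it modifies is left \emph{literally} unchanged (not merely non-increasing), since the lexicographic step needs equality there. Checking that the three $M$-fixing rules truly do not touch $M$, and that \rulename{explain} and \rulename{learn} keep $\mathit{cflct} = \top$ so that $\succBool$ neither helps nor hurts, is exactly the spot where a careless reading of the rule definitions could break the argument.
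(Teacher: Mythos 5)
Your proposal is correct and matches the paper's proof in all essentials: the same ordering of components $(M, \mathit{cflct}, C, F)$, the same four component orderings $\succTrrestrict[\mathit{Vars}]$, $\succBool$, $\succC{M}$, $\succF{C}$ justified by the same lemmas, the same rule-by-rule compatibility check, and the same appeal to Proposition~\ref{prop:wfInvImage}. The only differences are cosmetic --- you nest the (parametrized) lexicographic products to the right where the paper nests them to the left, and you fold the component permutation into the definition of $\succ$ rather than passing the map $(M,F,C,\mathit{cflct}) \mapsto (M,\mathit{cflct},C,F)$ to the inverse-image part of the proposition --- both of which induce exactly the paper's ordering.
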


\begin{proof}
  Let $\succ$ be a (parametrized) lexicographic product (Definition
  \ref{def:lexProd}), i.e., let
  
  $
  \succ\ \ \equiv\ \ \lexprodp{\lexprodp{\lexprod{\succTrrestrict[\mathit{Vars}]}{\succBool}}{\left(\lambda
      s. \succC{M_s}\right)}}{\left(\lambda s. \succF{C_s}\right)},$ 
  
  \noindent    
  where
  $M_s$ is the trail in the state $s$, and $C_s$ is the conflict clause in the
  state $s$.
  By Proposition \ref{prop:wfLeksikografskaKombinacijaUredjenja} and Lemmas
  \ref{lemma:wfTrailSuccRestricted}, \ref{lemma:wfConflictSucc},
  \ref{lemma:wfClauseSucc}, and \ref{lemma:wfFormulaSucc}, the relation $\succ$
  is well-founded.
  If the invariants hold in the state $(M_1, F_1, C_1, \mathit{cflct}_1)$ and if
  $(M_1, F_1, C_1, \mathit{cflct}_1) \rightarrowc (M_2, F_2, C_2,
  \mathit{cflct}_2)$, then $(M_1, \mathit{cflct}_1, C_1, F_1) \succ (M_2,
  \mathit{cflct}_2, C_2, F_2)$. Indeed, by Lemma \ref{lemma:rightarrowsucc}, the
  rules \rulename{decide}, \rulename{unitPropagate} and \rulename{backjump}
  decrease $M$ in the ordering, the rule \rulename{conflict} does not change $M$
  but (by Lemma \ref{lemma:conflictSucc}) decreases $\mathit{cflct}$, the rule
  \rulename{explain} does not change $M$ nor $\mathit{cflct}$, but (by Lemma
  \ref{lemma:explainSucc}) decreases $C$, and the rule \rulename{learn} does not
  change $M$, $\mathit{cflct}$, nor $C$, but (by Lemma \ref{lemma:learnSucc})
  decreases $F$.

  Then the theorem holds by Proposition \ref{prop:wfInvImage} (where ${\fn f}$
  is a permutation mapping $(M,$ $F,$ $C,$ $\mathit{cflct})$ to $(M, \mathit{cflct},
  C, F)$).
\end{proof}

\subsubsection{Completeness and Correctness}

Completeness requires that all final states are outcome states, and the
following two lemmas are used to prove this property.

\begin{lem}
\label{lemma:finalConflictingState}
If for the state $(M, F, C, \mathit{cflct})$ it holds that:
\begin{enumerate}[\em(1)]
\packitems
\item $\mathit{cflct} = \top$,
\item ${\fn unique}\ M$ (i.e., \inv{unique} holds),
\item $\mathit{cflct} \implies M\falsifies C$ (i.e., \inv{Cfalse} holds),
\item the rules \rulename{explain} and \rulename{backjump} are not applicable,
\end{enumerate}
then the state $(M, F, C, \mathit{cflct})$ is a rejecting state and $C = [\,]$.
\end{lem}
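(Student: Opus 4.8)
The plan is to show directly that $C = [\,]$; combined with hypothesis (1), $\mathit{cflct} = \top$, this makes $(M, F, C, \mathit{cflct})$ a rejecting state by definition, so the two conclusions collapse into one. I would argue by contradiction: assuming $C \neq [\,]$, I will exhibit an applicable instance of either \rulename{explain} or \rulename{backjump}, contradicting hypothesis (4).

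First I would extract from \inv{Cfalse} (hypothesis (3)) together with $\mathit{cflct} = \top$ that $M \falsifies C$, so every $l \in C$ has $\opp{l} \in M$. The pivotal intermediate claim is that each such $\opp{l}$ is a \emph{decision} literal, i.e.\ $\opp{C} \subseteq {\fn decisions}\ M$. To prove it, suppose some $l \in C$ had $\opp{l}$ an implied (non-decision) literal of $M$. Then the reason-clause invariant \inv{reasonClauses} yields a clause $c$ with ${\fn isReason}\ c\ \opp{l}\ M$ and $F \entailsClause c$; since $l \in C$ and $\mathit{cflct} = \top$, all guards of \rulename{explain} are met, so \rulename{explain} is applicable --- contradicting (4). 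Hence no opposite of a literal of $C$ is implied, and the claim follows.

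Next, with $C \neq [\,]$, I would set $\opp{l} = {\fn lastAssertedLiteral}\ \opp{C}\ M$. Lemma \ref{lemma:allDecisionsThenUIP} applies verbatim --- its hypotheses are exactly \inv{unique}, $\opp{C} \subseteq {\fn decisions}\ M$, and $\opp{l}$ being the last asserted literal of $\opp{C}$ --- and delivers ${\fn isUIP}\ l\ C\ M$. Since $\opp{l}$ is a decision literal it contributes to ${\fn decisionsTo}$, so ${\fn level}\ \opp{l}\ M \geq 1 > 0$. Feeding ${\fn isUIP}\ l\ C\ M$ and ${\fn level}\ \opp{l}\ M > 0$ into the forward direction of Lemma \ref{lemma:isUIPExistsBackjumpLevel} (which needs \inv{unique}) produces a $level$ with ${\fn isBackjumpLevel}\ level\ l\ C\ M$ --- precisely the guard of \rulename{backjump}. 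Thus \rulename{backjump} is applicable, again contradicting (4). Therefore $C = [\,]$.

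The main obstacle is the intermediate claim $\opp{C} \subseteq {\fn decisions}\ M$: it is the only step that genuinely converts the non-applicability of \rulename{explain} into structural information about the trail, and it rests on \inv{reasonClauses} to guarantee that every implied literal carries an $F$-entailed reason clause (so that \rulename{explain} would indeed fire). This invariant is not among the explicitly listed hypotheses, but it holds whenever this lemma is invoked in the completeness argument, since there all invariants hold by Lemma \ref{lemma:invariantsHold}. Once the claim is secured, the remainder is a routine chaining of Lemmas \ref{lemma:allDecisionsThenUIP} and \ref{lemma:isUIPExistsBackjumpLevel}, and the rejecting-state conclusion is immediate.
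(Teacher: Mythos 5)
Your proof is correct and follows essentially the same route the paper's machinery is built for: non-applicability of \rulename{explain} together with \inv{reasonClauses} forces $\opp{C} \subseteq {\fn decisions}\ M$, after which Lemma \ref{lemma:allDecisionsThenUIP} and the forward direction of Lemma \ref{lemma:isUIPExistsBackjumpLevel} yield an applicable instance of \rulename{backjump}, contradicting hypothesis (4). Your side remark about \inv{reasonClauses} is also accurate and worth keeping: the four listed hypotheses alone do not suffice (a trail containing an implied literal with no $F$-entailed reason clause would block both rules while $C \neq [\,]$), so the invariant must indeed be supplied from the context in which the lemma is invoked, exactly as in the paper's completeness argument.
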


\begin{lem}
\label{lemma:finalNonConflictingState}
If in the state $(M, F, C, \mathit{cflct})$ it holds that $\mathit{cflct} =
\bot$ and the rule \rulename{conflict} is not applicable, then the
state $(M, F, C, \mathit{cflct})$ is an accepting state and $M\nfalsifies F$.
\end{lem}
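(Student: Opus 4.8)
The plan is to establish the genuinely new part of the statement, $M\nfalsifies F$, by reading off the negation of the guard of \rulename{conflict}, and then to assemble the three defining conditions of an accepting state.

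First I would unfold the hypothesis that \rulename{conflict} is inapplicable. By its definition, \rulename{conflict} applies to $(M,F,C,\mathit{cflct})$ exactly when $\mathit{cflct}=\bot$ and there is a clause $c$ with $F\entailsClause c$ and $M\falsifies c$. Since $\mathit{cflct}=\bot$ holds by hypothesis, inapplicability forces $\forall c.\ F\entailsClause c \implies M\nfalsifies c$.

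Next I would argue $M\nfalsifies F$ by contradiction. Suppose $M\falsifies F$. By the definition of a formula being false in a valuation, there is a clause $c\in F$ with $M\falsifies c$. The key observation is that a formula entails each of its own clauses: every model of $F$ satisfies every clause of $F$, so $c\in F$ yields $F\entailsClause c$. This $c$ then witnesses $F\entailsClause c\ \wedge\ M\falsifies c$, contradicting the inapplicability of \rulename{conflict} derived above. Hence $M\nfalsifies F$.

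Finally, to conclude that the state is accepting I would verify the three defining conditions: $\mathit{cflct}=\bot$ holds by hypothesis, $M\nfalsifies F$ has just been shown, and the remaining condition --- that no literal $l$ satisfies ${\fn var}\ l\in\mathit{DecVars}$, $l\notin M$, and $\opp{l}\notin M$ --- is precisely the inapplicability of \rulename{decide}. This lemma is invoked in the completeness theorem at a \emph{final} state, where \rulename{decide} is inapplicable together with \rulename{conflict}, so that condition is supplied by the context (paralleling the explicit listing of inapplicable rules in Lemma \ref{lemma:finalConflictingState}). The only delicate point in the whole argument is the gap between the semantic notion $M\falsifies F$, which asks for a clause physically present in $F$ and false in $M$, and the entailment-based guard $F\entailsClause c$ of \rulename{conflict}; once one notices that membership $c\in F$ already implies $F\entailsClause c$, no real obstacle remains.
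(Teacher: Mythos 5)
Your proof is correct and is essentially the intended argument: since $\mathit{cflct}=\bot$, inapplicability of \rulename{conflict} means no clause entailed by $F$ is false in $M$, and because membership implies entailment (every model of $F$ satisfies each $c\in F$), no clause of $F$ can be false in $M$, i.e., $M\nfalsifies F$. You have also correctly isolated the one delicate point --- the third defining condition of an accepting state (no undefined literal over $\mathit{DecVars}$, which is exactly inapplicability of \rulename{decide}) is not derivable from the two stated hypotheses and must come from the final-state context; this matches how the paper itself uses the lemma, since its completeness theorem explicitly supplies ``the rule \rulename{decide} is not applicable (as the state is final)'' when invoking it.
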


\begin{thm}[Completeness for $\rightarrowc$]
For any formula $F_0$, if $([\,], F_0, [\,], \bot) \rightarrowc^* (M, F, C,
\mathit{cflct})$, and if the state $(M, F, C, \mathit{cflct})$ is final, then it
is either accepting or rejecting.
\end{thm}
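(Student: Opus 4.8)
The plan is to reduce the theorem to the two structural lemmas about final states of $\rightarrowc$ --- Lemma~\ref{lemma:finalNonConflictingState} and Lemma~\ref{lemma:finalConflictingState} --- via a single case analysis on the Boolean flag $\mathit{cflct}$. First I would invoke the (updated) Lemma~\ref{lemma:invariantsHold} for $\rightarrowc$ to conclude that all invariants hold in the reachable state $(M,F,C,\mathit{cflct})$; in particular \inv{unique} and \inv{Cfalse} are available, which are exactly the invariants consumed by the conflicting-case lemma. This invariant-transfer step is what legitimizes using the two lemmas on an arbitrary state reached from the initial one.

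Since $\mathit{cflct}$ is Boolean, its value is either $\bot$ or $\top$, giving an exhaustive split. In the case $\mathit{cflct}=\bot$, finality means that no rule applies, so in particular \rulename{conflict} is not applicable; Lemma~\ref{lemma:finalNonConflictingState} then gives immediately that $M\nfalsifies F$ and that $(M,F,C,\mathit{cflct})$ is an accepting state. In the case $\mathit{cflct}=\top$, finality again gives that no rule applies, so in particular neither \rulename{explain} nor \rulename{backjump} is applicable; feeding this together with \inv{unique} and \inv{Cfalse} into Lemma~\ref{lemma:finalConflictingState} yields that the state is a rejecting state with $C=[\,]$. In either case the final state is an outcome state, which is precisely the claim.

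At this level the theorem is essentially a bookkeeping case split, and I expect no genuine obstacle once the two lemmas are in hand. The real content sits inside Lemma~\ref{lemma:finalConflictingState}: one must show that when $\mathit{cflct}=\top$ and both \rulename{explain} and \rulename{backjump} are blocked, the conflict clause $C$ must be empty. This is where the invariants earn their keep --- \inv{Cfalse} guarantees $M\falsifies C$, so every literal of $C$ is false in $M$, and \inv{unique} makes the notion of the last asserted literal of $\opp{C}$ and its decision level well behaved. The point to verify there is that a nonempty $C$ would furnish either a backjump level for its last falsified literal (enabling \rulename{backjump}) or a propagated literal whose reason clause could be resolved against $C$ (enabling \rulename{explain}); the simultaneous failure of both forces $C=[\,]$. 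Since that lemma may be assumed here, the completeness theorem closes immediately after the case analysis.
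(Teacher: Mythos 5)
Your proposal is correct and takes essentially the same route as the paper's own proof: invoke the updated Lemma~\ref{lemma:invariantsHold} to transfer the invariants (in particular \inv{unique} and \inv{Cfalse}) to the reachable final state, split on the Boolean $\mathit{cflct}$, and close the two cases with Lemma~\ref{lemma:finalNonConflictingState} and Lemma~\ref{lemma:finalConflictingState} respectively. In fact your write-up matches the two lemmas' stated hypotheses and conclusions more faithfully than the paper's proof text, which contains apparent typos (it swaps ``accepting'' and ``rejecting'' in the two cases and cites the non-applicability of \rulename{decide} and \rulename{conflict} where the lemmas ask for \rulename{conflict} and for \rulename{explain}/\rulename{backjump}).
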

\begin{proof}
Since the state $(M, F, C, \mathit{cflct})$ is reachable from the initial state,
by Lemma \ref{lemma:invariantsHold}, all the invariants hold in this state,
including ${\fn unique}\ M$ (i.e., \inv{unique}), and $\mathit{cflct} \implies
M\falsifies C$ (i.e., \inv{Cfalse}). In the state $(M, F, C, \mathit{cflct})$,
it holds that either $\mathit{cflct} = \top$ or $\mathit{cflct} = \bot$. If 
$\mathit{cflct} = \bot$, since the rule \rulename{decide} is not applicable 
(as the state is final), by Lemma \ref{lemma:finalNonConflictingState}, the 
state $(M, F, C, \mathit{cflct})$ is a rejecting state. If 
$\mathit{cflct} = \top$, since the rule \rulename{conflict} is not applicable 
(as the state is final) by Lemma \ref{lemma:finalConflictingState}, the state 
is an accepting state.
\end{proof}

Correctness of the system is proved in analogy with Theorem
\ref{thm:correctnes}.

\section{Restarting and Forgetting}
\label{sec:restart}

In this section we extend the previous system with restarting and
forgetting. The most challenging task with restarting is to ensure termination.

Many solvers use restarting and forgetting schemes that apply restarting with
\emph{increasing periodicity} and there are theoretical results ensuring total
correctness of these \cite{Krstic-Frocos07,NieOT-Jacm06}. However, modern
solvers also use \emph{aggressive restarting schemes} (e.g., Luby restarts) that
apply the \rulename{restart} rule very frequently, but there are no
corresponding theoretical results that ensure termination of these schemes. In
this section we will formulate a system that allows application of the restart
rule after each conflict and show that this (weakly constrained, hence 
potentially extremely frequent) scheme also ensures termination.

\subsection{States and Rules}

Unlike previous systems that tend to be as abstract as possible, this system
aims to precisely describe the behaviour of modern \sat{} solvers. For example,
only learnt clauses can be forgotten. So, to aid the \rulename{forget} rule, the
formula is split to the initial part $F_0$ and the learnt clauses $\mathit{Fl}$.
Since the input formula $F_0$ is fixed it is not a part of state anymore, but
rather an input parameter. The new component of the state --- the $\mathit{lnt}$
flag --- has a role in ensuring termination by preventing applying
\rulename{restart} and \rulename{forget} twice without learning a clause in
between. In addition, some changes in the rules ensure termination of some
variants of the system. Unit propagation is performed eagerly, i.e.,
\rulename{decide} is not applied when there is a unit clause present. Also,
backjumping is always performed to the minimal backjump level (Definition
\ref{def:minimalBackjumpLevel}). These stronger conditions are very often obeyed
in real \sat{} solver implementations, and so this system still makes their
faithful model.

\begin{defi}[State]
A state of the system is a five-tuple $(M,\mathit{Fl}, C, \mathit{cflct},
\mathit{lnt})$, where $M$ is a trail, $Fl$ is a formula, $C$ is a clause, and
$\mathit{cflct}$ and $\mathit{lnt}$ are Boolean variables. A state $([\,],
F_0, [\,], \bot, \bot)$ is a {\em initial state} for the input formula $F_0$.
\end{defi}

\begin{defi}[Transition rules]\hfill

\begin{small}
\noindent ${\fn decide}\ (M_1, \mathit{Fl}_1, C_1, \mathit{cflct}_1,
\mathit{lnt}_1)\ (M_2, \mathit{Fl}_2, C_2, \mathit{cflct}_2, \mathit{lnt}_2) \iff$
\vspace{-1mm}
\begin{eqnarray*}
\exists l.& &{\fn var}\ l\in \mathit{DecVars}\ \wedge\ l\notin M_1\ \wedge\ \opp{l}\notin M_1\ \wedge\ \\
& & \neg (\exists\ c\ l.\ c \in F_0\,@\,\mathit{Fl}_1 \ \wedge\ {\fn isUnitClause\ c\ l\ M_1})\ \wedge\\
& &M_2 = M_1\,@\,\decision{l} \ \wedge\ \mathit{Fl}_2 = \mathit{Fl}_1\ \wedge\ C_2 = C_1\ \wedge\ \mathit{cflct}_2 = \mathit{cflct}_1\ \wedge\ \mathit{lnt}_2 = \mathit{lnt}_1
\end{eqnarray*}

\noindent ${\fn unitPropagate}\ \ (M_1, \mathit{Fl}_1, C_1, \mathit{cflct}_1,
\mathit{lnt}_1)\ (M_2, \mathit{Fl}_2, C_2, \mathit{cflct}_2, \mathit{lnt}_2)\iff$
\vspace{-1mm}
\begin{eqnarray*}
\exists c\ l.& & c \in F_0\,@\,\mathit{Fl}_1\ \wedge\ {\fn isUnit}\ c\ l\ M_1\ \wedge\ \\
& &M_2 = M_1\,@\,\nondecision{l}\ \wedge\ \mathit{Fl}_2 = \mathit{Fl}_1\ \wedge\ C_2 = C_1 \ \wedge\ \mathit{cflct}_2 = \mathit{cflct}_1 \ \wedge\ \mathit{lnt}_2 = \mathit{lnt}_1
\end{eqnarray*}

\noindent ${\fn conflict}\ \ (M_1, \mathit{Fl}_1, C_1, \mathit{cflct}_1,
\mathit{lnt}_1)\ (M_2, \mathit{Fl}_2, C_2, \mathit{cflct}_2, \mathit{lnt}_2)\iff$
\vspace{-1mm}
\begin{eqnarray*}
\exists c.& &\mathit{cflct}_1 = \bot\ \wedge\ c \in F_0\,@\,\mathit{Fl}_1\ \wedge\ M_1\falsifies c \wedge\ \\
& &M_2 = M_1 \ \wedge\ \mathit{Fl}_2 = \mathit{Fl}_1\ \wedge\ C_2 = c \ \wedge\ \mathit{cflct}_2 = \top\ \wedge\ \mathit{lnt}_2 = \mathit{lnt}_1
\end{eqnarray*}

\noindent ${\fn explain}\ \ (M_1, \mathit{Fl}_1, C_1, \mathit{cflct}_1,
\mathit{lnt}_1)\ (M_2, \mathit{Fl}_2, C_2, \mathit{cflct}_2, \mathit{lnt}_2)\iff$
\vspace{-1mm}
\begin{eqnarray*}
\exists\ l\ c.& &\mathit{cflct}_1 = \top\ \wedge\ l\in C_1\ \wedge\ {\fn isReason}\
mc\ \opp{l}\ M_1\ \wedge\ c \in F_0\,@\,\mathit{Fl}_1\ \wedge\\
& &M_2 = M_1 \ \wedge\ \mathit{Fl}_2 = \mathit{Fl}_1\ \wedge\ C_2 = {\fn resolve}\ C_1\ c\ l \ \wedge\ \mathit{cflct}_2 = \top \ \wedge\ \mathit{lnt}_2 = \mathit{lnt}_1
\end{eqnarray*}

\noindent ${\fn backjumpLearn}\ (M_1, \mathit{Fl}_1, C_1, \mathit{cflct}_1,
\mathit{lnt}_1)\ (M_2, \mathit{Fl}_2, C_2, \mathit{cflct}_2, \mathit{lnt}_2)\iff$
\vspace{-1mm}
\begin{eqnarray*}
\exists c\ l\ level.& &\mathit{cflct}_1 = \top\ \wedge\ {\fn isMinimalBackjumpLevel}\ level\ l\ C_1\ M_1\ \wedge\ \\
& & M_2 = ({\fn prefixToLevel}\ level\ M_1)\,@\,\nondecision{l} \ \wedge\ \mathit{Fl}_2 = \mathit{Fl}_1 @ [C_1]\ \wedge\\
& & C_2 = [\,] \ \wedge\ \mathit{cflct}_2 = \bot \ \wedge\ \mathit{lnt}_2 = \top
\end{eqnarray*}

\noindent ${\fn forget}\ (M_1, \mathit{Fl}_1, C_1, \mathit{cflct}_1,
\mathit{lnt}_1)\ (M_2, \mathit{Fl}_2, C_2, \mathit{cflct}_2, \mathit{lnt}_2)\iff$
\vspace{-1mm}
\begin{eqnarray*}
\exists\ \mathit{Fc}.\ & & \mathit{cflct}_1 = \bot\ \wedge\ \mathit{lnt_1} = \top \\
 & &  \mathit{Fc} \subseteq \mathit{Fl} \ \wedge\ (\forall\ c \in \mathit{Fc}.\ \neg (\exists\ l.\ {\fn isReason}\ c\ l\ M_1)) \wedge\\
& & \mathit{Fl}_2 = \mathit{Fl}_1 \setminus \mathit{Fc}\ \wedge\ M_2 = M_1 \ \wedge\ C_2 = C_1 \ \wedge\ \mathit{cflct}_2 = \mathit{cflct}_1 \ \wedge\ \mathit{lnt}_2 = \bot
\end{eqnarray*}

\noindent ${\fn restart}\ (M_1, \mathit{Fl}_1, C_1, \mathit{cflct}_1,
\mathit{lnt}_1)\ (M_2, \mathit{Fl}_2, C_2, \mathit{cflct}_2, \mathit{lnt}_2) \iff$
\vspace{-1mm}
\begin{eqnarray*}
& & \mathit{cflct_1} = \bot \ \wedge\ \mathit{lnt_1} = \top \ \wedge\ \\
& & M_2 = {\fn prefixToLevel}\ 0\ M_1 \ \wedge\ \mathit{Fl}_2 = \mathit{Fl}_1\ \wedge\ C_2 = C_1\ \wedge\ \mathit{cflct}_2 = \mathit{cflct}_1 \ \wedge\ \mathit{lnt}_2 = \bot
\end{eqnarray*}
\end{small}
\end{defi}

These rules will be used to formulate three different transition systems. The
system $\rightarrowr$ consists of all rules except \rulename{restart}, the
system $\rightarrowf$ consists of all rules except \rulename{forget}, and the
system $\rightarrow$ consists of all rules.

\subsection{Properties}
\label{subsec:RestartProperties}

The structure of the invariants and the proofs of the properties of the system
are basically similar to those given in Section \ref{sec:conflictAnalysis},
while the termination proof requires a number of new insights.

\subsubsection{Invariants}
All invariants formulated so far hold, but the formula $F$, not present in the
new state, has to be replaced by $F_0\,@\,\mathit{Fl}$.

\subsubsection{Termination}

Termination of the system without restarts is proved first.

\begin{thm}[Termination for $\rightarrowr$]
If the set $\mathit{DecVars}$ is finite, for any formula $F_0$, the relation
$\rightarrowr$ is well-founded on the set of states $s$ such that
$s_0\rightarrowr^* s$, where $s_0$ is the initial state for $F_0$.
\end{thm}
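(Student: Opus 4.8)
The plan is to reuse the machinery of the termination proof for $\rightarrowc$: I would build a well-founded ordering $\succ$ on states as a (parametrized) lexicographic combination of component orderings, exhibit a map $f$ from states into the ordered set, and verify that every rule strictly decreases $f$. Termination then follows from Proposition \ref{prop:wfInvImage}. As before, the trail ordering $\succTrrestrict[\mathit{Vars}]$ is only well-founded on states meeting the distinctness and $\mathit{vars}$ invariants, so I would first re-establish the invariant lemma (the analogue of Lemma \ref{lemma:invariantsHold}) for the new rule set and restrict attention to states reachable from $s_0$.

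First I would tabulate how each rule acts on a state $(M, \mathit{Fl}, C, \mathit{cflct}, \mathit{lnt})$. The rules \rulename{decide}, \rulename{unitPropagate}, and \rulename{backjumpLearn} all modify $M$ and strictly decrease it in $\succTr$; for \rulename{backjumpLearn} this is because it backjumps to a minimal backjump level, which is strictly below the current level, so Lemma \ref{lemma:lexLessPrefixToLevel} applies. The rule \rulename{conflict} fixes $M$ and drives $\mathit{cflct}$ from $\bot$ to $\top$, decreasing it in $\succBool$; \rulename{explain} fixes $M$ and $\mathit{cflct}$ and shrinks $C$ in $\succC{M}$ (by the analogues of Lemmas \ref{lemma:succResolve} and \ref{lemma:explainSucc}). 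The delicate rule is \rulename{forget}: it leaves $M$, $\mathit{cflct}$, and $C$ all unchanged, merely deleting clauses from $\mathit{Fl}$ and flipping $\mathit{lnt}$ from $\top$ to $\bot$.

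Accordingly I would take, on the reordered and projected tuple $(M, \mathit{cflct}, C, \mathit{lnt})$,
$$\succ\ \equiv\ \lexprodp{\lexprodp{\lexprod{\succTrrestrict[\mathit{Vars}]}{\succBool}}{\left(\lambda s.\ \succC{M_s}\right)}}{\left(\lambda s.\ \succ_{\mathit{lnt}}\right)},$$
where $\succ_{\mathit{lnt}}$ is the \emph{mirror image} of $\succBool$, i.e.\ the well-founded Boolean ordering with $\top \succ_{\mathit{lnt}} \bot$ (note the direction is opposite to that of $\mathit{cflct}$, since forget progresses from $\top$ to $\bot$). Crucially, the formula component $\mathit{Fl}$ is dropped from $f$ entirely: learning is now fused into \rulename{backjumpLearn}, which already decreases the dominant component $M$, so no separate formula ordering is needed, and the least-significant slot is instead used to bound forgets. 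Well-foundedness of $\succ$ follows from Proposition \ref{prop:wfLeksikografskaKombinacijaUredjenja} together with Lemmas \ref{lemma:wfTrailSuccRestricted}, \ref{lemma:wfConflictSucc}, and \ref{lemma:wfClauseSucc} and the trivial well-foundedness of $\succ_{\mathit{lnt}}$; the component-wise decreases tabulated above then give $s \rightarrowr s' \Rightarrow f(s) \succ f(s')$, and Proposition \ref{prop:wfInvImage} closes the argument.

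The main obstacle is conceptual: seeing why termination survives the reintroduction of \rulename{forget}, which Section \ref{sec:learningForgetting} left unresolved. The point is that forget is gated by $\mathit{lnt} = \top$ and immediately resets $\mathit{lnt}$ to $\bot$, while the only rule that can restore $\mathit{lnt} = \top$ is \rulename{backjumpLearn} --- and that rule strictly decreases $M$. Placing $\mathit{lnt}$ as the least-significant lexicographic coordinate, below $M$, faithfully encodes this alternation: two consecutive forgets are impossible without an intervening $M$-decreasing step, so no infinite descent can eventually consist of forgets. Verifying that every non-forget rule either decreases a strictly more significant coordinate or is irrelevant to $\mathit{lnt}$ is the careful bookkeeping step; once done, the theorem is immediate.
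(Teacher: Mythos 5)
Your proposal is correct and matches the paper's proof essentially exactly: the paper also orders the projected tuple $(M,\mathit{cflct},C,\mathit{lnt})$ lexicographically via $\succTrrestrict[\mathit{Vars}]$, $\succBool$, $\lambda s.\,\succC{M_s}$, and a Boolean ordering on the learnt-flag (the paper writes it as $\succBool$ applied to $\neg\mathit{lnt}$, which is the same as your mirrored $\succ_{\mathit{lnt}}$ applied to $\mathit{lnt}$), drops $\mathit{Fl}$ from the mapping, and concludes by Proposition \ref{prop:wfInvImage} on invariant-satisfying reachable states. Your citation of Lemma \ref{lemma:lexLessPrefixToLevel} for the \rulename{backjumpLearn} case is, if anything, slightly more precise than the paper's appeal to Lemma \ref{lemma:rightarrowsucc}.
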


\begin{proof}
Let $\succ$ be a (parametrized) lexicographic product (Definition
\ref{def:lexProd}), i.e., let

$
\succ\ \ \equiv\ \ \lexprod{\lexprodp{\lexprod{\succTrrestrict[\mathit{Vars}]}{\succBool}}{\left(\lambda
    s.\succC{M_s}\right)}}{\succBool},$ 

\noindent    
where $M_s$ is the trail in the state
$s$. By Proposition \ref{prop:wfLeksikografskaKombinacijaUredjenja} and Lemmas
\ref{lemma:wfTrailSuccRestricted}, \ref{lemma:wfConflictSucc}, and
\ref{lemma:wfClauseSucc}, the relation $\succ$ is well-founded.
If the state $(M_1,\mathit{Fl}_1, C_1, \mathit{cflct}_1, \mathit{lnt}_1)$
satisfies the invariants and if 
$(M_1,\mathit{Fl}_1, C_1, \mathit{cflct}_1,\mathit{lnt}_1) \rightarrowr 
(M_2,\mathit{Fl}_2, C_2, \mathit{cflct}_2,\mathit{lnt}_2),$ 
then 
$(M_1, \mathit{cflct}_1, C_1, \neg\mathit{lnt}_1) \succ (M_2, \mathit{cflct}_2, C_2, \neg\mathit{lnt}_2).$ 
Indeed, by Lemma
\ref{lemma:rightarrowsucc} the rules \rulename{unitPropagate}, \rulename{decide}
and \rulename{backjumpLearn} decrease $M$, the rule \rulename{conflict} does
not change $M$ but (by Lemma \ref{lemma:conflictSucc}) decreases
$\mathit{cflct}$, the rule \rulename{explain} does not change $M$ nor
$\mathit{cflct}$, but (by Lemma \ref{lemma:explainSucc}) decreases $C$, and the
rule \rulename{forget} does not change $M$, $\mathit{cflct}$, nor $C$, but
decreases $\neg \mathit{lnt}$.

From the above, the theorem holds by Proposition \ref{prop:wfInvImage} (for a 
suitable ${\fn f}$).
\end{proof}

The termination proof of the system without forgets is more involved.  We define
a (not necessarily well-founded) ordering of the formulae by inclusion and its
restriction with respect to the set of variables occurring in the formula.

\begin{defi}[$\succFIncl$]
$F_1\succFIncl F_2 \iff F_1 \subset F_2$.
\end{defi}

\begin{defi}[$\succFInclrestrict{\mathit{Vbl}}$]
$F_1 \succFInclrestrict{\mathit{Vbl}}\, F_2 \iff {\fn vars}\ F_1 \subseteq
  \mathit{Vlb}\ \wedge\ {\fn vars}\ F_1 \subseteq \mathit{Vbl}
  \ \wedge\ \overline{F_1} \succFIncl \overline{F_2},$ where $\overline{F}$
  denotes the formula obtained by removing duplicate literals from clauses and
  removing duplicate clauses.
\end{defi}

\begin{lem}
\label{lemma:wfFInclrestrict}
If the set $\mathit{Vbl}$ is finite, then the relation
$\succFInclrestrict{\mathit{Vbl}}$ is well-founded.
\end{lem}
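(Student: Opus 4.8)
The plan is to reduce well-foundedness to the finiteness of the universe of normalized clauses over $\mathit{Vbl}$ and then invoke the inverse-image principle of Proposition~\ref{prop:wfInvImage} (second item) with a natural-number measure. First I would observe that, since $\mathit{Vbl}$ is finite, the set of literals whose variable lies in $\mathit{Vbl}$ is finite (two per variable), so the collection $\mathcal{C}$ of all duplicate-free clauses built from such literals is finite as well: each such clause is determined by the finitely many literals over $\mathit{Vbl}$ it contains. For any formula $F$ with ${\fn vars}\ F \subseteq \mathit{Vbl}$, the normalized formula $\overline{F}$ is a duplicate-free collection of duplicate-free clauses over $\mathit{Vbl}$, hence may be viewed as a subset $\overline{F}\subseteq\mathcal{C}$.

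Next I would define the measure ${\fn f}\ F = |\mathcal{C}| - |\overline{F}|$ (equivalently $|\mathcal{C}\setminus\overline{F}|$), a natural number, and target the well-founded relation $>$ on $\mathbb{N}$. The key verification is that $F_1 \succFInclrestrict{\mathit{Vbl}} F_2$ implies ${\fn f}\ F_1 > {\fn f}\ F_2$: by definition of $\succFInclrestrict{\mathit{Vbl}}$ the variables of both $F_1$ and $F_2$ lie in $\mathit{Vbl}$ (so $\overline{F_1},\overline{F_2}\subseteq\mathcal{C}$) and $\overline{F_1}\succFIncl\overline{F_2}$, i.e.\ $\overline{F_1}\subset\overline{F_2}$ is a \emph{proper} inclusion. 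Since both are finite subsets of the finite set $\mathcal{C}$, this yields $|\overline{F_1}| < |\overline{F_2}|$ and therefore ${\fn f}\ F_1 > {\fn f}\ F_2$. Applying Proposition~\ref{prop:wfInvImage} with this ${\fn f}$ and with $\succ'$ taken to be the usual order $>$ on $\mathbb{N}$ then gives well-foundedness of $\succFInclrestrict{\mathit{Vbl}}$; intuitively, any purported infinite descending chain would produce a strictly ascending chain $\overline{F_0}\subset\overline{F_1}\subset\cdots$ of subsets of $\mathcal{C}$ with strictly increasing cardinalities bounded by $|\mathcal{C}|$, which is impossible.

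I expect the only genuinely substantive step to be the first one: turning the finiteness of $\mathit{Vbl}$ into finiteness of the normalized-clause universe $\mathcal{C}$, and checking that the normalization operation $\overline{\cdot}$ (which removes duplicate literals within each clause and duplicate clauses within the formula) indeed places $\overline{F}$ in $\mathcal{C}$ whenever ${\fn vars}\ F\subseteq\mathit{Vbl}$. Everything after that is a routine cardinality argument that the measure ${\fn f}$ packages cleanly. (One small point: as printed, the definition of $\succFInclrestrict{\mathit{Vbl}}$ repeats the condition on $F_1$ and writes $\mathit{Vlb}$ once; the argument above uses the evidently intended reading ${\fn vars}\ F_1\subseteq\mathit{Vbl}\wedge{\fn vars}\ F_2\subseteq\mathit{Vbl}$, constraining both formulas so that $\overline{F_1},\overline{F_2}\subseteq\mathcal{C}$.)
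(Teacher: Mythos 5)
Your proof is correct and takes essentially the same route as the paper's: the finiteness of $\mathit{Vbl}$ yields a finite universe of normalized clauses, so properly increasing chains $\overline{F_1}\subset\overline{F_2}\subset\cdots$ are impossible, which you package cleanly via the inverse-image principle of Proposition~\ref{prop:wfInvImage} with the cardinality measure $|\mathcal{C}|-|\overline{F}|$ into $(\mathbb{N},>)$. One harmless slip in your justification of finiteness: since clauses are \emph{lists}, a duplicate-free clause is not determined by the set of literals it contains (order matters), but $\mathcal{C}$ is nevertheless finite because there are only finitely many duplicate-free lists over the finite alphabet of literals of $\mathit{Vbl}$, so the argument goes through unchanged.
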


The following lemma states that if unit propagation is done eagerly and if
backjumping is always performed to the minimal backjump level, then the clauses
that are learnt are always fresh, i.e., they do not belong to the current
formula.

\begin{lem}
\label{lemma:backjumpClauseIsNotPresent}
If $s_0$ is an initial state, $s_0 \rightarrowf^* s_A$ and ${\fn
  backjumpLearn}\ s_A\ s_B$, where $s_A = (M_A, \mathit{Fl}_A, C_A, \top,
\mathit{lnt}_A)$, then $C_A \notin F_0\,@\,\mathit{Fl}_A$.
\end{lem}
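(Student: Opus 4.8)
The plan is to reduce the lemma to a single structural invariant $\Phi$ maintained along the whole $\rightarrowf$-derivation: \emph{for every decision literal $d$ occurring in the current trail $M$ at level $j$, no clause of the current formula $F_0\,@\,\mathit{Fl}$ is unit in the prefix ${\fn prefixToLevel}\ (j-1)\ M$.} First I would check that $\Phi$, evaluated at $s_A$, already yields the lemma. By \inv{Cfalse} we have $M_A\falsifies C_A$, and ${\fn isMinimalBackjumpLevel}\ level\ l\ C_A\ M_A$ singles out the literal $l\in C_A$ with $\opp{l} = {\fn lastAssertedLiteral}\ \opp{C_A}\ M_A$, of level $k := {\fn level}\ \opp{l}\ M_A$, while every other $\opp{l'}$ with $l'\in C_A$, $l'\neq l$, has level at most $level < k$; since a backjump level satisfies $0\le level < k$ we get $k\ge 1$, so there is a decision literal $d$ at level $k$. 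Writing $P = {\fn prefixToLevel}\ (k-1)\ M_A$, all opposites of $C_A$ other than $\opp{l}$ lie in $P$ (their levels are $< k$), whereas $l$ is undefined in $P$ (as $\opp{l}$ has level $k$ and, by \inv{consistent}, $l\notin M_A$); hence ${\fn isUnit}\ C_A\ l\ P$. Applying $\Phi$ to the decision $d$ then forces $C_A\notin F_0\,@\,\mathit{Fl}_A$, which is exactly the claim.

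It then remains to establish $\Phi$ by induction on the length of $s_0\rightarrowf^* s_A$. In the initial state the trail has no decision literals, so $\Phi$ holds vacuously. The rules \rulename{unitPropagate}, \rulename{conflict} and \rulename{explain} leave $\mathit{Fl}$ unchanged and extend $M$ only at the current level, so for each decision $d$ at level $j$ the prefix ${\fn prefixToLevel}\ (j-1)\ M$ is unchanged and the property is inherited. The rule \rulename{decide} introduces a new decision exactly when its eager-propagation guard forbids any unit clause of $F_0\,@\,\mathit{Fl}_1$ in the current trail, which is precisely ${\fn prefixToLevel}\ (j-1)\ M_2$ for the new level $j$; older decisions are unaffected as above. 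The rule \rulename{restart} truncates $M$ to ${\fn prefixToLevel}\ 0\ M$, after which there are no decision literals, so $\Phi$ again holds vacuously.

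The one delicate case, and the main obstacle, is \rulename{backjumpLearn}, which simultaneously truncates the trail to ${\fn prefixToLevel}\ level\ M_1$ and enlarges the formula by the learnt clause $C_1$. Every decision $d$ surviving the truncation sits at some level $j\le level$, and since $j-1 < level$ the prefix ${\fn prefixToLevel}\ (j-1)$ is identical in $M_1$ and $M_2$; thus the inductive hypothesis rules out every \emph{old} clause being unit there, and only the fresh clause $C_1$ must be checked. Here I would use the \emph{minimality} of the backjump level: if $C_1$ were unit in ${\fn prefixToLevel}\ (j-1)\ M_2$ for such a $j\le level$, then the asserting literal would be the unique undefined one and every other literal of $C_1$ would have its opposite at a level $\le j-1 < level$, so the maximum of those levels would itself be a backjump level strictly below $level$, contradicting ${\fn isMinimalBackjumpLevel}\ level\ l\ C_1\ M_1$. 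Hence $C_1$ is not unit in any such prefix and $\Phi$ is preserved. The only genuine labour in this plan is the level/prefix book-keeping under truncation in this last case; the two decisive conceptual points are the eager-propagation guard of \rulename{decide} (driving the base and \rulename{decide} step) and the minimality of the backjump level (driving the \rulename{backjumpLearn} step), which are precisely the two strengthenings that set this system apart from those of the earlier sections.
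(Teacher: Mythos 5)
Your proposal is correct and follows essentially the same route as the paper's own (appendix/Isabelle) proof: the formalization maintains precisely your invariant $\Phi$ --- phrased there as ``no clause of the current formula $F_0\,@\,\mathit{Fl}$ is unit in ${\fn prefixToLevel}\ \mathit{level}'\ M$ for any $\mathit{level}' < {\fn currentLevel}\ M$,'' which is equivalent to your per-decision formulation --- and it is preserved through \rulename{decide} by the eager-propagation guard and through \rulename{backjumpLearn} by minimality of the backjump level, then instantiated at $s_A$ exactly as you do (the paper could shortcut your first paragraph via Lemma~\ref{lemma:BackjumpImplementationEnsuresUnitClause}, which already gives ${\fn isUnit}\ C_A\ l\ ({\fn prefixToLevel}\ \mathit{level}\ M_A)$, but your derivation with $k-1$ is equally valid). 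One step you should make explicit in the \rulename{backjumpLearn} case: since ${\fn isMinimalBackjumpLevel}$ only quantifies over backjump levels for the \emph{fixed} literal $l$, you must separately exclude the possibility that $C_1$ is unit in ${\fn prefixToLevel}\ (j-1)\ M_1$ (with $j \leq \mathit{level}$) with a unit literal $l'' \neq l$ --- but then $l \in C_1 \setminus l''$ would be falsified there, so ${\fn level}\ \opp{l}\ M_1 \leq j-1 < \mathit{level}$, contradicting $\mathit{level} < {\fn level}\ \opp{l}\ M_1$ from ${\fn isBackjumpLevel}$, after which your argument goes through.
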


Therefore, \rulename{backjumpLearn} increases formula in the inclusion ordering.

\begin{lem}
\label{lemma:backjumpLearnSucc}
If $s_0 \rightarrowf s_A$ and ${\fn backjumpLearn}\ s_A\ s_B$ for initial
state $s_0$ and states $s_A$ and $s_B$, then $F_0\,@\,\mathit{Fl}_A
\succFInclrestrict{Vars} F_0\,@\,\mathit{Fl}_B$, where $F_A$ and $F_B$ are
formulae in states $s_A$ and $s_B$.
\end{lem}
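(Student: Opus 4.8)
The plan is to unfold the \rulename{backjumpLearn} rule, reduce the goal to the three conjuncts in the definition of $\succFInclrestrict{\mathit{Vbl}}$, and discharge them using the invariants together with the freshness guarantee of Lemma~\ref{lemma:backjumpClauseIsNotPresent}. First I would read off from the rule that the only change to the formula component is $\mathit{Fl}_B = \mathit{Fl}_A\,@\,[C_A]$, so that $F_0\,@\,\mathit{Fl}_B = (F_0\,@\,\mathit{Fl}_A)\,@\,[C_A]$; that is, the step from $s_A$ to $s_B$ merely appends the single learnt clause $C_A$. Unfolding $\succFInclrestrict{\mathit{Vars}}$, it then suffices to prove: (a)~${\fn vars}\ (F_0\,@\,\mathit{Fl}_A) \subseteq \mathit{Vars}$; (b)~${\fn vars}\ (F_0\,@\,\mathit{Fl}_B) \subseteq \mathit{Vars}$; and (c)~$\overline{F_0\,@\,\mathit{Fl}_A} \succFIncl \overline{F_0\,@\,\mathit{Fl}_B}$, i.e.\ proper inclusion of the normalized formulae. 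Since $s_A$ is reachable from the initial state $s_0$, the (suitably adapted) Lemma~\ref{lemma:invariantsHold} makes all invariants available at $s_A$.

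Obligation (a) is exactly invariant \inv{varsF} at $s_A$. For (b), note that ${\fn vars}\ (F_0\,@\,\mathit{Fl}_B) = {\fn vars}\ (F_0\,@\,\mathit{Fl}_A)\,\cup\,{\fn vars}\ C_A$, so it reduces to ${\fn vars}\ C_A \subseteq \mathit{Vars}$. The precondition of \rulename{backjumpLearn} gives $\mathit{cflct}_A = \top$, whence invariant \inv{Cfalse} yields $M_A \falsifies C_A$; thus every literal of $C_A$ has its opposite in $M_A$, giving ${\fn vars}\ C_A \subseteq {\fn vars}\ M_A$, and invariant \inv{varsM} then closes the argument.

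The heart of the proof is obligation (c). The non-strict inclusion $\overline{F_0\,@\,\mathit{Fl}_A} \subseteq \overline{F_0\,@\,\mathit{Fl}_B}$ is immediate, since appending a clause can only enlarge the set of normalized clauses. For \emph{strictness} I need the normalized clause $\overline{C_A}$ to be absent from $\overline{F_0\,@\,\mathit{Fl}_A}$, and this is precisely the content of Lemma~\ref{lemma:backjumpClauseIsNotPresent}, which uses the eager-propagation and minimal-backjump-level regime of this system to conclude $C_A \notin F_0\,@\,\mathit{Fl}_A$. The main obstacle is a mismatch of granularity: Lemma~\ref{lemma:backjumpClauseIsNotPresent} asserts freshness of $C_A$ as a clause, whereas strictness of $\succFIncl$ is phrased at the level of the normalized formula $\overline{\cdot}$ (clauses as sets of literals, the formula as a set of clauses), and the resolvents produced by \rulename{explain} may carry duplicate literals. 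I would close this gap by observing that the freshness argument is genuinely set-theoretic: were some clause of $F_0\,@\,\mathit{Fl}_A$ to have the same literal set as $C_A$, then, since $C_A$ is unit in ${\fn prefixToLevel}\ \mathit{level}\ M_A$ at the minimal backjump level, eager unit propagation would already have asserted the backjump literal at or below that level, contradicting minimality. Hence the freshness survives passage to $\overline{\cdot}$, yielding $\overline{C_A} \notin \overline{F_0\,@\,\mathit{Fl}_A}$ and completing~(c), and thus $F_0\,@\,\mathit{Fl}_A \succFInclrestrict{\mathit{Vars}}\, F_0\,@\,\mathit{Fl}_B$.
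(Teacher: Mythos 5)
Your proposal is correct and follows essentially the same route as the paper: the paper obtains this lemma precisely as a corollary of Lemma~\ref{lemma:backjumpClauseIsNotPresent} (``Therefore, \rulename{backjumpLearn} increases formula in the inclusion ordering''), with the two variable-containment conjuncts of $\succFInclrestrict{\mathit{Vars}}$ discharged from the invariants exactly as you do (\inv{varsF} for $F_0\,@\,\mathit{Fl}_A$, and \inv{Cfalse} together with \inv{varsM} for ${\fn vars}\ C_A \subseteq \mathit{Vars}$). Where you go beyond the paper's prose is the granularity point, and it is well taken: the freshness lemma as literally stated gives only list-level non-membership $C_A \notin F_0\,@\,\mathit{Fl}_A$, whereas strictness of $\overline{F_0\,@\,\mathit{Fl}_A} \succFIncl \overline{F_0\,@\,\mathit{Fl}_B}$ needs that no clause of $F_0\,@\,\mathit{Fl}_A$ normalizes to $\overline{C_A}$ (a clause such as $[1,1,2]$ already present in the formula would defeat the literal reading when $C_A = [1,2]$). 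Your repair --- that the freshness argument is stable under passing to literal sets, because the predicates it rests on (${\fn isUnit}$, ${\fn isBackjumpLevel}$/${\fn isMinimalBackjumpLevel}$, falsification) depend only on the set of literals of the clause --- is the right one, and it is what the mechanized proof must in effect establish. Be aware, though, that your parenthetical justification compresses the real content: the contradiction is not directly ``with minimality'' but comes from a case split on when the offending clause entered the formula. If it was already present when the decision opening level $\mathit{level}+1$ was made, the eager-propagation guard of \rulename{decide} is violated, since that clause is unit in ${\fn prefixToLevel}\ \mathit{level}\ M_A$; if it was learnt later by some \rulename{backjumpLearn}, that step's minimal backjump level would have to be simultaneously equal to $\mathit{level}$ and at least $\mathit{level}+1$. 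That case analysis is the substance of Lemma~\ref{lemma:backjumpClauseIsNotPresent}, and it goes through verbatim at the set level, so your conclusion stands.
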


\begin{thm}[Termination for $\rightarrowf$]
If the set $\mathit{DecVars}$ is finite, for any formula $F_0$, the relation
$\rightarrowf$ is well-founded on the set of states $s$ such that
$s_0\rightarrowf^* s$, where $s_0$ is the initial state for $F_0$.
\end{thm}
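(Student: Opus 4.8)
The plan is to mirror the termination proof for $\rightarrowr$, exhibiting a well-founded ordering $\succ$ on reachable states with $s \rightarrowf s' \implies f(s) \succ f(s')$ for a suitable permutation $f$, and then invoking Proposition \ref{prop:wfInvImage}. The genuinely new obstacle compared with $\rightarrowr$ is the \rulename{restart} rule: it resets the trail to level $0$, so the trail ordering $\succTrrestrict[\mathit{Vars}]$ can no longer dominate the measure as it did before. The key observation is that \rulename{restart} requires $\mathit{lnt} = \top$ and sets $\mathit{lnt} = \bot$, while the only rule that resets $\mathit{lnt}$ to $\top$ is \rulename{backjumpLearn}; hence between any two restarts a \rulename{backjumpLearn} step must occur. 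By Lemmas \ref{lemma:backjumpClauseIsNotPresent} and \ref{lemma:backjumpLearnSucc}, every \rulename{backjumpLearn} adds a genuinely fresh clause and therefore strictly enlarges $F_0\,@\,\mathit{Fl}$ in the inclusion ordering $\succFInclrestrict{\mathit{Vars}}$, which is well-founded by Lemma \ref{lemma:wfFInclrestrict} because $\mathit{Vars} = {\fn vars}\ F_0 \cup \mathit{DecVars}$ is finite. Placing this formula component at the \emph{top} of the lexicographic product bounds the number of learn steps, and thus the number of restarts.

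Concretely, I would take, with $M_s$ the trail of $s$, the parametrized lexicographic product $\succ \equiv \lexprod{\lexprodp{\lexprod{\lexprod{\succFInclrestrict{\mathit{Vars}}}{\succTrrestrict[\mathit{Vars}]}}{\succBool}}{\left(\lambda s.\ \succC{M_s}\right)}}{\succBool}$, applied to the permuted tuple $(F_0\,@\,\mathit{Fl},\ M,\ \mathit{cflct},\ C,\ \neg\mathit{lnt})$. This is exactly the ordering from the $\rightarrowr$ proof with $\succTrrestrict[\mathit{Vars}]$ replaced by $\lexprod{\succFInclrestrict{\mathit{Vars}}}{\succTrrestrict[\mathit{Vars}]}$. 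Well-foundedness of $\succ$ follows from Proposition \ref{prop:wfLeksikografskaKombinacijaUredjenja} together with Lemmas \ref{lemma:wfFInclrestrict}, \ref{lemma:wfTrailSuccRestricted}, \ref{lemma:wfConflictSucc}, and \ref{lemma:wfClauseSucc} (the last $\succBool$ again by Lemma \ref{lemma:wfConflictSucc}). The theorem then reduces to checking that each rule strictly decreases $\succ$.

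The rule-by-rule verification adds one top component that every rule except \rulename{backjumpLearn} leaves unchanged. Since all rules but \rulename{backjumpLearn} fix $\mathit{Fl}$, the formula component is equal for them, and the argument descends to the remaining components exactly as in $\rightarrowr$: \rulename{decide} and \rulename{unitPropagate} decrease $M$ by Lemma \ref{lemma:rightarrowsucc}; \rulename{conflict} fixes $M$ and decreases $\mathit{cflct}$ by Lemma \ref{lemma:conflictSucc}; \rulename{explain} additionally fixes $\mathit{cflct}$ and, keeping $M$ fixed so that $\succC{M}$ is used consistently, decreases $C$ by Lemma \ref{lemma:explainSucc}. For \rulename{backjumpLearn} the top component strictly decreases by Lemma \ref{lemma:backjumpLearnSucc}, making the lower components irrelevant. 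The new case is \rulename{restart}, handled by a split on the current level: if it is positive then $M_2 = {\fn prefixToLevel}\ 0\ M_1$ is a proper prefix of $M_1$, whence $M_1 \succTr M_2$ directly from the first disjunct of the lexicographic extension; otherwise ${\fn prefixToLevel}\ 0\ M_1 = M_1$, so $M$, $\mathit{cflct}$, and $C$ are all unchanged and the decrease lies in the last $\succBool$ component, since \rulename{restart} turns $\mathit{lnt} = \top$ into $\bot$, i.e.\ $\neg\mathit{lnt}_1 = \bot \succBool \top = \neg\mathit{lnt}_2$.

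The main obstacle is not the lexicographic bookkeeping but the clause-freshness fact behind Lemmas \ref{lemma:backjumpClauseIsNotPresent} and \ref{lemma:backjumpLearnSucc}: it is precisely what forces each \rulename{backjumpLearn} to strictly enlarge the formula and thereby couples the termination of an arbitrarily frequent restart scheme to the finiteness of $\mathit{Vars}$. That freshness in turn depends on the stronger guards of this system (eager unit propagation in \rulename{decide} and backjumping to the \emph{minimal} backjump level), without which a learnt clause could already be present and the strict decrease of the top component would fail.
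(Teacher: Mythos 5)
Your proposal is correct and takes essentially the same route as the paper: the same key lemmas (Lemmas \ref{lemma:backjumpClauseIsNotPresent}, \ref{lemma:backjumpLearnSucc}, and \ref{lemma:wfFInclrestrict}) feeding a lexicographic product with $\succFInclrestrict{\mathit{Vars}}$ on top, concluded via Proposition \ref{prop:wfInvImage}. The only difference is the placement of the $\neg\mathit{lnt}$ component --- the paper puts it immediately after the formula component, so \rulename{restart} decreases the measure uniformly, whereas you put it last and therefore need your (valid) case split on ${\fn currentLevel}\ M_1$; both orderings work.
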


\begin{proof}
Let $\succ$ be a (parametrized) lexicographic product (Definition
\ref{def:lexProd}), i.e., let

\noindent
$\succ\ \ \equiv\ \ \lexprodp{\lexprod{\lexprod{\lexprod{\succFInclrestrict{\mathit{Vars}}}{\succBool}}{\succTrrestrict[\mathit{Vars}]}}{\succBool}}{\left(\lambda
  s. \succC{M_s}\right)},$

\noindent  
where $M_s$ is the trail in the state $s$.
By Proposition \ref{prop:wfLeksikografskaKombinacijaUredjenja} and Lemmas
\ref{lemma:wfTrailSuccRestricted}, \ref{lemma:wfConflictSucc},
\ref{lemma:wfClauseSucc}, and \ref{lemma:wfFInclrestrict}, the relation $\succ$
is well-founded. If the state $(M_1,\mathit{Fl}_1, C_1, \mathit{cflct}_1, \mathit{lnt}_1)$
satisfies the invariants and if 
$(M_1,\mathit{Fl}_1, C_1, \mathit{cflct}_1, \mathit{lnt}_1) \rightarrowf 
(M_1,\mathit{Fl}_1, C_1, \mathit{cflct}_1, \mathit{lnt}_1),$
then $(F_1, \neg\mathit{lnt}_1, M_1, \mathit{cflct}_1, C_1) \succ 
(F_2, \neg\mathit{lnt}_2, M_2, \mathit{cflct}_2, C_2).$ 
Indeed, by Lemma
\ref{lemma:backjumpLearnSucc} the rule \rulename{backjumpLearn} decreases
$F$, the rule \rulename{restart} does not change $F$ but decreases $\neg
\mathit{lnt}$, the rules \rulename{unitPropagate} and \rulename{decide} do not
change $F$ and $\mathit{lnt}$ but (by Lemma \ref{lemma:rightarrowsucc}) decrease
$M$, the rule \rulename{conflict} does not change $F$, $\mathit{lnt}$, nor $M$
but (by Lemma \ref{lemma:conflictSucc}) decreases $\mathit{cflct}$, and the rule
\rulename{explain} does not change $F$, $\mathit{lnt}$, $M$ nor
$\mathit{cflct}$, but (by Lemma \ref{lemma:explainSucc}) decreases $C$.

From the above, the theorem holds by Proposition \ref{prop:wfInvImage} (for a 
suitable ${\fn f}$).
\end{proof}

If both \rulename{forget} and \rulename{restart} are allowed, then the system
is not terminating.

\begin{thm}
The relation $\rightarrow$ is not well-founded on the set of states reachable
from the initial state.
\end{thm}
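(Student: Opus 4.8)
The plan is to refute well-foundedness directly, by exhibiting a single state $s$ reachable from the initial state together with an infinite chain $s \rightarrow s_1 \rightarrow s_2 \rightarrow \cdots$; the cleanest way to obtain such a chain is to produce a \emph{cycle} $s \rightarrow^{+} s$ and iterate it. The whole difficulty is that the $\mathit{lnt}$ flag was introduced precisely to forbid two consecutive applications of \rulename{restart} or \rulename{forget}: both rules require $\mathit{lnt} = \top$ and reset it to $\bot$, and the only rule restoring $\mathit{lnt} = \top$ is \rulename{backjumpLearn}. Hence any loop must contain a \rulename{backjumpLearn} step between successive \rulename{forget}/\rulename{restart} steps, and that step both appends a clause to $\mathit{Fl}$ and shrinks the trail. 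Consequently the loop cannot rest on a single learnt clause: once the $1$-UIP clause $c$ of a conflict is learnt, eager \rulename{unitPropagate} forces its unit literal and that very conflict can no longer be reproduced, so $c$ can be relearnt only after being forgotten, which in turn needs $\mathit{lnt} = \top$, i.e. a fresh \rulename{backjumpLearn}. This circularity rules out a naive one-clause loop and shows that at least two independent conflict sources are needed.

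First I would fix an input formula $F_0$ built from two gadgets on disjoint variable sets, each of which produces, after a couple of decisions, a conflict whose analysis (\rulename{conflict} followed by \rulename{explain} steps) yields a $1$-UIP backjump clause; call these clauses $c_A$ and $c_B$. The gadgets are arranged so that learning $c_A$ blocks exactly the $A$-conflict (by propagating its unit literal) while leaving the $B$-conflict reachable, and symmetrically for $c_B$. The cycle then alternates the two gadgets. Starting from a state with $\mathit{Fl} = [c_A]$ (so the $A$-conflict is blocked), $\mathit{cflct} = \bot$ and $\mathit{lnt} = \bot$, I would: apply \rulename{restart} (when $\mathit{lnt} = \top$) to empty the trail to level $0$; rebuild depth with \rulename{decide} and \rulename{unitPropagate} until the $B$-conflict arises; apply \rulename{conflict}, then \rulename{explain} until $C = c_B$, then \rulename{backjumpLearn} to learn $c_B$ (this sets $\mathit{lnt} = \top$ and backjumps); and finally apply \rulename{forget} to remove $c_A$, which is now permissible because $\mathit{lnt} = \top$ and, the trail having been reset and then only shallowly rebuilt, $c_A$ is no longer a reason for any trail literal. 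This half-cycle reaches the symmetric state with $\mathit{Fl} = [c_B]$ and the $A$-conflict re-enabled; the mirror-image half-cycle (learn $c_A$, forget $c_B$) closes the loop.

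The main obstacle is the simultaneous bookkeeping of the $\mathit{lnt}$ flag and the ``not a reason'' side-condition of \rulename{forget}: I must guarantee that each \rulename{forget} of one clause is immediately preceded by a \rulename{backjumpLearn} of the other (to supply $\mathit{lnt} = \top$), and that at that moment the trail holds no literal for which the forgotten clause is a reason. It is exactly here that \rulename{restart} is indispensable, which is consistent with the already-proved termination of the sub-systems $\rightarrowr$ (no \rulename{restart}) and $\rightarrowf$ (no \rulename{forget}): without \rulename{restart} the trail measure $\succTrrestrict[\mathit{Vars}]$ strictly descends at every \rulename{backjumpLearn}, bounding their number, whereas \rulename{restart} refills the trail and permits infinitely many \rulename{backjumpLearn} steps. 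Once the alternating schedule is checked to satisfy every guard at every step, all visited states are reachable from the initial state and the trace is infinite, so $\rightarrow$ admits an infinite chain and is not well-founded; iterating the two half-cycles in fact exhibits a genuine cycle $s \rightarrow^{+} s$.
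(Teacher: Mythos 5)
Your overall strategy is the same as the paper's: take an input formula built from two gadgets on disjoint variable sets (the paper uses $F_0 = [[-1,-2,3],$ $[-1,-2,4],$ $[-1,-3,-4],$ $[-5,-6,7],$ $[-5,-6,8],$ $[-5,-7,-8]]$), each producing a conflict whose $1$-UIP clause ($[-1,-2]$, resp.\ $[-5,-6]$) blocks that gadget once learnt; then exhibit a reachable cycle $s \rightarrow^{+} s$ that alternates the gadgets, learning one clause while discarding the other, and iterate it. Up to this point your proposal and the paper coincide, including the observation that a single conflict source cannot loop.

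The genuine gap is in your rule schedule, and it is fatal as written. Both \rulename{restart} and \rulename{forget} require $\mathit{lnt} = \top$ and reset $\mathit{lnt}$ to $\bot$, and the \emph{only} rule that restores $\mathit{lnt} = \top$ is \rulename{backjumpLearn}. Your half-cycle is: \rulename{restart}, then decide/propagate to a conflict, then \rulename{conflict}/\rulename{explain}/\rulename{backjumpLearn}, then \rulename{forget}. That consumes two $\mathit{lnt}$-tokens per half-cycle (the opening \rulename{restart} and the closing \rulename{forget}) but produces only one (the single \rulename{backjumpLearn}); so after the closing \rulename{forget} leaves $\mathit{lnt} = \bot$, the opening \rulename{restart} of the mirror half-cycle is simply not applicable --- indeed your own description of the starting state ($\mathit{lnt} = \bot$) already contradicts applying \rulename{restart} as the first step. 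The paper's cycle is scheduled differently precisely to balance this: there is no \rulename{restart} before a gadget run; each gadget is run directly on top of whatever trail is left (disjointness of the variable sets makes the required decisions available), and the single \rulename{forget} and single \rulename{restart} of the full cycle are each placed immediately after one of the two \rulename{backjumpLearn} steps. Note finally that the obstacle you flag --- the ``not a reason'' guard of \rulename{forget} --- is even more delicate than your sketch acknowledges: immediately after \rulename{backjumpLearn} the just-learnt clause is itself the reason for the literal it asserted, so it can never be forgotten at that moment; hence making the number of learnt clauses equal the number of forgotten clauses over one period, while respecting both guards, is the real crux (the paper's own printed trace is loose exactly here, forgetting both learnt clauses in one step although one of them is still a reason), and your proposal does not resolve it.
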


\proof
Consider the formula $[[-1, -2, 3],$ $[-1, -2, 4],$ $[-1, -3, -4],$ $[-5, -6,
 7],$ $[-5, -6, 8],$ $[-5, -7, -8]]$.  The following derivation chain (for
simplicity, not all components of the states are shown) proves that the relation
$\rightarrow$ is cyclic.

\medskip
\noindent
\begin{tabular}{l||l|l|l}
 rule                                                     & $M$                                                     & $\mathit{Fl}$             & $\mathit{lnt}$    \\ \hline
                                                          & $[\,]$                                                  & $[\,]$                    & $\bot$            \\
 \rulename{decide}, \rulename{decide}                     & $[\decision{1}, \decision{2}]$                                      & $[\,]$                    & $\bot$            \\
 \rulename{unitPropagate}, \rulename{unitPropagate}       & $[\decision{1}, \decision{2}, \nondecision{3}, \nondecision{4}]$                      & $[\,]$                    & $\bot$            \\
 \rulename{conflict}, \rulename{explain}, \rulename{explain}, \rulename{backjumpLearn}& $[\decision{1}, \nondecision{-2}]$                                   & $[[-1, -2]]$              & $\top$            \\
 \rulename{restart}                                       & $[\,]$                                                  & $[[-1, -2]]$              & $\bot$            \\
 \rulename{decide}, \rulename{decide}                     & $[\decision{5}, \decision{6}]$                                      & $[[-1, -2]]$              & $\bot$            \\
 \rulename{unitPropagate}, \rulename{unitPropagate}       & $[\decision{5}, \decision{6}, \decision{7}, \decision{8}]$                      & $[[-1, -2]]$              & $\bot$            \\
 \rulename{conflict}, \rulename{explain}, \rulename{explain}, \rulename{backjumpLearn}& $[\decision{5}, \decision{-6}]$                                   & $[[-1, -2], [-5, -6]]$    & $\top$            \\
 \rulename{forget}                                        & $[\decision{5}, \nondecision{-6}]$                                     & $[\,]$                    & $\bot$            \\
 \rulename{decide}, \rulename{decide}                     & $[\decision{5}, \nondecision{-6}, \decision{1}, \decision{2}]$                     & $[\,]$                    & $\bot$            \\
 \rulename{unitPropagate}, \rulename{unitPropagate}       & $[\decision{5}, \nondecision{-6}, \decision{1}, \decision{2}, \nondecision{3}, \nondecision{4}]$     & $[\,]$                    & $\bot$            \\
 \rulename{conflict}, \rulename{explain}, \rulename{explain}, \rulename{backjumpLearn}& $[\decision{5}, \nondecision{-6}, \decision{1}, \nondecision{-2}]$                  & $[[-1, -2]]$              & $\top$            \\
 \rulename{restart}                                       & $[\,]$                                                  & $[[-1, -2]]$              & $\bot$            \\
\end{tabular}
\medskip

Therefore, it holds that

\[([\,], [\,], [\,], \bot, \bot) \ \rightarrow^*\ ([\,], [[-1, -2]], [\,], \bot, \bot) \ \rightarrow^+\ ([\,], [[-1, -2]], [\,], \bot, \bot).\eqno{\qEd}\]\medskip

However, if there are additional restrictions on the rule application policy,
the system may be terminating. Since the number of different states for the
input formula $F_0$ is finite (when duplicate clauses and literals are removed),
there is a number $n_f$ (dependent on $F_0$) such that there is no chain of rule
applications without \rulename{forget} longer than $n_f$ ($\rightarrowf$ is
well-founded and therefore acyclic, so, on a finite set, there must exist $n_f$
such that $\rightarrowf^{n_f}$ is empty). Similarly, there is a number $n_r$
(dependent on $F_0$) such that there is no chain of rule applications without
\rulename{restart} longer than $n_r$.  So, termination is ensured for any policy
that guarantees that there is a point where the application of \rulename{forget}
will be forbidden for at least $n_f$ steps or that there is a point where the
application of \rulename{restart} will be forbidden for at least $n_r$ steps.

\subsubsection{Soundness, Completeness and Correctness}

Soundness and completeness proofs from previous sections hold with minor
modifications necessary to adapt them to the new definition of state and
rules. The most demanding part is to update Lemma \ref{lemma:invariantsHold}
and to prove that the new rules maintain the invariants.

\section{Related Work and Discussions}
\label{sec:discuss}

The original \dpll{} procedure \cite{dll62} has been described in many logic
textbooks, along with informal proofs of its correctness (e.g.,
\cite{davis-tcs}).  First steps towards verification of modern \dpll-based
\sat{} solvers have been made only recently. Zhang and Malik have informally
proved correctness of a modern \sat{} solver \cite{zhangmalik-validating}. Their
proof is very informal, the specification of the solver is given in pseudo-code
and it describes only one strategy for applying rules. The authors of two
abstract transition systems for \sat{} also give correctness proofs
\cite{NieOT-Jacm06, Krstic-Frocos07}. These specifications and the proofs are
much more formal than those given in \cite{zhangmalik-validating}, but they are
also not machine-verifiable and are much less rigorous than the proofs presented
in this paper.

In recent years, several machine-verifiable correctness proofs for \sat{}
solvers were constructed. Lescuyer and Conchon formalized, within Coq, a \sat{}
solver based on the classical \dpll{} procedure and its correctness proof
\cite{sat-verification-francuzi}. They used a deep embedding, so this approach
enables execution of the \sat{} solver in Coq and, further, a reflexive tactic.
Mari\' c and Jani\v ci\' c formalized a correctness proof for the classical
\dpll{} procedure by shallow embedding into Isabelle/HOL
\cite{Informatica}. Shankar and Vaucher formally and mechanically verified a
high-level description of a modern \dpll{}-based \sat{} solver within the system
PVS \cite{shankar-vaucher}. However, unlike this paper which formalizes abstract
descriptions for \sat{}, they formalize a very specific \sat{} solver
implementation within PVS. Mari\' c proved partial correctness (termination was
not discussed) of an imperative pseudo-code of a modern \sat{} solver using
Hoare logic approach \cite{JAR} and total correctness of a \sat{} solver
implemented in Isabelle/HOL using shallow embedding \cite{TCS}.  Both these
formalizations use features of the transition systems described in this paper
and provide links between the transition systems and executable implementations
of modern \sat{} solvers. In the former approach, the verified specification can
be rewritten to an executable code in an imperative programming
language\footnote{As done in the implementation of our \sat{} solver ArgoSAT.}
while in the latter approach, an executable code in a functional language can be
exported from the specification by automatic means \cite{code-generation}.

The transition system discussed in Section \ref{sec:DPLLSearch} corresponds to a
non-recursive version of the classical \dpll{} procedure. The transition systems
and correctness proofs presented in the later sections are closely related to
the systems of Nieuwenhuis et al.~\cite{NieOT-Jacm06} and Krsti\' c and Goel
\cite{Krstic-Frocos07}. However, there are some significant differences, both in
the level of precision in the proofs and in the definitions of the rules.

Informal (non machine-verifiable) proofs allow authors some degree of
imprecision.  For example, in \cite{NieOT-Jacm06} and \cite{Krstic-Frocos07}
clauses are defined as ``disjunctions of literals'' and formulae as
``conjunctions of clauses'', and this leaves unclear some issues such as whether
duplicates are allowed. The ordering of clauses and literals is considered to be
irrelevant --- in \cite{Krstic-Frocos07} it is said that ``clauses containing
the same literals in different order are considered equal'', and in
\cite{NieOT-Jacm06} it is not explicitly said, but only implied (e.g., clauses
in the \rulename{unitPropagate} rule are written as $C \vee l$, where
$M\falsifies C$ and $l$ is undefined in $M$, and from this it is clear that the
order of literals must be irrelevant, or otherwise only last literals in clauses
could be propagated). Therefore, clauses and formulae are basically defined as
sets or multisets of literals. In our formal definition, clauses and formulae
are defined as lists. Although a choice whether to use lists, multisets, or sets
in these basic definitions might not seem so important, fully formal proofs show
that this choice makes a very big difference. Namely, using sets saves much
effort in the proof. For example, if formulae may contain repeated clauses, easy
termination arguments like ``there are finitely many different clauses that can
be learnt'' cannot be applied. On the other hand, using sets makes the systems
quite different from real \sat{} solver implementations --- eliminating
duplicates from clauses during solving is possible and cheap, but explicitly
maintaining absence of duplicate clauses from formulae may be intolerably
expensive. It can be proved that maintaining absence of duplicate clauses can
be, under some conditions on the rules, implicitly guaranteed only by
eliminating duplicate clauses from formulae during initialization. Solvers
typically assume this complex fact, but it was not proved before for formulae
represented by lists, while for systems using sets this issue is irrelevant.

The system given in \cite{NieOT-Jacm06} is very close to the system given in
Section \ref{sec:backjumping} and later extended in Section
\ref{sec:learningForgetting}. The requirement that the set of decision literals
exactly coincides with the set of literals from the input formula is too strong
and is not always present in real \sat{} solvers, so it is relaxed in our system
and the set $\mathit{DecVars}$ is introduced (a similar technique is used in
\cite{Krstic-Frocos07}). Also, the definition of the \rulename{backjump} rule
from \cite{NieOT-Jacm06} requires that there is a false clause in the formula
being solved when the rule is applied, but our formal analysis of the proofs
shows that this assumption is not required, so it is omitted from Definition
\ref{def:backjumping_system}. As already mentioned, the condition that the unit
clauses belong to the formula is also relaxed, and propagating can be performed
over arbitrary consequences of the formula. The invariants used in the proofs
and the soundness proof are basically the same in \cite{NieOT-Jacm06} and in
this paper, but the amount of details had to be significantly increased to reach
a machine-verifiable proof. Our completeness proof is somewhat simpler. The
ordering used in termination proof for the system with backjumping in
\cite{NieOT-Jacm06} expresses a similar idea to ours, but is much more
complex. A conflict analysis process is not described within the system from
\cite{NieOT-Jacm06}.

The system given in \cite{Krstic-Frocos07} is close to the system given in
Section \ref{sec:conflictAnalysis}, with some minor differences. Namely, in our
system, instead of a set of decision literals, the set of decision variables is
considered. Also, unit, conflict and reason clauses need not be present in the
formula. The conflict set used in \cite{Krstic-Frocos07} along with its
distinguished value ${\fn no\_cflct}$ is here replaced by the conflict flag and
a conflict clause (the conflict set is the set of opposites of literals
occurring in our conflict clauses). The underlying reasoning used in two total
correctness proofs is the same, although in \cite{Krstic-Frocos07} the
invariants are not explicitly formulated and the proof is monolithic (lemmas are
not present) and rather informal.

Formalization of termination proofs from both \cite{NieOT-Jacm06} and
\cite{Krstic-Frocos07} required the greatest effort in the
formalization. Although arguments like ``between any two applications of the
rule \ldots there must be an occurrence of the rule \ldots'', heavily used in
informal termination proofs, could be formalized, we felt that constructing
explicit termination orderings is much cleaner.

In \cite{Krstic-Frocos07} termination of systems with restarts is not thoroughly
discussed and in \cite{NieOT-Jacm06} it is proved very informally, under a
strong condition that the periodicity of restarts is strictly increasing. This
is often not the case in many modern \sat{} solver implementations. In this
paper, we have (formally) proved that restarting can be performed very
frequently (after each conflict) without compromising total correctness.
However, some additional requirements (unit propagation must be exhaustive,
backjumping must be performed to minimal backjumping levels, and backjump lemmas
must always be learnt) are used in the proof, but these are always present in
modern \sat{} solvers. Although the issue has been addressed in the literature,
we are not aware of a previous proof of termination of frequent restarting.

\section{Conclusions}
\label{sec:conclusions}

We presented a formalization of modern \sat{} solvers and their properties in
the form of \emph{abstract state transition systems}. Several different \sat{}
solvers are formalized --- from the classical \dpll{} procedure to its modern
successors. The systems are defined in a very abstract way so they cover a wide
range of \sat{} solving procedures. The formalization is made within the
Isabelle/HOL system and the total correctness properties (soundness,
termination, completeness) are shown for each presented system.

Central theorems claim (roughly) that a transition system, i.e., a \sat{}
solver, terminates and returns an answer {\em yes} if and only if the input
formula is satisfiable. This whole construction boils down to the simple
definition of satisfiable formula, which can be confirmed by manual inspection.

Our formalization builds up on the previous work on state transition systems for
\sat{} and also on correctness arguments for other \sat{} systems. However, our
formalization is the first that gives machine-verifiable total correctness
proofs for systems that are close to modern \sat{} solvers. Also, compared to
other abstract descriptions, our systems are more general (so can cover a wider
range of possible solvers) and require weaker assumptions that ensure the
correctness properties. Thanks to the framework of formalized mathematics, we
explicitly separated notions of soundness and completeness, and defined all
notions and properties relevant for \sat{} solving, often neglected to some
extent in informal presentations.

Our experience in the \sat{} verification project shows that having imperative
software modelled abstractly, in the form of abstract state transition systems,
makes the verification cleaner and more flexible. It can be used as a key
building block in proving correctness of \sat{} solvers by using other
verification approaches which significantly simplifies the overall verification
effort.

\section*{Acknowledgement}
This work was partially supported by the Serbian Ministry of Science grant
174021 and by the SNF grant SCOPES IZ73Z0\_127979/1. We are grateful to
Natarajan Shankar for sharing with us his unpublished manuscript
\cite{shankar-vaucher}.  We are also grateful to anonymous reviewers for very
careful reading and for detailed and useful comments on an earlier version of
this paper.

\begin{small}
\newcommand{\etalchar}[1]{$^{#1}$}

\end{small}

\end{document}